\newlist{compactitem}{itemize}{3}
\setlist[compactitem]{topsep=0pt,partopsep=0pt,itemsep=0pt,parsep=0pt,leftmargin=1em}
\setlist[compactitem,1]{label=\textbullet}
\setlist[compactitem,2]{label=---}
\setlist[compactitem,3]{label=*}
\newlist{compactdesc}{description}{3}
\setlist[compactdesc]{topsep=0pt,partopsep=0pt,itemsep=0pt,parsep=0pt,leftmargin=1em}
\newlist{compactenum}{enumerate}{3}
\setlist[compactenum]{topsep=0pt,partopsep=0pt,itemsep=0pt,parsep=0pt,leftmargin=1.5em}
\setlist[compactenum,1]{label=\arabic*.}
\setlist[compactenum,2]{label=(\alph*)}
\setlist[compactenum,3]{label=\roman*.}
\crefname{theorem}{Thm.}{Thms.}
\crefname{lemma}{Lem.}{Lems.}
\crefname{corollary}{Cor.}{Cors.}
\crefname{definition}{Def.}{Defs.}
\crefname{example}{Ex.}{Examples}
\crefname{proposition}{Prop.}{Props.}
\crefname{equation}{Eq.}{Eqs.}
\newtheorem{theorem}{Theorem}[section]
\newtheorem{lemma}[theorem]{Lemma}
\newtheorem{corollary}[theorem]{Corollary}
\newtheorem{proposition}[theorem]{Proposition}
\newtheorem{conjecture}[theorem]{Conjecture}
\theoremstyle{definition}
\newtheorem{definition}[theorem]{Definition}
\newtheorem{example}[theorem]{Example}
          \tikzset{
          initial text={}, 
          every edge/.style={draw,->},
          >=stealth
          }
\tikzset{fontscale/.style = {font=\relsize{#1}}}
\newcommand{\marker}{\thinspace\text{|}\thinspace}
\newcommand{\FO}{aperiodic}
\newcommand{\FOinit}{Aperiodic}
\newcommand{\transname}{\textit}
\newcommand{\rotateright}{\transname{rotate-right}}
\newcommand{\mapduplicate}{\transname{map-duplicate}}
\newcommand{\mapreverse}{\transname{map-reverse}}
\newcommand{\majorityrules}{\transname{majority-rules}}
\newcommand{\markedsquare}{\transname{marked-square}}
\newcommand{\copyfirsthalf}{\transname{copy-first-half}}
\newcommand{\increment}{\transname{increment}}
\newcommand{\residuesmodm}{\transname{residues-mod-$m$}}
\newcommand{\modularcountermodm}{\transname{count-mod-$m$}}
\newcommand{\BRASP}{\ensuremath{\text{B-RASP}}}
\newcommand{\BRASPpos}{\ensuremath{\text{B-RASP}[\vecname{pos}]}}
\newcommand{\SRASP}{\ensuremath{\text{S-RASP}}}
\DeclareMathOperator{\leftmost}{{\blacktriangleleft}}
\DeclareMathOperator{\rightmost}{{\blacktriangleright}}
\newcommand{\att}[5]{\ensuremath{{#1}_{#2} \left[#3, #4\right] \; #5}}
\newcommand{\attl}[4]{\att{\leftmost}{#1}{#2}{#3}{#4}}
\newcommand{\attr}[4]{\att{\rightmost}{#1}{#2}{#3}{#4}}
\newcommand{\attldefault}[5]{\att{\leftmost}{#1}{#2}{#3}{#4}:{#5}}
\newcommand{\attrdefault}[5]{\att{\rightmost}{#1}{#2}{#3}{#4}:{#5}}
\newcommand{\attshort}[4]{\ensuremath{{#1}_{#2} \left[#3\right] \; #4}}
\DeclareMathOperator{\sumsymb}{{\bf psum}}
\newcommand{\attsum}[3]{\attshort{\sumsymb}{#1}{#2}{#3}}
\newcommand{\rechar}[1]{\text{`{#1}'}}
\newcommand{\charepsilon}{\rechar{$\varepsilon$}}
\newcommand{\pad}[0]{\text{\#}}
\newcommand{\iftext}{\mathrel{\text{if}}}
\newcommand{\elsetext}{\mathrel{\text{else}}}
\newcommand{\R}{\mathbb{R}}
\newcommand{\TC}{\mathsf{TC}}
\newcommand{\AC}{\mathsf{AC}}
\newcommand{\true}{\top}
\newcommand{\false}{\bot}
\newcommand{\vecname}[1]{\ensuremath{\mathtt{#1}}}
\newcommand{\pos}{\vecname{pos}}
\newcommand{\inputx}{\vecname{in}}
\newcommand{\outputy}{\vecname{out}}
\newcommand{\progname}[1]{\mathcal{#1}}
\newif\ifcomments
\newcommand{\LS}[1]{({\textcolor{magenta}{\textbf{LS: #1}}})}
\newcommand{\AS}[1]{({\textcolor{blue}{\textbf{AS: #1}}})}
\newcommand{\DA}[1]{({\textcolor{teal}{\textbf{DA: #1}}})}
\newcommand{\JR}[1]{({\textcolor{brown}{\textbf{JR: #1}}})}
\newcommand{\DC}[1]{({\textcolor{cyan}{\textbf{DC: #1}}})}
\newcommand{\REditor}[1]{({\textcolor{red}{AC: #1}})}
\newcommand{\RA}[1]{({\textcolor{red}{ReviewerA: #1}})}
\newcommand{\RB}[1]{({\textcolor{red}{ReviewerB: #1}})}
\newcommand{\RC}[1]{({\textcolor{red}{ReviewerC: #1}})}
\newcommand{\Instruction}[1]{{\textcolor{red}{#1}}}
\newcommand{\LS}[1]{}
\newcommand{\AS}[1]{}
\newcommand{\DA}[1]{}
\newcommand{\JR}[1]{}
\newcommand{\DC}[1]{}
\newcommand{\Instruction}[1]{}
\newcommand{\REditor}[1]{}
\newcommand{\RA}[1]{}
\newcommand{\RB}[1]{}
\newcommand{\RC}[1]{}
\newcommand{\eos}{{\mathord\dashv}}
\newcommand{\rev}{\text{R}}
\newcommand{\word}{string}
\newcommand{\words}{strings}
\newcommand{\exampletablefontsize}{\footnotesize}
\newenvironment{raspcode}{\par\vspace*{-2ex}\begin{small}}{\end{small}\ignorespacesafterend}
\newcommand{\bnfto}{\mathrel{::=}}
\newcommand{\bnfmid}{\mathrel{\makebox[\widthof{$\bnfto$}][r]{$\mid$}}}
\newcommand{\bool}{\textsf{bool}}
\newcommand{\chr}{\textsf{char}}
\newcommand{\nat}{\textsf{nat}}
\newcommand{\fv}{\textnormal{FV}}
\newcommand{\Compressed}{Packed}
\newcommand{\compressed}{packed}
\newcommand{\conc}{\cdot}
\newcommand{\direction}{choice function}
\title{Transformers as Transducers}
\author{Lena Strobl\\
  Umeå University\\
  \href{mailto:lena.strobl@umu.se}{\tt lena.strobl@umu.se} \\\And
  Dana Angluin\\
  Yale University\\
  \href{mailto:dana.angluin@yale.edu}{\tt dana.angluin@yale.edu}\\\And
  David Chiang\\
  University of Notre Dame\\
  \href{mailto:dchiang@nd.edu}{\tt dchiang@nd.edu} \\\AND
  Jonathan Rawski\\
  San Jos{\'e} State University\\  
  \href{mailto:jon.rawski@sjsu.edu}{\tt jon.rawski@sjsu.edu}\\\And
  Ashish Sabharwal\\
  Allen Institute for AI\\
  \href{mailto:ashishs@allenai.org}{\tt ashishs@allenai.org} \\}
\begin{document}

\maketitle

\begin{abstract}
    We study the sequence-to-sequence mapping capacity of transformers by relating them to finite transducers, and find that they can express surprisingly large classes of (total functional) transductions.
    We do so using variants of RASP, a programming language designed to help people ``think like transformers,'' as an intermediate representation.
    We extend the existing Boolean variant $\BRASP$ to sequence-to-sequence transductions and show that it computes exactly the first-order rational transductions (such as string rotation).
    Then, we introduce two new extensions. $\BRASPpos$ enables calculations on positions (such as copying the first half of a string) and contains all first-order regular transductions. 
    $\SRASP$ adds prefix sum, which enables additional arithmetic operations (such as squaring a string) and contains all first-order polyregular transductions.
    Finally, we show that masked average-hard attention transformers can simulate $\SRASP$.

\end{abstract}

\section{Introduction}

Transformers \citep{vaswani-etal-2017-attention} have become a standard tool in natural language processing and vision tasks.
They are primarily studied in terms of their expressivity (which functions they can or cannot compute) or learnability (which functions they can or cannot learn from examples). Much recent expressivity work views transformers as recognizers of formal languages, by comparing them to automata, circuits, or logic \citep{strobl2023transformers}. Here we take the more general view that they compute (total functional) transductions, or functions from \words~to \words.

Transductions are a fundamental object in computer science, with a long history in linguistics and natural language processing \cite{mohri1997finite,roark2007computational}.
Many empirical tests of transformer reasoning ability use transductions to define algorithmic sequence generation tasks %
\citep[e.g.,][]{suzgun2022challenging,deletang-etal-2023-chomsky}
such as %
tracking shuffled objects, sorting strings, concatenating all $k$-th letters, or removing duplicates from a list.

This paper is the first theoretical analysis, to our knowledge, of transformers as transducers of formal languages (\cref{fig:overview}). 
Previous work on transformers as recognizers showed that unique-hard attention transformers correspond to star-free regular languages \citep{yang2024masked};
here, we prove the analogous result for transformers as transducers, that unique-hard attention transformers correspond to \emph{aperiodic rational transductions}.
We then study two superclasses of aperiodic rational transductions that are (also) analogous to star-free regular languages: 
\emph{\FO{} regular transductions} (e.g., $w \mapsto w^\rev$ or $w \mapsto ww$) and \emph{\FO{} polyregular transductions} (e.g., $w \mapsto w^{|w|}$). We prove unique-hard attention transformers cannot compute all of these, but average-hard attention transformers can.

To do this, we introduce two new variants of RASP \citep{weiss-etal-2021-rasp}, a programming language designed to make it easier to write down the kinds of computations that transformers can perform.
This makes our analysis more simple, concise, and interpretable compared to describing transformers directly using linear algebra.
These variants, called $\BRASPpos$ and $\SRASP$, compute more than just the \FO{} regular and \FO{} polyregular transductions, and are interesting in their own right.

\begin{figure}[t!]
\centering
\resizebox{\linewidth}{!}{%
\begin{tikzpicture}[x=3.5cm,y=3cm]
\newcommand{\lang}[1]{\textcolor{gray}{\footnotesize #1}}
\crefname{theorem}{Thm.}{Thms.}
\crefname{proposition}{Prop.}{Props.}
\tikzset{every node/.style={align=center}}
\tikzset{every edge/.append style={thick,>={Stealth[length=2mm,width=1.5mm]},every node/.style={sloped,font={\footnotesize}}}}
\tikzset{previous/.style={dashed}}

\node(arational) at (1,0) {\FO{}\\rational \\ \lang{$\increment$} \\[-4pt] \lang{$\rotateright$}};
\node(aregular) at (1,1) {\FO{}\\regular \\ \lang{$\mapreverse$} \\[-4pt] \lang{$\mapduplicate$}};
\node(apolyreg) at (1,2) {\FO{}\\polyregular \\ \lang{$\markedsquare$}};

\begin{scope}[xshift=-10pt]
\node(brasp) at (2.1,0) {$\BRASP$};
\node(brasppos) at (2.1,1) {$\BRASPpos$ \\ \lang{$\copyfirsthalf$}};
\node(SRASP) at (2.1,2) {$\SRASP$ \\ \lang{$\majorityrules$}};
\end{scope}
 
\node(uhat) at (3,0) {unique \\ hard attention \\ transformers \\ (no pos)};
\node(ahat) at (3,2) {average \\ hard attention \\ transformers};

\draw (arational) edge[<->] node[above] {\cref{theorem:BRASP-is-FOTop}} node[below] {\cref{thm:arational_to_brasp}} (brasp);
\draw (brasp) edge[<->,previous] (uhat);
\draw (aregular) edge node[below] {\cref{thm:aregular_to_brasppos}} (brasppos);
\draw[transform canvas={yshift=3pt}] (apolyreg) edge node[above] {\cref{theorem:a-rasp-includes-fo-polyregular}} (SRASP);
\draw (brasppos) edge node {$\not$} node[below,sloped] {\cref{thm:copyfirsthalf}} (apolyreg);
\draw[transform canvas={yshift=-3pt}] (SRASP) edge node {$\not$} node[below] {\cref{thm:majority-rules-not-in-polyregular}} (apolyreg); %
\draw (SRASP) edge node[above] {\cref{thm:srasp_to_ahat}} (ahat);

\draw (arational) edge[previous] (aregular);
\draw (aregular) edge[previous] (apolyreg);

\draw (brasp) edge (brasppos);
\draw (brasppos) edge (SRASP);

\draw (uhat) edge node[above] {\cref{thm:uhat_to_ahat}} (ahat);
\end{tikzpicture}%
}
\vspace{-3ex}
\caption{Overview of results of this paper. Arrows denote inclusion; dashed arrows denote inclusions that are known from previous work. Slashed arrows denote non-inclusions. The columns, from left to right, are: %
 (1) the hierarchy of \FO{} transductions; (2) RASP variants; (3) variants of transformer encoders.}
\label{fig:overview}
\end{figure}

\section{Preliminaries}
\label{sec:transductionandtransducers}

We write $[n]$ for the set $\{0, \ldots, n-1\}$.
Fix finite input and output alphabets $\Sigma$ and $\Gamma$. We sometimes use special symbols~$\pad$ and $\eos$, which we assume do not belong to $\Sigma$ or $\Gamma$.
Let $\Sigma^*$ and $\Gamma^*$ be the sets of strings over $\Sigma$ and $\Gamma$, respectively.
The empty string is denoted $\varepsilon$.
For any string $w$, we number its positions starting from $0$, so $w = a_0 \cdots a_{n-1}$.
We write $uv$ or $u \conc v$ for the concatenation of strings $u$ and $v$, and $w^\rev$ for the reverse of string $w$.

\subsection{Transductions and Transducers}

A transduction is a binary relation between strings in $\Sigma^*$ and strings in $\Gamma^*$. Here
we consider only total functional transductions, that is, functions $\Sigma^* \rightarrow \Gamma^*$, and all of our transducers define total functional transductions.

\begin{definition}[string homomorphism]
\label{def:string-hom}
    A \emph{string homomorphism} is a function $f \colon \Sigma^* \to \Gamma^*$ such that, for any strings $u, v \in \Sigma^*$, we have $f(uv) = f(u)\,f(v)$. %
\end{definition}

\begin{definition}[deterministic finite transducer]
A \emph{deterministic finite transducer} (DFT) is a tuple $T=(\Sigma,\Gamma,Q,q_0,\delta)$ where 
\begin{compactitem}
    \item $\Sigma$ and $\Gamma$ are the input and output alphabets,
    \item $Q$ is the finite set of states,
    \item $q_0 \in Q$ is the initial state,
    \item $\delta \colon Q \times (\Sigma \cup \eos) \to \Gamma^* \times Q$ is the transition function.
\end{compactitem}
The transition function $\delta$ extends to strings as follows: 
$\delta(q,\varepsilon) = (\varepsilon,q)$ and for $u \in \Sigma^*$ and $a \in \Sigma$, $\delta(q,ua) = (u'v',s)$ where $\delta(q,u) = (u',r)$ and $\delta(r,a) = (v',s)$ for some $r \in Q$.
Then for any $w \in \Sigma^*$, we say that $T$ transduces $w$ to $w'$ iff $\delta(q_0, w \eos) = (w', r)$ for some $r \in Q$.
We call a transduction \emph{sequential} if it is definable by a DFT.
\end{definition}

Next, we introduce several nested classes of transductions: rational, regular, and polyregular.
We first give examples of transductions in these classes and informal descriptions of the classes in terms of various transducers. 

In brief, a transduction is \emph{rational} if it is definable by a 
\emph{nondeterministic} transducer (the kind probably most familiar to NLP researchers, except we are assuming it is total functional). 
A transduction is \emph{regular} if it is definable by a 
\emph{two-way} transducer, which can be thought of as a rational transducer that can go back and forth on the input string, or a Turing machine with a read-only input tape and a write-only, one-way output tape. 

\begin{example}
\label{ex:mapreverse_mapduplicate} The following transductions are regular but not rational:
\begin{itemize}
\item \mapreverse: Reverse each substring between markers.
\begin{center}
$\text{\marker ab\marker cde\marker fg\marker } \mapsto \text{\marker ba\marker edc\marker gf\marker }$
\end{center}
\item \mapduplicate: Duplicate each substring between markers.
\begin{center}
$\text{\marker ab\marker cde\marker } \mapsto \text{\marker abab\marker cdecde\marker }$
\end{center}
\end{itemize}
\end{example}

A transduction is \emph{polyregular} if it is definable by a 
\emph{pebble} transducer, which is a 
two-way transducer augmented with a stack of up to $k$ \emph{pebbles} \citep{bojanczyk2022}.
It can push the current head position onto the stack and jump to the beginning of the string, and it can pop the top pebble from the stack and jump to that pebble's position. It can read the symbol at every pebble, and it can compare the positions of the pebbles. 
\begin{example} \label{ex:marked-squaring}
    The transduction $\markedsquare$ is polyregular but not regular. It makes $|w|$ many copies of $w$ separated by bars with successively longer prefixes marked, here by uppercasing:
    \begin{center}
        $\text{abaa} \mapsto \text{\marker  Abaa\marker  ABaa\marker  ABAa\marker  ABAA\marker }$
    \end{center}
\end{example}

Next we restrict to the \emph{aperiodic} subsets of these classes, and give formal definitions of these subclasses as composition closures of sets of \emph{prime transductions}. We will use these definitions for the rest of the paper.

\begin{definition}[aperiodicity]
Let $T$ be a deterministic finite automaton or transducer.
For any input \word{} $w$, there is a binary relation on states, $p \xrightarrow{w}_{T} q$, which holds iff $\delta(p,w)$ arrives at state $q$; if $T$ is a DFT, this means that $\delta(p, w) = (w', q)$ for some $w'$. 
Then $T$ is \emph{aperiodic} (or \emph{counter-free}) if  there is an $N \ge 0$ (depending on $T$) such that for all \words{} $w \in \Sigma^*$ and all $n \ge N$, the relations $\xrightarrow{w^n}_{T}$ and $\xrightarrow{w^{n+1}}_{T}$ are the same.
\end{definition}

Aperiodic deterministic finite automata (DFAs) are equivalent to star-free regular expressions and first-order logic with order \citep{schutzenberger1965,mcnaughton1971counter}.
They are also equivalent to masked hard-attention transformers
\citep{yang2024masked}. We take this equivalence as our starting point.

\begin{example}
The regular language $(\text{ab})^*$ is definable by an aperiodic DFA (with $N=2$):
\begin{center}
        \begin{tikzpicture}[x=2.5cm,y=2.5cm]
          \tikzset{state/.append style={minimum size=7mm,inner sep=0mm, outer sep=0mm}}
          \tikzset{every node/.append style={inner sep=0mm,outer sep=1mm}}
          \tikzset{every edge/.append style={->,>=stealth}}
          \node[initial,accepting,state](q1) {$q_1$};
          \node[state](q2) at (1,0) {$q_2$};
          \node[state](q3) at (0.5,-0.7) {$q_3$};
          \draw (q1) edge[bend left=15] node[above] {$\text{a}$} (q2) edge node[auto=right] {$\text{b}$} (q3);
          \draw (q2) edge[bend left=15] node[below] {$\text{b}$} (q1) edge node[auto=left] {$\text{a}$} (q3);
          \draw (q3) edge[loop below] node[below] {$\text{a}, \text{b}$} (q3);
        \end{tikzpicture}
\end{center}
But $(\text{aa})^*$ is not defined by any aperiodic DFA: the relations $\xrightarrow{\text{a}^n}_{T}$ and $\xrightarrow{\text{a}^{n+1}}_{T}$ are always different.
\end{example}

Each of the classes of transductions described above has an aperiodic subclass.
\begin{definition}
\label{def:aperiodic-transduction-classes}
\textit{\FOinit~sequential transductions} (which include string homomorphisms) are those defined by aperiodic DFTs.

\emph{\FOinit~rational transductions} are the composition closure of \FO~sequential transductions and \emph{right-to-left \FO{} sequential transductions}, that is, transductions that can be expressed as $w \mapsto f(w^\rev)^\rev$, where $f$ is \FO~sequential.%
\footnote{\Citet[fn.~xii]{nguyen-etal-2023} characterize \FO{} rational transductions using just one \FO{} sequential transduction and one right-to-left \FO{} sequential transduction, and \citet[Prop.~3]{filiot2016first} use a closely related characterization in terms of \emph{bimachines}. Here we use a composition of any number of transductions, which is equivalent because \FO{} rational transductions are closed under composition \citep[Thm.~10]{carton2021aperiodic}. }

\emph{\FOinit~regular transductions} are the composition closure of 
\FO~sequential transductions and the transductions
$\mapreverse$ and $\mapduplicate$ (\cref{ex:mapreverse_mapduplicate}).%
\footnote{This characterization is given by \citet[p.~15]{nguyenaperiodic}. It is also given by  \citet[Thm.~18]{BojanczykS20} for the more general setting of infinite alphabets constructed from \emph{atoms}; our definition here corresponds to the special case of finite alphabets (that is, where the set of atoms is empty).}

\emph{\FOinit~polyregular transductions} \citep[Def.~1.3]{bojanczyk2018} are the composition closure of 
\FO~regular transductions and the transduction $\markedsquare$
(\cref{ex:marked-squaring}).
\end{definition}

\subsection{Transformers}

We assume familiarity with transformers \citep{vaswani-etal-2017-attention} and describe a few concepts briefly. For more detailed definitions, please see the survey by \citet{strobl2023transformers}.

In standard attention, attention weights are computed from attention scores using the softmax function.
In \emph{average-hard attention} \citep{perez-etal-2021-turing,merrill-etal-2021-saturated-transformers}, each position $i$ attends to those positions $j$ that maximize the score $s_{i,j}$.
If there is more than one such position, attention is divided equally among them.
In \emph{unique-hard attention} \citep{hahn-2020-theoretical}, exactly one maximal element receives attention.
In leftmost hard attention, the leftmost maximum element is chosen, while in rightmost hard attention, the rightmost maximum element is chosen.

In this work, we use RASP~\citep{weiss-etal-2021-rasp} as a proxy for transformers.
Specifically, we use extensions of $\BRASP$, a version of RASP restricted to Boolean values \citep{yang2024masked}.
$\BRASP$ is equivalent to masked hard-attention transformer encoders with leftmost and rightmost hard attention, and with strict future and past masking.

\section{$\BRASP$ and unique hard attention transformers}

In this section, in order to facilitate our study of transformers and how they relate to classes of transductions,
we modify the definition of $\BRASP$ to compute transductions and use it to show that unique-hard attention transformers are equivalent to aperiodic rational transductions.

In \cref{sec:brasppos,sec:SRASP}, we consider two extensions:
$\BRASPpos$ adds position information, and $\SRASP$ also includes an operator for prefix sum. These have various correspondences both with more realistic transformers and with larger classes of transductions.

\subsection{Definition and examples}

We give an example first, followed by a more systematic definition.

\begin{example} \label{ex:increment}
The following $\BRASP$ program computes the transduction $\increment$, which takes as input a binary number (with its high-order bit on the left) and increments it, ignoring overflow.
\begin{raspcode}
    \begin{align*}
        \vecname{not}(i) &= \rechar{1} \iftext \inputx(i) = \rechar{0} \elsetext \rechar{0} \\
        \vecname{carry}(i) &= \attrdefault{j}{j>i}{\inputx(j) = \rechar{0}}{\false}{\true} \\
        \outputy(i) &= \vecname{not}(i) \iftext \vecname{carry}(i) \elsetext \inputx(i)
    \end{align*}
\end{raspcode}
\Cref{fig:steps-increment} shows a sample run.
The input string is stored in $\inputx(0), \ldots, \inputx(n-1)$. 
The vector $\vecname{not}$ is the bitwise negation of $\inputx$. The $\iftext$ expression is in Python-style syntax: if $\inputx(i) = \rechar{0}$, then $\vecname{not}(i) = \rechar{1}$; otherwise, $\vecname{not}(i) = \rechar{0}$. The vector $\vecname{carry}$ tests at each position $i$ whether there is a carry at that position, that is, whether at every position $j>i$ the symbol is a $\rechar{1}$. It can be read as: ``Find the rightmost ($\rightmost$) position $j$ such that $j>i$ and $\inputx(j) = \rechar{0}$. If there is such a position, return false ($\false$); if there is no such position, return true ($\true$).'' Finally, the vector~$\outputy$ is the output of the program.
\begin{table}[ht]
\begin{center}
\exampletablefontsize
\setlength{\tabcolsep}{5pt}
\begin{tabular}{ l|cccccccccc| } 
 \toprule
 \inputx         & 0 & 1 & 0 & 1 & 1 \\
 \vecname{not}   & 1 & 0 & 1 & 0 & 0 \\
 \vecname{carry} & 0 & 0 & 1 & 1 & 1 \\
 \outputy        & 0 & 1 & 1 & 0 & 0 \\
 \bottomrule
\end{tabular}
\end{center}
\caption{$\BRASP$ computation for $\increment$.}
\label{fig:steps-increment}
\end{table}
\end{example}

We give a definition of $\BRASP$ that is equivalent to that of \citet{yang2024masked}, extended to transductions.
For now, we consider $\BRASP$ programs for length-preserving transductions, and will later consider two schemes for defining non-length-preserving transductions as needed.

There are two types of values: Booleans from $\{\true,\false\}$, and symbols from a finite alphabet $\Delta$. These are stored in \emph{vectors}, which all share the same length~$n$, mirroring the transformer encoders they are intended to model.

A $\BRASP$ program receives an input string $w = a_0 \cdots a_{n-1}$ represented as a symbol vector $\inputx$, where $\inputx(i) = a_i$ for $i \in [n]$.

A $\BRASP$ program is a sequence of definitions of the form $P(i) = \rho$, where $P$ is a vector name, $i$ is a variable name, and $\rho$ is a right-hand side, to be defined below. The type of $P$ is the type of $\rho$.
No two definitions can have the same left-hand side.

The syntax of $\BRASP$ expressions, with Boolean ($\bool$) and symbolic ($\chr$) type, is:
\begin{align*}
e^\bool &\bnfto \true \mid \false \mid P^\bool(i) \mid e^\chr = e^\chr \\
& \bnfmid e^\bool \land e^\bool \mid e^\bool \lor e^\bool \mid \neg e^\bool \\
e^\chr &\bnfto \rechar{a} \mid \rechar{b} \mid \cdots \mid P^\chr(i) \\
&\bnfmid e^\chr \iftext e^\bool \elsetext e^\chr
\end{align*}
where $P$ is a vector name and $i$ is a variable name.
We write $\fv(e)$ for the variables occurring in $e$.
As mentioned above, conditional expressions use Python syntax: $e_1 \iftext e_2 \elsetext e_3$ means ``if $e_2$ evaluates to $\true$, then return $e_1$; otherwise, return $e_3$.'' The syntax of expressions could be extended to include arbitrary operations on Booleans or symbols.

Each definition has one of the following forms:
\begin{enumerate}
    \item \emph{Position-wise operations} 
    $P(i) = e$, where $e$ is an expression such that $\fv(e) \subseteq \{i\}$.
    
   \item \emph{Attention operations,} which have one of the two forms
   \begin{gather*}
   P(i) = \attldefault{j}{M(i,j)}{S(i,j)}{V(j)}{D(i)} \\  
   P(i) = \attrdefault{j}{M(i,j)}{S(i,j)}{V(j)}{D(i)}
   \end{gather*}
   where:  \begin{compactitem}
        \item The \emph{\direction} is either leftmost ($\leftmost$) or rightmost ($\rightmost$).
        \item $M(i,j)$ is a \emph{mask predicate}, one of
        \begin{compactenum}
            \item \emph{no masking}: $M(i,j) = \true$
            \item \emph{future masking}: $M(i,j) = (j < i)$ or $M(i,j) = (j \le i)$
            \item \emph{past masking}: $M(i,j) = (j > i)$ or $M(i,j) = (j \ge i)$.
        \end{compactenum}
        \item $S(i,j)$ is an \emph{attention predicate}, given by a Boolean expression with $\fv(S(i,j)) \subseteq \{i, j\}$. %
        \item $V(j)$ is a \emph{value function}, given by a Boolean or symbol expression with $\fv(V(j)) \subseteq \{j\}$. %
        \item $D(i)$ is a \emph{default function}, given by a Boolean or symbol expression with $\fv(D(i)) \subseteq \{i\}$. %
    \end{compactitem}
    The attention operation defines a new vector $P$, as follows. For $i \in [n]$ and \direction{} $\leftmost$, $j_i$ is the minimum $j \in [n]$ such that $M(i,j) = \true$ and $S(i,j) = \true$, if any, and $P(i)$ is set to the value $V(j_i)$.
    If there is no such $j$, then $P(i)$ is set to the value $D(i)$. If the \direction{} is $\rightmost$ then $j_i$ is the maximum $j \in [n]$ such that $M(i,j) = \true$ and $S(i,j) = \true$, if any, and $P(i)$ is set to the value $V(j_i)$. If there is no such $j$, then $P(i)$ is set to the value $D(i)$.
\end{enumerate}

The output of a $\BRASP$ program is given in a designated symbol vector $\outputy$, which has the same form as the input vector $\inputx$.

\begin{example}
\label{ex:rotateright}
The rational transduction $\rotateright$ rotates the input string to the right by one symbol, moving the last symbol to the first position. For example, 
\[\text{abc} \mapsto \text{cab}\]

The following $\BRASP$ program computes $\rotateright$:
\begin{raspcode}
    \begin{align*}
        \vecname{right}(i) &= \attldefault{j}{j>i}{\true}{\inputx(j)}{\rechar{\pad}}\\
        \vecname{last}(i) &= \attldefault{j}{\true}{\vecname{right}(j) = \rechar{\pad}}{ \inputx(j)}{\rechar{\pad}}\\
        \vecname{left}(i) &= \attrdefault{j}{j<i}{\true}{ \inputx(j)}{\rechar{\pad}}\\
        \outputy(i) &=  \vecname{left}(i) \iftext \vecname{left}(i) \neq \rechar{\pad} \elsetext \vecname{last}(i)
    \end{align*}
\end{raspcode}
An example run is in \cref{fig:steps-_for_rotate_right_one}.
The vector $\vecname{right}$, at each position $i$, records the symbol immediately to the right of $i$ (or $\rechar{\pad}$ if there is no symbol to the right).
We distinguish the position $j$ of the rightmost symbol in the input \word{} by testing whether $\vecname{right}(j) = \rechar{\pad}$, and propagate its input symbol to all positions in the vector $\vecname{last}$.
The vector $\vecname{left}$ records the symbol immediately to the left of each position (or $\rechar{\pad}$ if there is no symbol to the left).
To compute the output vector $\outputy$, the first position takes on the value of the rightmost symbol of the input \word{} and each other position takes on the value of its left neighbor, via a position-wise operation.
\end{example}

\begin{table}[ht]
\begin{center}
\exampletablefontsize
\setlength{\tabcolsep}{5pt}
\begin{tabular}{ l|cccccccccc| } 
 \toprule
 \inputx        & a & b & c & b & b & a & c \\
 \vecname{right}      & b  &  c  &  b  &  b  &  a  &  c  &  \pad \\
 \vecname{last}        & c  &  c  &  c  &  c  &  c  &  c  &  c  \\
 \vecname{left}       & \pad  &  a  &  b  &  c  &  b  &  b  &  a   \\
 \outputy        & c & a & b & c & b & b & a \\
 \bottomrule
\end{tabular}
\end{center}
\caption{$\BRASP$ computation for $\rotateright$.}
\label{fig:steps-_for_rotate_right_one}
\end{table}

\subsection{\Compressed{} outputs} 

So far, we have defined $\BRASP$ to encompass only length-preserving transductions.
But even some simple classes of transductions, like string homomorphisms, are not length-preserving.

To address this, we allow the program to output a vector containing strings up to some length $k$ instead of a vector of symbols.
For any finite alphabet $A$, let $A^{\le k}$ denote the set of all strings over $A$ of length at most $k$ (including the empty string $\varepsilon$).

The input vector is still a vector of input symbols: $a_0 a_1 \cdots a_{n-1}$, where $a_i \in \Sigma$ for $i \in [n]$.
However, the output vector is a vector of symbols over the alphabet $\Gamma^{\le k}$ for some $k$.
The output vector is a
\emph{$k$-\compressed{} representation} of a \word{} $u$ if the concatenation of the strings at positions $0, \ldots, n-1$ is $u$.
There may be many different $k$-\compressed{} representations of the same string.
For an input string of length $n$, the output string has length at most $kn$.
\Compressed{} outputs make it possible to compute any string homomorphism, as in the following example.
\begin{example}
\label{ex:homomorphism}
Apply the homomorphism $\text{a} \mapsto \text{aa}$, $\text{b} \mapsto \text{ccb}$ to an input \word{} over the alphabet $\{\text{a},\text{b}\}$.
\begin{raspcode}
    \begin{align*}
        \outputy(i) &= \rechar{aa} \iftext \inputx(i) = \rechar{a} \\
                    &\phantom{={}} \elsetext \rechar{ccb} %
    \end{align*}
\end{raspcode}
\end{example}

\subsection{$\BRASP$ defines exactly the \FO{} rational transductions}

\Cref{ex:increment,ex:rotateright} show that $\BRASP$ can compute some \FO{} rational transductions that are not sequential.
The following theorem shows that $\BRASP$ can compute only \FO{} rational transductions.

\begin{theorem}
    \label{theorem:BRASP-is-FOTop}
    Any $\BRASP$ program with \compressed{} outputs defines an \FO{} rational transduction.
\end{theorem}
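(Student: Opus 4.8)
The plan is to show that any \BRASP{} program with \compressed{} outputs computes an \emph{order-preserving first-order transduction}~($\FOTop$), and then to invoke the characterization of the \FO{} rational transductions as exactly the order-preserving FO transductions \citep{filiot2016first,carton2021aperiodic}. Concretely, I would (i) prove that every vector computed by the program is first-order definable over the input \word, (ii) package the \compressed{} output vector as a bounded-copy FO transduction whose origin map is monotone, and (iii) observe that monotonicity of this origin map is precisely the order-preservation condition that restricts the FO transductions to the \FO{} rational ones. Because the entire argument stays inside FO, the resulting transduction is automatically \emph{aperiodic}, which is what distinguishes \FO{} rational from plain rational.

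For step (i), fix the input signature consisting of the linear order $<$ and a unary predicate $Q_a(x)$ for each $a \in \Sigma$, where $Q_a(x)$ holds iff $\inputx(x) = a$. The key claim is that for every vector $P$ defined by the program and every value $v$ in the (finite) range of $P$, there is an FO formula $\varphi_{P,v}(x)$ with $\varphi_{P,v}(i)$ true iff $P(i) = v$. I prove this by induction on the order of definitions. The base vector is handled by $\varphi_{\inputx,a}(x) = Q_a(x)$. A position-wise definition $P(i) = e$ with $\fv(e) \subseteq \{i\}$ is handled because FO is closed under the Boolean connectives, under equality tests $e_1^\chr = e_2^\chr$ (a finite disjunction over matching values, using the induction hypothesis), and under if-then-else; substituting the induction-hypothesis formulas for the sub-vectors in $e$ yields $\varphi_{P,v}$. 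The crux is the attention operation $P(i) = \attldefault{j}{M(i,j)}{S(i,j)}{V(j)}{D(i)}$: each mask $M$ is one of $\true$, $j<i$, $j\le i$, $j>i$, $j\ge i$, all FO-definable in $<$, while $S(i,j)$, $V(j)$, and $D(i)$ are FO-definable by the induction hypothesis. The leftmost witness is named in FO by
\[
\mathrm{sel}(x,y) \equiv M(x,y)\land S(x,y)\land \forall y'\bigl(y'<y \rightarrow \neg(M(x,y')\land S(x,y'))\bigr),
\]
and then $P(i)=v$ holds iff $\exists y\,(\mathrm{sel}(x,y)\land \text{``}V(y)=v\text{''})$, or else $\forall y\,\neg(M(x,y)\land S(x,y))$ together with \text{``}$D(x)=v$\text{''}; the quoted predicates are the FO formulas supplied by induction. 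The rightmost case is symmetric, with the minimality clause in $\mathrm{sel}$ replaced by a maximality clause.

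For steps (ii) and (iii), recall that the output is a vector over $\Gamma^{\le k}$ whose concatenation is the output \word. I build a $k$-copy FO transduction whose domain consists of pairs $(i,c)$ with $i$ an input position and $c\in[k]$; the copy $(i,c)$ is present iff $|\outputy(i)|>c$ (an FO condition, since each $\outputy(i)=v$ is FO-definable and $v$ determines its length), and it is labelled by the $c$-th letter of $\outputy(i)$ (again FO-definable). Empty entries contribute no copies, so $\varepsilon$ outputs are handled correctly. The output order is the lexicographic order on $(i,c)$ with $i$ primary, which is FO-definable from $<$. Since the origin of $(i,c)$ is $i$, the origin is non-decreasing along the output order, so the transduction is order-preserving and lies in $\FOTop$; the cited equivalence $\FOTop = $ \FO{} rational then yields the theorem.

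I expect the main obstacle to be the attention step, where the ``leftmost/rightmost satisfying position, else default'' semantics must be captured faithfully in FO without any counting, and where care is needed to keep the whole construction inside FO (so that we land in the \emph{aperiodic} subclass) rather than accidentally invoking MSO features. By contrast, the \compressed{}-output packing and the verification of order-preservation are essentially immediate from the construction, and the base and position-wise cases of the induction are routine closure arguments for first-order logic.
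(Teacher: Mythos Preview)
Your argument is correct, and it takes a genuinely different route from the paper's own proof. The paper works \emph{operationally}: it first uses Lemma~12 of \citet{yang2024masked} to make every attention predicate depend on~$j$ alone, and then, by induction on the operations of the program, builds an explicit two-state identity/reset DFT for each leftmost/rightmost attention (and its right-to-left mirror for past masking), directly verifying counter-freeness in each case. The final transduction is obtained as a composition of these aperiodic sequential pieces, matching \cref{def:aperiodic-transduction-classes} on the nose.

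Your approach instead stays in logic: step~(i) is essentially the B-RASP $\to$ FO direction already established by \citet{yang2024masked} for recognizers, redone here vector by vector; steps~(ii)--(iii) then package the \compressed{} output as a $k$-copy FO interpretation with the obvious monotone origin and appeal to the $\FOTop = \text{\FO{} rational}$ characterization of \citet{filiot2016first,carton2021aperiodic}. The trade-off is that your proof is shorter and avoids the case analysis on masking/direction (and never needs to normalise $S(i,j)$ to depend only on $j$), but it externalises the work to a nontrivial characterization theorem; the paper's proof is more self-contained and yields concrete transducers, at the cost of checking aperiodicity by hand for each constructed DFT.
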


\begin{proof} \newcommand{\val}{x}
Let $\progname{P}$ be a $\BRASP$ program.
By Lemma~12 of \citet{yang2024masked}, $\progname{P}$ can be rewritten so that every score predicate $S(i,j)$ depends only on~$j$.
Denote the sequence of vectors of $\progname{P}$ as $P_1, \ldots, P_m$, and treat the input vector $\inputx$ as $P_0$.
We prove by induction that the first $k$ operations of $\progname{P}$ can be converted to a composition of left-to-right and right-to-left \FO~sequential transductions.
The output of the composition is the sequence of $(k+1)$-tuples $(\val_{0,0}, \ldots, \val_{0,k}), \ldots, (\val_{n-1,0}, \ldots, \val_{n-1,k})$, where for $i \in [n]$ and $j \in [k+1]$, we have $\val_{i,j} = P_j(i)$.

If $k=1$, we just construct the identity transducer. If $k>1$, assume that the first $k-1$ operations have been converted to a composition of transductions.
If $P_k$ is a position-wise operation, it can be computed by a one-state DFT that appends the value of $\val_{i,k} = P_k(i)$ onto the end of the input $k$-tuple.  The interesting cases are the attention operations.

Case $P_k(i) = \attrdefault{j}{j<i}{P_s(j)}{P_v(j)}{P_d(i)}$, where $s, v, d < k$: Let $T$ be the set of values in the type of $P_k$. Then we construct the following (left-to-right) DFT. Starting from the first position, it appends $P_d(i)$ onto the end of the input $k$-tuple. Every time it reaches a position $j$ where $P_s(j)$ is true, it switches, starting from position $j+1$, to appending $P_v(j)$.
In the following, $\vec{\val}$ is the input $k$-tuple, $x_r$ is the component of $\vec{\val}$ with index $r$, and $(\vec{\val},x)$ is the $(k+1)$-tuple obtained by appending the element $x$ to the end of $\vec{\val}$.
\begin{align*}
Q &= \{ q_{\textnormal{def}} \} \cup \{ q_\val \mid \val \in T \} \\
\delta(q_{\textnormal{def}}, \vec{\val}) &= \begin{cases}
((\vec{\val}, \val_d), q_{\val_v}) & \val_s = \true \\
((\vec{\val}, \val_d), q_{\textnormal{def}}) & \val_s = \false
\end{cases} \\
\delta(q_x, \vec{\val}) &= \begin{cases}
\mathrlap{((\vec{\val}, \val), q_{\val_v})}
\hphantom{((\vec{\val}, \val_d), q_{\textnormal{def}})}
& \val \in T, \val_s = \true \\
((\vec{\val}, \val), q_{\val}) & \val \in T, \val_s = \false.
\end{cases}
\end{align*}
To see that this is counter-free: Let $u$ be any string. If $u$ contains a tuple $\vec{\val}$ such that $x_s = \true$, let $\vec{\val}$ be the rightmost such tuple. Then $q \xrightarrow{u} q_{\val_v}$ for all $q$, so $(\xrightarrow{u^i}) = (\xrightarrow{u^{i+1}})$ for all $i \ge 1$. If $u$ does not contain such a tuple, then $q \xrightarrow{u} q$ for all $q$, so $(\xrightarrow{u^i}) = (\xrightarrow{u^{i+1}})$ for all $i \ge 0$. 

Case $P_k(i) = \attldefault{j}{j<i}{P_s(j)}{P_v(j)}{P_d(i)}$, where $s, v, d < k$: Let $Q$ and $T$ be as above. Then we construct the following DFT. Starting from the first position, it appends $P_d(i)$ onto the end of the input $k$-tuple. The first time it reaches a position $j$ where $P_s(j)$ is true, it switches to appending $P_v(j)$, from position $j+1$ to the end.
\begin{align*}
\delta(q_{\textnormal{def}}, \vec{\val}) &= \begin{cases}
((\vec{\val}, \val_d), q_{\val_v}) & \val_s = \true \\
((\vec{\val}, \val_d), q_{\textnormal{def}}) & \val_s = \false
\end{cases} \\
\delta(q_x, \vec{\val}) &= \mathrlap{((\vec{\val}, \val), q_\val)} 
\hphantom{\begin{cases} ((\vec{\val}, \val_d), q_{\textnormal{def}}) & \end{cases}}
x \in T.
\end{align*}
To see that this is counter-free: Same as the previous case, except $\vec{\val}$ is the \emph{leftmost} tuple in $u$ such that $v_s = \true$.  %

The cases 
\begin{align*}
P_k(i) &= \attldefault{j}{j>i}{P_s(j)}{P_v(j)}{P_d(i)} \\
P_k(i) &= \attrdefault{j}{j>i}{P_s(j)}{P_v(j)}{P_d(i)}
\end{align*}
are the same, but using right-to-left transducers.

Case $P_k(i) = \attldefault{j}{\true}{P_s(j)}{P_v(j)}{P_d(i)}$, where $s, v, d < k$: This operation could be replaced by the following sequence of three operations, which are covered in the preceding cases.
\begin{align*}
R_k(i) &= \attldefault{j}{j>i}{P_s(j)}{P_v(j)}{P_d(i)}\\
C_k(i) &= P_v(i) \iftext P_s(i) \elsetext R_k(i) \\
P_k(i) &= \attldefault{j}{j<i}{P_s(j)}{P_v(j)}{C_k(i)}.
\end{align*}
Here, $R_k(i)$ is the value from the leftmost $j > i$ with $P_s(j) = \true$ (if any), else $P_d(i)$; then
$C_k(i)$ is the value from the leftmost $j \ge i$ with $P_s(j) = \true$ (if any), else $P_d(i)$; finally,
$P_k(i)$ is the value from the leftmost $j$ overall with $P_s(j) = \true$ (if any), else $P_d(i)$.

Case $P_k(i) = \attrdefault{j}{\true}{P_s(j)}{P_v(j)}{P_d(i)}$ is the mirror image of the previous case.
\end{proof}

For the converse, we need the following lemma.
\begin{lemma} \label{thm:flipflop_to_brasp}
If\/ $\progname{P}$ is a \BRASP{} program with \compressed{} outputs and $f$ is an \FO{} sequential transduction, there is a \BRASP{} program with \compressed{} outputs that computes $f \circ \progname{P}$.
\end{lemma}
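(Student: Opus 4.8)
The plan is to extend the program $\progname{P}$ with a few more vectors that simulate, \emph{on top of} its packed output, the aperiodic DFT defining $f$. Write $T=(\Gamma,\Lambda,Q,q_0,\delta)$ for an aperiodic DFT computing $f$, and let $k$ be the packing bound of $\progname{P}$, so that on an input $w$ of length $n$ the output cell $s_i = \progname{P}(w)(i)$ lies in $\Gamma^{\le k}$ and the concatenation $u = s_0 \cdots s_{n-1}$ equals $\progname{P}(w)$. Since every vector of $\progname{P}$ has the common length $n$, and later $\BRASP$ operations may freely refer to vectors already defined, it suffices to append operations that replace each cell $s_i$ by the fragment of $f(u)$ that $T$ emits while scanning $s_i$; the resulting packed vector then concatenates to $f(\progname{P}(w)) = (f \circ \progname{P})(w)$.

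The quantity I need at each position $i$ is the \emph{entry state} $q_i$, the state of $T$ just before it reads $s_i$ (so $q_0$ is the initial state and $q_{i+1} = \delta^*(q_i, s_i)$, where $\delta^*$ extends $\delta$ to strings). Given $q_i$, everything else is a finite position-wise table lookup: reading $s_i$ from $q_i$ determines the next state together with an output fragment $o_i \in \Lambda^{\le k'}$ (bounded since $|s_i| \le k$ and each step of $T$ emits a bounded string), and at the rightmost position I additionally append the end-of-input output $\delta(q', \eos)$, where $q'$ is the state reached after the last cell; the last position is detected exactly as in \cref{ex:rotateright}. So the whole construction reduces to computing the vector $i \mapsto q_i$ inside $\BRASP$.

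To compute the entry states, I regard $s$ as a \word{} over the finite alphabet $\Gamma^{\le k}$ and introduce the derived automaton $T'$ over $\Gamma^{\le k}$ whose transition on a cell $\sigma$ is $\delta^*(\cdot,\sigma)$. This $T'$ is aperiodic: writing $\widetilde{S}$ for the flattening of a cell-\word{} $S \in (\Gamma^{\le k})^*$ into $\Gamma^*$, we have $p \xrightarrow{S}_{T'} q$ iff $p \xrightarrow{\widetilde{S}}_{T} q$, and $\widetilde{S^n} = \widetilde{S}^{\,n}$, so the aperiodicity constant $N$ of $T$ witnesses aperiodicity of $T'$ as well. Now for each of the finitely many states $q \in Q$, the property ``the run of $T'$ on the prefix $s_0 \cdots s_i$ ends in $q$'' is aperiodic, hence by the equivalence between $\BRASP$ and aperiodic DFAs \citep{yang2024masked} --- read with $s$ in the role of the input \word{} --- defines a $\BRASP$-computable Boolean vector; combining these finitely many vectors yields a symbol vector recording the state of $T'$ after each prefix. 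A single leftmost attention step then shifts this to the left neighbor, defaulting to $q_0$ at position $0$, to produce the entry-state vector $q_i$.

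I expect the entry-state computation to be the main obstacle, for two reasons. First, the state of $T$ must be propagated across cell boundaries even though each $\BRASP$ step is position-wise or a single hard-attention probe, so correctness rests on recasting this propagation as the run of the aperiodic automaton $T'$ and on verifying that $T'$ genuinely inherits aperiodicity (a submonoid-of-an-aperiodic-monoid phenomenon, shown here directly through flattening). Second, I must keep every emitted fragment within a fixed length bound so that the result is a legitimate packed output; this follows from the boundedness of $k$ and of the per-step outputs of $T$, but should be stated explicitly, along with the degenerate case $n = 0$, which is handled by the boundary convention and simply outputs $\delta(q_0, \eos)$.
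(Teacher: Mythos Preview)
Your argument is correct and takes a genuinely different route from the paper's. The paper applies the Krohn--Rhodes theorem for aperiodic sequential transductions to factor $f$ into a cascade of two-state aperiodic DFTs; each such machine is an identity--reset transducer, so the entry state at position $i$ is determined by the rightmost cell strictly left of $i$ whose packed content contains a resetting symbol, and a single explicit rightmost-attention operation recovers it. You instead treat the whole DFT at once: lift it to a cell-level automaton $T'$ over $\Gamma^{\le k}$, verify (correctly, via the flattening homomorphism) that $T'$ inherits aperiodicity, and then invoke the $\BRASP$/first-order equivalence of \citet{yang2024masked} as a black box to obtain the state-after-each-prefix vector. The paper's route is fully explicit --- you can read off the actual $\BRASP$ operations --- and does not depend on the full strength of the equivalence theorem; your route is shorter, avoids decomposing $f$, and cleanly isolates the one nontrivial step (aperiodicity of $T'$). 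One caution: what you actually need from \citet{yang2024masked} is not the language-recognition equivalence but the finer statement that every $\text{FO}[<]$ formula with one free position variable defines a $\BRASP$ Boolean vector, together with the observation that this translation still applies when the unary letter predicates are interpreted in a previously computed symbol vector rather than in $\inputx$. Both hold, but your phrase ``equivalence between $\BRASP$ and aperiodic DFAs'' undersells what is being invoked and would benefit from being stated precisely.
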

\begin{proof}
We can adapt the proof of Lemma 19 of \citet{yang2024masked}. 
By the Krohn-Rhodes decomposition theorem for \FO{} sequential transductions \citep[Thm.~4.8]{pradicnguyen2020}, $f$ is equivalent to the sequential composition of finitely many two-state aperiodic DFTs.
Hence, without loss of generality, we can assume that $f$ is defined by a two-state aperiodic DFT $T$.
This machine $T$ is an \emph{identity--reset} transducer, which means that for any symbol~$\sigma \in \Sigma$, the state transformation $\xrightarrow{\sigma}_T$ either is the identity (maps both states to themselves) or \emph{resets} to one of the states $q$ (maps both states to $q$).
For each state $q$ of~$T$, let $R_q$ be the set of symbols that reset to $q$. Let $R = \bigcup_q R_q$ and $I = \Sigma \setminus R$.
Let~$q_1$ be the start state and $q_2$ the other state.
We write $T(q, w) = w'$ if $\delta(q, w) = (w', q')$ for some~$q'$.

Modify $\progname{P}$ so that its output vector is a fresh vector $\vecname{z}$ instead of $\outputy$.
Then $f \circ \progname{P}$ is defined by appending the following operations to $\progname{P}$:
\begin{raspcode}
\begin{align*}
\vecname{state}_q(i) &= \rightmost_j \left[j<i, \bigvee_{\substack{uav \in \Gamma^{\le k} \\ a \in R, v \in I^*}} \vecname{z}(j) = uav\right] \\
&\phantom{={}} \left(\bigvee_{\substack{uav \in \Gamma^{\le k} \\ a \in R_q, v \in I^*}} \vecname{z}(j) = uav\right) : q = q_1 \\
\vecname{sym}(i) &= \attldefault{j}{j>i}{\true}{\vecname{z}(i)}{\vecname{z}(i) \eos} \\
\outputy(i) &= T(q_1, \vecname{sym}(i)) \iftext \vecname{state}_{q_1}(i) \\
&\phantom{={}} \elsetext T(q_2, \vecname{sym}(i)) %
\end{align*}
\end{raspcode}
Vector $\vecname{state}_q(i)$ tests whether $T$ is in state $q$ just before reading (\compressed{}) symbol $w_i$. It does so by searching for the rightmost symbol $a$ that resets to any state. If $a$ exists and resets in particular to $q$, then $T$ must still be in state $q$; otherwise, it is not. But if $a$ does not exist, then $T$ must still be in the start state $q_1$.
Vector $\vecname{sym}(i)$ simply appends $\eos$ to the last position.
Finally, $\outputy$ maps $\vecname{sym}(i)$ to $T(q, \vecname{sym}(i))$ (where $q$ is the state just before reading $w_i$).
\end{proof}

\begin{table}[ht]
\begin{center}
\exampletablefontsize
\setlength{\tabcolsep}{5pt}
\begin{tabular}{@{}l|c|ccccccccccc@{}} 
  \toprule
  \inputx     &  input     & | & a & b & | & c & d & e & | \\
  \pos   & position        & $0$ & $1$ & $2$ & $3$ & $4$ & $5$ & $6$ & $7$ \\
  \vecname{prev}   & previous \rechar{|} & $0$ & $0$ & $0$ & $0$ & $3$ & $3$ & $3$ & $3$ \\
  \vecname{next}   & next \rechar{|} & $3$ & $3$ & $3$ & $7$ & $7$ & $7$ & $7$ & $0$ \\
  \vecname{src}   & source & $3$ & $2$ & $1$ & $4$ & $6$ & $5$ & $4$ & $0$ \\
  $\vecname{y1}$   & $\inputx(\vecname{src}(i))$  & | & b & a & c & e & d & c & |\\
  \outputy    &  output & | & b & a & | & e & d & c & | \\
  \bottomrule
 \end{tabular}
 \end{center}
 \caption{Example $\BRASPpos$ computation for $\mapreverse$. Details in Ex.~\ref{ex:mapreverse_rasp}.}
 \label{tab:map-reverse}
\end{table}

\begin{table}[ht]
\begin{center}
\exampletablefontsize
\setlength{\tabcolsep}{5pt}
\begin{tabular}{@{}l|c|cccccccc@{}}
    \toprule
    \inputx &  input & | & a & b & | & c & d & e & | \\
    \vecname{pos} & position  & 0 & 1 & 2 & 3 & 4 & 5 & 6 & 7 \\
    \vecname{prev} & previous~\rechar{|} & 0 & 0 & 0 & 0 & 3 & 3 & 3 & 3 \\
    \vecname{next} & next \rechar{|} & 3 & 3 & 3 & 7 & 7 & 7 & 7 & 0 \\
    \vecname{nowrap} &  & 0 & 1 & 3 & 5 & 4 & 6 & 7 & 7 \\
    \vecname{wrap} &  & 0 & 0 & 1 & 0 & 1 & 3 & 5 & 7 \\
    \vecname{src1} & left symbol & 0 & 1 & 1 & 5 & 4 & 6 & 5 & 7 \\
    \vecname{src2} & right symbol & 1 & 2 & 2 & 6 & 5 & 4 & 6 & 7 \\
    \outputy & output & | & \clap{ab} & \clap{ab} & | & \clap{cd} & \clap{ec} & \clap{de} & | \\
    \bottomrule
 \end{tabular}
 \end{center}
 \caption{Example $\BRASPpos$ computation for $\mapduplicate$. Details in Ex.~\ref{ex:mapduplicate}.}
 \label{tab:map-duplicate-alternate}
\end{table}

\begin{theorem} \label{thm:arational_to_brasp}
For any \FO{} rational transduction $f \colon \Sigma^* \to \Gamma^*$, there is a \BRASP{} program $\progname{P}$ with \compressed{} outputs that computes $f$.
\end{theorem}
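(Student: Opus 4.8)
The plan is to induct on the composition structure of \FO{} rational transductions. By \cref{def:aperiodic-transduction-classes}, $f$ can be written as a composition $f = f_k \circ \cdots \circ f_1$ in which each factor $f_\ell$ is either an \FO{} sequential transduction or a right-to-left \FO{} sequential transduction. Starting from the identity program $\outputy(i) = \inputx(i)$ (which computes the identity with a trivially $1$-\compressed{} output), I would extend the program one factor at a time, maintaining the invariant that after $\ell$ steps it computes $f_\ell \circ \cdots \circ f_1$ with \compressed{} outputs. For a factor that is \FO{} sequential, \cref{thm:flipflop_to_brasp} does exactly this, so the only new work is handling a right-to-left factor.

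For the right-to-left case I would prove a mirror-image version of \cref{thm:flipflop_to_brasp}: if $\progname{P}$ is a \BRASP{} program with \compressed{} outputs and $f_\ell(w) = g(w^\rev)^\rev$ for an \FO{} sequential $g$, then $f_\ell \circ \progname{P}$ is again computable in \BRASP{} with \compressed{} outputs. It is worth stressing that one cannot simply reverse the \compressed{} string, apply \cref{thm:flipflop_to_brasp} with $g$, and reverse again: reversal is not an \FO{} rational transduction, and by \cref{theorem:BRASP-is-FOTop} \BRASP{} computes only \FO{} rational transductions, so \BRASP{} cannot reverse a string at all. Instead, the right-to-left transduction must be realized directly, by \emph{reflecting} the construction of \cref{thm:flipflop_to_brasp} under string reversal. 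Since \BRASP{} is symmetric under reversal---reflection swaps leftmost ($\leftmost$) with rightmost ($\rightmost$) attention and future masking ($j<i$) with past masking ($j>i$)---the Krohn--Rhodes decomposition of $g$ into two-state identity--reset DFTs and its reset-tracking logic carry over with all directions flipped: the cells are now scanned right to left, and the state of $g$'s DFT just before reading cell $i$ is recovered from the \emph{leftmost} reset symbol among the cells $j > i$ (via leftmost attention with past masking).

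The only genuinely new ingredient is reversal \emph{within} each \compressed{} cell, which the global flip of attention directions does not supply. Because a $k$-\compressed{} representation stores in each cell a string of length at most $k$ over a fixed alphabet, there are only finitely many possible cell contents, and the map reversing a cell's content is a position-wise operation given by a finite conditional expression. Tracing the semantics, cell $i$ of the output of $f_\ell \circ \progname{P}$ equals $\bigl(T(q_i, z_i^\rev)\bigr)^\rev$, where $z_i$ is cell $i$ of $\progname{P}$'s output, $T$ is $g$'s DFT, and $q_i$ is its state just before reading $z_i^\rev$. This can be computed by (i) reversing each cell position-wise to form $z_i^\rev$, (ii) running the reflected reset-tracking and applying $T$ from the recovered state $q_i$, and (iii) reversing the resulting cell position-wise; crucially, the global reordering of cells is never performed explicitly but is absorbed into the reflected attention directions, so all cell-content manipulations remain position-wise and only the state recovery uses attention.

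I expect the main obstacle to be the bookkeeping that makes the two reversals consistent: verifying that the implicit global reversal of cell order (from flipping attention directions) composed with the explicit per-cell content reversal yields exactly $g(w^\rev)^\rev$, and that the reflected reset-tracking identifies the correct DFT state $q_i$ for each cell in the right-to-left scan, including the handling of the $\eos$ marker, which now attaches to the first cell. Aperiodicity of the reflected transducers is immediate, since reversal preserves counter-freeness, so the resulting \BRASP{} program still falls within the required class.
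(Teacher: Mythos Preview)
Your proposal is correct and follows essentially the same approach as the paper: start from the identity program, then absorb each factor using \cref{thm:flipflop_to_brasp} for \FO{} sequential factors and its mirror image (swap $\leftmost\!\leftrightarrow\!\rightmost$, $j<i \leftrightarrow j>i$, last-reset $\leftrightarrow$ first-reset in each cell, and attach $\eos$ at the opposite end) for right-to-left factors. The only difference is cosmetic: the paper invokes the two-factor form $f = f_R \circ f_L$ from the footnoted characterization, whereas you induct directly on the composition closure of \cref{def:aperiodic-transduction-classes}; your explicit remarks on why na\"ive reversal is unavailable in \BRASP{} and on the per-cell reversal bookkeeping simply unpack what the paper compresses into ``using the mirror images of $\vecname{state}_q$ and $\vecname{sym}$.''
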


\begin{proof}
The transduction $f$ can be written as $f_R \circ f_L$, where $f_L$ is an \FO{} sequential transduction and $f_R$ is a right-to-left \FO{} sequential transduction (\cref{def:aperiodic-transduction-classes}). 
The identity transduction can clearly be computed by a \BRASP{} program, and by \cref{thm:flipflop_to_brasp} there is a \BRASP{} program computing $f_L$. Finally, \cref{thm:flipflop_to_brasp} can be easily modified to  apply also to $f_R$, using the mirror images of $\vecname{state}_q$ and $\vecname{sym}$ above.
\end{proof}

\subsection{Unique-hard attention transformers compute exactly the aperiodic rational transductions}

\Citet{yang2024masked} show that a $\BRASP$ program can be simulated by a unique-hard attention transformer with no position information besides masking, and vice versa. With \cref{theorem:BRASP-is-FOTop,thm:arational_to_brasp}, this implies that masked unique-hard attention transformers with \compressed{} outputs can compute exactly the \FO{} rational transductions. 

\section{$\BRASP$ with positions}
\label{sec:brasppos}

\subsection{Definition}

We extend $\BRASP$ to $\BRASPpos$,
which adds a type $\nat$ for integers in $[n]$, and vectors containing integers.

We extend the syntax of expressions as follows:
\begin{align*}
e^\nat &\bnfto 0 \mid 1 \mid e^\nat + e^\nat \mid e^\nat - e^\nat \\
e^\bool &\bnfto \cdots \\
&\bnfmid c^\bool \qquad \text{where $|\fv(c^\bool)| \le 1$} \\
c^\bool &\bnfto e^\nat = e^\nat \mid e^\nat < e^\nat \mid e^\nat \le e^\nat \\
&\bnfmid e^\nat > e^\nat \mid e^\nat \ge e^\nat
\end{align*}
where $\cdots$ means all of the productions from the syntax of $\BRASP$.
Then vector definitions are extended as follows.
\begin{compactenum}
    \item There is a pre-defined integer vector $\pos(i)$, whose value is simply $i$ at every position $i \in [n]$.
    \item There are position-wise operations $P(i) = e^\nat$, where $\fv(e^\nat) \subseteq \{i\}$. Addition and subtraction have their usual meaning, but values less than $0$ are replaced by $0$ and values greater than $n-1$ are replaced by $n-1$. (Since this is not associative, we fully parenthesize arithmetic expressions.)
    \item There are position-wise operations $P(i) = c^\bool$, where $\fv(c^\bool) \subseteq \{i\}$. The operators $<$, $>$, $=$, $\neq$, $\le$, and $\ge$ have their usual meaning.
    \item In $\BRASP$, $S(i,j)$ was a Boolean expression ($e^\bool$); in $\BRASPpos$, it can be either a Boolean expression ($e^\bool$) or, as a special case, $V_1(i) = V_2(j)$, where $V_1$ and $V_2$ are previously defined integer vectors. We emphasize that only tests for equality are allowed (not, for example, $V_1(i) < V_2(j)$). This restriction is used in the transformer simulation in \cref{sec:S-RASP-with-augmented-average-hard-attention}.
\end{compactenum}

\subsection{Examples}

Informally, we omit a default value from a leftmost or rightmost operation if the operation is such that the default value will never be taken.

\begin{example}[$\mapreverse$]
\label{ex:mapreverse_rasp}
    Reverse each substring between markers.
\begin{center}
$\text{\marker ab\marker cde\marker fg\marker } \mapsto \text{\marker ba\marker edc\marker gf\marker }$
\end{center}
\begin{raspcode}
\begin{align*}
        \vecname{prev}(i) &= \attrdefault{j}{j < i}{\inputx(j) = \rechar{|}}{\pos(j)}{0} \\
        \vecname{next}(i) &= \attldefault{j}{j>i}{\inputx(j)=\rechar{|}}{\pos(j)}{0} \\
        \vecname{src}(i) &= \vecname{prev}(i)+\vecname{next}(i)-\pos(i) \\
        \vecname{y1}(i) &= \attl{j}{\true}{\vecname{src}(i) = \pos(j)}{\inputx(j)} \\
        \outputy(i) &= \rechar{|} \iftext \inputx(i) = \rechar{|} \elsetext \vecname{y1}(i)
\end{align*}
\end{raspcode}
    An example run is in \cref{tab:map-reverse}.
\end{example}

Above, the vector $\vecname{y1}$ just retrieves, for each $i$, the input symbol at position $\vecname{src}(i)$. This idiom is so common that we will write it using the syntactic sugar:
    \begin{raspcode}\[\vecname{y1}(i) = \inputx(\vecname{src}(i)).\]\end{raspcode}

\begin{example}[$\mapduplicate$]
\label{ex:mapduplicate}
Duplicate each substring between markers.
\begin{center}
$\text{\marker ab\marker cde\marker } \mapsto \text{\marker abab\marker cdecde\marker }$
\end{center}
\begin{raspcode}
\begin{align*}
\vecname{prev}(i) &= \attrdefault{j}{j<i}{\inputx(j)=\rechar|}{\pos(j)}{0} \\
\vecname{next}(i) &= \attldefault{j}{j>i}{\inputx(j)=\rechar|}{\pos(j)}{0} \\
\vecname{nowrap}(i) &= \pos(i) + (\pos(i) - \vecname{prev}(i) - 1) \\
\vecname{wrap}(i) &= \pos(i) - (\vecname{next}(i) - \pos(i)) \\
\vecname{src1}(i) &= \vecname{nowrap}(i)
                   \iftext \vecname{nowrap}(i) < \vecname{next}(i) \\ &\phantom{={}} \elsetext \vecname{wrap}(i) \\
\vecname{src2}(i) &= \vecname{nowrap}(i)+1 %
                   \iftext \vecname{nowrap}(i)+1 < \vecname{next}(i) \\
                   &\phantom{={}} \elsetext \vecname{wrap}(i)+1 \\
\outputy(i) &= \rechar| \iftext \inputx(i) = \rechar| \\
&\phantom{={}} \elsetext \inputx(\vecname{src1}(i)) \conc \inputx(\vecname{src2}(i))
\end{align*}
\end{raspcode}
Here $\conc$ denotes string concatenation over $\Gamma^{\le k}$.
An example run is in \cref{tab:map-duplicate-alternate}. Note that $\vecname{nowrap}(6) = 7$, not $8$, because addition and subtraction are clipped to lie in $[0, n-1]$.
\end{example}

\begin{example}[$\copyfirsthalf$]
\label{ex:copyfirsthalf}
    Copy just the first half of the input \word{}, rounding down.
    \begin{center}
    $\text{abcaabcbb} \mapsto \text{abca}$
    \end{center}
    \begin{raspcode}
    \begin{align*}
        \vecname{last}(i) &= \attr{j}{\true}{\true}{\pos(j)} \\
        \vecname{sum}(i) &= \pos(i)+\pos(i) \\
        \outputy(i) &= \inputx(i) \iftext \vecname{sum}(i) < \vecname{last}(i) \elsetext \charepsilon
    \end{align*}
    \end{raspcode}

    An example run is in \cref{tab:copy-first-half-rounding-down}.
    \begin{table}[ht]
\begin{center}
\exampletablefontsize
\setlength{\tabcolsep}{5pt}
\begin{tabular}{@{}l|c|ccccccccc@{}}
  \toprule
  $\inputx$        & input     & a & b & c & a & a & b & c & b & b \\
  $\pos$   & $i$      & $0$ & $1$ & $2$ & $3$ & $4$ & $5$ & $6$ & $7$ & $8$\\
  \vecname{last} &   $n-1$              & $8$ & $8$ & $8$ & $8$ & $8$ & $8$ & $8$ & $8$ & $8$ \\
  \vecname{sum}  & $\min(2i,n-1)$     & $0$ & $2$ & $4$ & $6$ & $8$ & $8$ & $8$ & $8$ & $8$ \\
  \outputy     & output              & a & b & c & a & $\varepsilon$ & $\varepsilon$ & $\varepsilon$ & $\varepsilon$ & $\varepsilon$\\
  \bottomrule
 \end{tabular}
 \end{center}
 \caption{Example $\BRASPpos$ computation for $\copyfirsthalf$. Details in Ex.~\ref{ex:copyfirsthalf}.}
 \label{tab:copy-first-half-rounding-down}
\end{table}

\end{example}
\begin{proposition} \label{thm:copyfirsthalf}
    The transduction $\copyfirsthalf$ is neither regular nor polyregular.
\end{proposition}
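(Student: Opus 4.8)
The plan is to reduce the whole statement to a single non-closure argument. Since every regular transduction is in particular polyregular — in the paper's framework the polyregular class is the composition closure of a strictly larger generating set than the regular class (\cref{def:aperiodic-transduction-classes}), and the analogous inclusion holds for the full, non-aperiodic classes — it suffices to show that $\copyfirsthalf$ is not polyregular. For this I would invoke the standard closure property that polyregular functions preserve regular languages under inverse images: if $f$ is polyregular and $L$ is regular, then $f^{-1}(L)$ is regular. This follows from the MSO-interpretation characterization of polyregular functions, by substituting the MSO sentence defining $L$ into the interpretation computing $f$ to obtain an MSO sentence over the input word, which then defines a regular language. The strategy is therefore to exhibit one regular output language whose preimage under $\copyfirsthalf$ fails to be regular.

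Concretely, I would take $\Sigma = \Gamma = \{a,b\}$ and let $L = \Gamma^{*} a$, the regular set of strings ending in $a$. Writing $f = \copyfirsthalf$ and $n = |w|$, the output $f(w)$ is the length-$m$ prefix of $w$ with $m = \lfloor n/2 \rfloor$, so $f(w) \in L$ exactly when the symbol of $w$ at position $m-1$ is $a$. To turn this into a concrete non-regular set I would intersect $f^{-1}(L)$ with the regular language $a^{*} b^{*}$. On an input $w = a^{i} b^{j}$ (so $n = i+j$), the last output symbol is an $a$ precisely when $m \le i$, i.e. when $\lfloor (i+j)/2 \rfloor \le i$; the routine inequality $\lfloor (i+j)/2 \rfloor \le i \iff j \le i+1$ then yields
\[
f^{-1}(L) \cap a^{*} b^{*} = \{\, a^{i} b^{j} : j \le i+1 \,\}
\]
up to finitely many short exceptions (inputs with $n < 2$, where the output is empty).

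Finally I would observe that $\{a^{i} b^{j} : j \le i+1\}$ is not regular, by a one-line pumping argument: pumping down the $a$-block of the boundary word $a^{p} b^{p+1}$ pushes it out of the set. Since regular languages are closed under intersection and $a^{*} b^{*}$ is regular, $f^{-1}(L)$ cannot be regular either, contradicting the inverse-image closure property. Hence $\copyfirsthalf$ is not polyregular, and a fortiori not regular.

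The step I expect to carry the real weight is the appeal to inverse-image preservation for polyregular functions: it is the one nontrivial external fact the argument rests on, so I would want either to cite it precisely or to re-derive it from whichever characterization of polyregularity is most convenient in this setting. Everything else — the choice of $L$, the reduction to $a^{*} b^{*}$, the floor arithmetic (being careful about both the $m-1$ off-by-one and the rounding in $\lfloor n/2 \rfloor$), and the pumping step — is elementary and only needs to be stated carefully.
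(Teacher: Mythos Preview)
Your proposal is correct and follows essentially the same route as the paper: both arguments invoke the fact that polyregular functions preserve regularity under inverse image (the paper cites \citealp[Thm.~1.7]{bojanczyk2018}) and exhibit a regular output language with non-regular preimage. The only cosmetic difference is the choice of target language---the paper takes $L = a^*$, giving preimage $\{a^n w : |w| \le n\}$ directly, whereas you take $L = \Gamma^* a$ and then intersect with $a^*b^*$ to reach an equivalent non-regular set.
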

\begin{proof}
    Both regular and polyregular transductions preserve regular languages under inverse \citep[Thm 1.7]{bojanczyk2018}. 
    The inverse of the regular language $a^*$ under $\copyfirsthalf$ is the set of words of the form $a^nw$ where $|w| \le n$, which is not regular if $|\Sigma| > 1$, so the transduction $\copyfirsthalf$ is neither regular nor polyregular.\footnote{Thanks to an anonymous reviewer for suggesting this argument to us.}
\end{proof}

\begin{example}[\residuesmodm]
Let $m$ be a positive integer and define the transduction \residuesmodm{} to map any input $a_0a_1 \cdots a_{n-1}$ to the sequence $b_0 b_1 \cdots b_{n-1}$ where $b_i = i \bmod m$. This transduction is rational but not aperiodic.
\end{example}
\begin{proposition}
    \label{prop:BRASPpos-computes-residues}
    For any $m$, \BRASPpos{} can compute the transduction \residuesmodm{}.
\end{proposition}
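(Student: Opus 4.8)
The goal is to compute, at each position $i$, the value $i \bmod m$, which is a symbol from the finite alphabet $\{0, 1, \ldots, m-1\}$. The fundamental challenge is that \BRASPpos{} has a type $\nat$ for integers in $[n]$, but the only way to extract information from positions is via leftmost/rightmost attention, and the score predicate can only test \emph{equality} of integer vectors. So I cannot directly compute a modular residue with a single arithmetic operation; I need a strategy that reduces everything to equality tests against previously computed integer vectors.

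My plan is to build up the residues by a ``chaining'' argument that propagates the residue from position $i-1$ to position $i$. The key observation is that $i \bmod m$ is determined by $(i-1) \bmod m$ via the successor-mod-$m$ map. So first I would define a vector $\vecname{pred}(i) = \pos(i) - 1$ giving the previous position (clipped at $0$). Then, if I already had the residue at every position, I could read off the residue at $\vecname{pred}(i)$ and add one mod $m$. The difficulty is the circularity: computing the residue vector seems to require already knowing the residue vector. \BRASPpos{} definitions are a straight-line sequence, so I cannot write a recursive definition directly.

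The way I expect to break the circularity is to exploit that $m$ is a \emph{fixed constant}, so I can define the residue vector in a bounded number of steps that does not depend on $n$. One clean approach: define $m$ Boolean ``marker'' vectors $\vecname{r}_0, \ldots, \vecname{r}_{m-1}$, where $\vecname{r}_t(i)$ should be true exactly when $i \equiv t \pmod m$. I would anchor these at the first $m$ positions using explicit position-value tests (e.g.\ $\pos(i) = 0$, $\pos(i) = 1$, and so on, which are legal $c^\bool$ expressions with a single free variable), and then propagate periodically. To propagate, I would use an attention operation that, for each $i$, finds the nearest preceding position $j$ carrying residue $t$ and checks that the gap $\pos(i) - \pos(j)$ equals $m$; but since the score predicate only permits equality $V_1(i) = V_2(j)$, I would precompute an integer vector $\vecname{target}(i) = \pos(i) - m$ (clipped) and match it against $\pos(j)$, so that $i$ attends to position $i - m$ and copies its residue. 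This reduces the periodic dependence to a single equality test of the allowed form.

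The main obstacle will be getting the boundary behavior and the clipping correct, and confirming that the ``copy from $i-m$'' step is expressible within the equality-only restriction on score predicates. Concretely, I must verify that $\vecname{target}(i) = \pos(i) - m$ really selects position $i-m$ (the clipping to $[0,n-1]$ could create spurious matches near the start, which is why I handle the first $m$ positions by explicit constant tests rather than by attention), and that combining the residue information into a single symbol output is a routine position-wise conditional. Once position $i$ can copy the residue of position $i-m$ by an equality-matched attention and the first $m$ residues are hard-coded, the full residue assignment follows by the straight-line unfolding, with a constant number ($O(m)$) of operations independent of $n$. Finally I would read out $\outputy(i)$ as the symbol $t$ for whichever $\vecname{r}_t(i)$ holds, via a nested position-wise $\iftext$ expression.
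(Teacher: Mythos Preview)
Your plan has a genuine gap: the propagation step does not run in constant depth. You correctly identify the circularity in defining $\vecname{r}_t$ via ``copy the residue from position $i-m$,'' and you propose to break it by anchoring the first $m$ positions and then propagating. But a single attention layer that matches $\vecname{target}(i)=\pos(j)$ with $\vecname{target}(i)=\pos(i)-m$ only moves information one hop of length~$m$: after one such layer, $\vecname{r}_t$ is correct only at positions $t$ and $t+m$; after two layers, at $t$, $t+m$, $t+2m$; and so on. Reaching all positions requires $\Theta(n/m)$ layers, not $O(m)$. A \BRASPpos{} program has a fixed number of operations independent of~$n$, so this ``straight-line unfolding'' is not available.

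The paper avoids propagation entirely by computing multiples of~$m$ \emph{directly}. At each position~$j$, form $m\cdot j$ by a position-wise sum $\pos(j)+\cdots+\pos(j)$ ($m$ times); call this $\vecname{sum}_m(j)$ (with a small correction to handle clipping at $n-1$). Then a single attention operation with score predicate $\pos(i)=\vecname{sum}_m(j)$ tests whether $i$ is a multiple of~$m$, in one layer and for all~$i$ simultaneously. Once you know which positions are multiples of~$m$, the residue at~$i$ is read off by checking which of $i,i-1,\ldots,i-(m-1)$ is a multiple, a bounded position-wise conditional. This replaces your linear-depth forward propagation by a constant-depth ``existence of a preimage under $j\mapsto mj$'' test, which is exactly the kind of global check that equality-matched attention can do in one shot.
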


\begin{proof}
For concreteness, we give a program for the case $m = 3$, which is easily generalized.
The second line deals with clipping to $n-1$.
\begin{raspcode}
\begin{align*}
    \vecname{sum3}(i) &= \pos(i) + \pos(i) + \pos(i) \\
    \vecname{sum3c}(i) &= \vecname{sum3}(i) \iftext \vecname{sum3}(i) = \vecname{sum3}(i-1) + 3 \\
    &\phantom{={}} \elsetext 0 \\
    \vecname{mult3}(i) &= \attldefault{j}{\true}{\pos(i) = \vecname{sum3c}(j)}{\true}{\false} \\
    \outputy(i) &= 0 \iftext \vecname{mult3}(i) \elsetext 1 \iftext \vecname{mult3}(i-1) \\
    &\phantom{={}} \elsetext 2 \tag*{\qedhere}
\end{align*}
\end{raspcode}
\end{proof}

\subsection{Expressivity}

\begin{theorem} \label{thm:aregular_to_brasppos}
\BRASPpos{} programs with \compressed{} outputs can compute all \FO{} regular transductions.
\end{theorem}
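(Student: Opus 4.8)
The plan is to use the generator-based definition of the \FO{} regular transductions (\cref{def:aperiodic-transduction-classes}): every such transduction is a finite composition of \FO{} sequential transductions together with \mapreverse{} and \mapduplicate{}. Accordingly I would (i) exhibit each generator as a \BRASPpos{} program with \compressed{} outputs, and (ii) prove that this class is closed under left-composition with each generator, then finish by induction on the number of generators. For (i), \mapreverse{} and \mapduplicate{} are already given in \cref{ex:mapreverse_rasp,ex:mapduplicate}, and every \FO{} sequential transduction is handled by \cref{thm:flipflop_to_brasp}: that construction only appends operations after the inner program and never inspects its internals, so it applies verbatim when the inner program is a \BRASPpos{} (rather than \BRASP{}) program. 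This simultaneously shows that the class is closed under left-composition with \FO{} sequential transductions.

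The remaining and central task is to left-compose \mapreverse{} or \mapduplicate{} with an arbitrary \BRASPpos{} program $\progname{P}$ that produces a \compressed{} output $u = f(w)$ in a vector $\vecname{z}$. The key design decision is to maintain a \emph{uniform} \compressed{} representation: every block holds exactly $k$ slots, where a distinguished placeholder $\square$ marks an empty slot, and $u$ is recovered by deleting all $\square$'s. This invariant is what makes the construction possible in \BRASPpos{}, which (unlike \SRASP{}) has no prefix sum: a \emph{dense} re-packing would require counting symbols across blocks, but a uniform layout needs none, since the symbol in slot $\ell$ of block $i$ always occupies the virtual position $ik+\ell$. Normalizing an arbitrary \compressed{} output to uniform form is merely a position-wise padding of each block's own string to length $k$ with $\square$, and, crucially, $\square$ is transparent to both reversal and duplication (deleting $\square$ commutes with reversing or doubling a block of slots). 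I would then simulate the generator's program over the $kn$ virtual positions, representing each of its length-$|u|$ vectors as a length-$n$ \BRASPpos{} vector of $k$-tuples, one component per slot; position-wise operations act componentwise on the tuples, and each attention operation, whose \direction{} picks the globally leftmost or rightmost virtual position meeting its mask and score, is simulated by $O(k)$ real \BRASPpos{} attentions (one per source slot), whose outcomes are combined position-wise to respect the order induced by $ik+\ell$.

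The main obstacle is the range mismatch: a \BRASPpos{} integer lies in $[n]$, whereas the virtual positions and integer vectors of the simulated program range over $[kn]$. I would resolve this by encoding each value $v \in [kn]$ as the pair $(\lfloor v/k\rfloor, v\bmod k)$, storing the quotient as a \BRASPpos{} integer in $[n]$ and the remainder as a symbol; clipped addition and subtraction then reduce to block arithmetic with a bounded carry, and equality to a conjunction of a quotient (integer) equality and a remainder (symbol) equality. The delicate point is that such a conjunction is not itself a legal attention predicate, since integer predicates are restricted to bare equalities $V_1(i)=V_2(j)$. Here the restriction to the specific generators pays off: the only integer score in \mapreverse{} and \mapduplicate{} compares a computed source index against a virtual position $\pos{}(j)$, and in the uniform layout the remainder of $\pos{}$ at the virtual position in slot $m$ is exactly $m$, which is constant within the $m$-th of the $O(k)$ sub-attentions. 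The remainder test therefore becomes a purely target-side, position-wise guard, leaving only the quotient equality as a genuine cross-position integer predicate, which is legal. Collecting the resulting output slots into uniform \compressed{} blocks (of a constant bound, at most doubled for \mapduplicate{}) yields the composed \BRASPpos{} program, and induction over the generators completes the proof.
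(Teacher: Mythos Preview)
Your overall plan—induction on the generator decomposition of \cref{def:aperiodic-transduction-classes}, reusing \cref{thm:flipflop_to_brasp} for the \FO{}-sequential steps—matches the paper's. The substantive difference is in how you close under $\mapreverse$ and $\mapduplicate$. The paper never introduces a virtual-position layer: it works directly on the arbitrary \compressed{} output $\vecname{z}$, parses each bounded-length $\vecname{z}(i)$ position-wise into $\mathit{head}\cdot\mathit{body}\cdot\mathit{tail}$ around the separators it contains, and reassembles the composed output from these pieces together with lookups of the $\mathit{head}$/$\mathit{tail}$ of the nearest separator-bearing neighbors; all indices are real positions in $[n]$, so no quotient/remainder encoding is needed, and the two generator programs are written from scratch rather than by simulating \cref{ex:mapreverse_rasp,ex:mapduplicate}. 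Your route—uniformly pad to width $k$ with a placeholder $\square$, then simulate those example programs over $kn$ virtual positions via the $(\lfloor v/k\rfloor, v\bmod k)$ encoding—is more systematic and gives a clean account of why the restricted integer-equality score suffices here; the paper's route is more ad hoc but avoids the encoding layer entirely and keeps all arithmetic in $[n]$.

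One detail to tighten: your transparency claim holds within each segment between separators, but the programs of \cref{ex:mapreverse_rasp,ex:mapduplicate} are only specified for inputs that begin and end with $\rechar{|}$, and uniform padding can place $\square$'s before the first or after the last $\rechar{|}$ in the virtual string. On such input those programs do not leave the stray positions as $\square$ (for instance, on a virtual input with leading $\square$'s, the $\vecname{src}$ computation of \cref{ex:mapreverse_rasp} sends the first virtual position to the first $\rechar{|}$, not to $\square$). This is easily patched—overwrite with $\square$ every virtual position that has no $\rechar{|}$ weakly to its left or to its right—but your sketch should say so.
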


\begin{proof}
If $f$ is an \FO{} regular transduction, then by \cref{def:aperiodic-transduction-classes}, it can be decomposed into a composition of transductions, each of which is (a) \FO{} sequential, (b) $\mapreverse$, or (c) $\mapduplicate$. We convert $f$ to a \BRASPpos{} program by induction on the number of functions in the composition. Case (a) is the same as the proof of \cref{thm:flipflop_to_brasp}, \emph{mutatis mutandis}.
The following two lemmas handle the other two cases.
\end{proof}

\begin{lemma} \label{thm:compose_brasp_mapreverse}
If $\progname{P}$ is a \BRASPpos{} program with \compressed{} outputs, then there is a \BRASPpos{} program with \compressed{} outputs that computes $\mathord{\mapreverse} \circ \progname{P}$.
\end{lemma}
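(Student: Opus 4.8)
The plan is to append operations to $\progname{P}$ that transform its \compressed{} output vector $\vecname{z}$ (with $\mathrm{concat}(\vecname{z}) = u$) into a \compressed{} representation of $\mapreverse(u)$, reusing the idea behind the uncompressed program in \cref{ex:mapreverse_rasp}. The only real difficulty is compression: a single vector position now holds a block of up to $k$ symbols, so vector indices (which range over $[n]$) no longer coincide with positions of the string $u$ (which range up to $kn$). Since \BRASPpos{} has no prefix-sum operator and its integer type is confined to $[n]$, the absolute string offset of a vector position is neither representable nor computable. The whole construction must therefore be carried out at the granularity of \emph{vector positions} and \emph{bounded within-position fragments}, never referring to absolute offsets.

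The key observation that makes this possible is the identity $(x_0 x_1 \cdots x_m)^\rev = x_m^\rev \cdots x_1^\rev x_0^\rev$: to reverse a marker-free block it suffices to reverse the \emph{order} of its pieces and to reverse each piece individually, and neither operation needs offsets. Concretely, I would first use a single position-wise operation to compute, for each $i$, the string $\vecname{z}(i)$ with \emph{every complete marker-free block lying entirely inside $\vecname{z}(i)$ reversed in place} (a fixed finite function of its $\le k$ symbols), together with three auxiliary bounded vectors: a Boolean flag for whether $\vecname{z}(i)$ contains a marker, the prefix fragment of $\vecname{z}(i)$ (symbols before its first marker), and the suffix fragment (symbols after its last marker). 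Complete interior blocks are now final; only the prefix and suffix fragments, and the entirely marker-free positions, belong to blocks spanning several vector positions.

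Next I handle the spanning blocks using attention exactly as in \cref{ex:mapreverse_rasp}, but at position granularity. For each marker-free position $c$, leftmost/rightmost attention with past/future masking locates the nearest marker-bearing positions $\ell$ and $r$ on either side; these delimit the spanning block, whose marker-free positions are exactly $\ell{+}1,\dots,r{-}1$, and the mirror formula $\vecname{src}(c) = \ell + r - c$ (an integer in $[n]$) names the position whose reversed content belongs at $c$, retrieved by an equality-keyed attention $\pos(j) = \vecname{src}(c)$. For a marker-bearing position $i$, its prefix fragment is the right end of the block starting at the nearest marker-bearing position $\ell'$ to its left, whose matching piece is the \emph{suffix} fragment of $\ell'$; I emit the reverse of that fragment in $i$'s prefix slot, and symmetrically replace $i$'s suffix fragment by the reverse of the \emph{prefix} fragment of the nearest marker-bearing position $r'$ to its right. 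Each output position is assembled by concatenating its (reversed) swapped prefix fragment, its already-locally-reversed interior with markers in place, and its (reversed) swapped suffix fragment; every piece has length $O(k)$, so the result is $O(k)$-\compressed{}. Boundary conditions (no marker to one side, leading/trailing markers, a position carrying several markers, and the degenerate marker-free input where the whole string is one block and the construction reduces to \reverseinput{}) are absorbed by the attention default values, as in the example.

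The step I expect to be the main obstacle is verifying that this reshuffling yields a \emph{correct} \compressed{} representation while \textbf{never preserving per-position lengths}: the reversed piece routed to a position generally differs in length from the piece originally there, so one must check that the concatenation over all positions is nonetheless exactly $\mapreverse(u)$. The verification rests on the piece-reversal identity together with the fact that $\mapreverse$ preserves each block's length (hence the markers' string positions), so that routing whole pieces to their mirror slots and swapping the two boundary fragments of every block reproduces $\cdots C_j^\rev \cdots$ block by block; a short induction over the blocks of $u$, after a careful accounting of which block each fragment and each marker-free position belongs to, completes the argument.
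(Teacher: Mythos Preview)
Your proposal is correct and follows essentially the same construction as the paper: split each \compressed{} symbol into prefix/body/suffix around its markers, apply $\mapreverse$ position-wise to the body, swap each marker-bearing position's prefix (resp.\ suffix) with the reversed suffix (resp.\ prefix) of its nearest marker-bearing neighbor, and for marker-free positions use the mirror index $\ell+r-c$ to fetch and reverse the corresponding block. The paper's proof is terser about correctness and boundary cases, but the mechanism is identical.
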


\begin{proof}
We'd like to compose $\progname{P}$ with the program of \cref{ex:mapreverse_rasp}, but since $\progname{P}$ uses \compressed{} outputs, we must adapt \cref{ex:mapreverse_rasp} to use \compressed{} inputs.
Define the functions $\transname{head}$, $\transname{body}$ and $\transname{tail}$ as follows. If $w$ does not contain the separator ($|$), then $\mathit{head}(w) = \mathit{tail}(w) = w$ and $\mathit{body}(w) = \varepsilon$.  Otherwise, factor $w$ as $xyz$, where $x$ is the prefix of $w$ before the first separator and $z$ is the suffix of $w$ after the last separator, and $\mathit{head}(w) = x$, $\mathit{body}(w) = y$, and $\mathit{tail}(w) = z$.
Position-wise operations allow the application of these functions, as well as $\mapreverse$ itself and the test of whether a string contains a symbol ($w \mapsto (a \in w)$), to bounded-length strings.
Modify $\progname{P}$ so that its output vector is a fresh vector $\vecname{z}$ instead of $\outputy$. Then append the following operations to $\progname{P}$:

\begin{raspcode}
\begin{align*}
\vecname{prev}(i) &= \attrdefault{j}{j<i}{\rechar{|} \in \vecname{z}(j)}{\pos(j)}{n-1} \\
\vecname{next}(i) &= \attldefault{j}{j>i}{\rechar{|} \in \vecname{z}(j)}{\pos(j)}{0} \\
\vecname{head}(i) &= \transname{head}(\vecname{z}(i)) \\
\vecname{body}(i) &= \transname{body}(\vecname{z}(i)) \\
\vecname{tail}(i) &= \transname{tail}(\vecname{z}(i)) \\
\vecname{nosep}(i) &= \vecname{z}(\vecname{next}(i) - (\pos(i)-\vecname{prev}(i)))^\rev \\
\vecname{ptail}(i) &= \vecname{tail}(\vecname{prev}(i))^\rev \\
\vecname{rbody}(i) &= \mapreverse(\vecname{body}(i)) \\
\vecname{nhead}(i) &= \vecname{head}(\vecname{next}(i))^\rev \\
\vecname{sep}(i) &= \vecname{ptail}(i) \conc \vecname{rbody}(i) \conc \vecname{nhead}(i) \\
\outputy(i) &= \vecname{sep}(i) \iftext \rechar{|} \in \vecname{z}(i) \elsetext \vecname{nosep}(i) \qedhere
\end{align*}
\end{raspcode}

To see how this works, consider a \compressed{} symbol $\vecname{z}(i)$. 
If it contains at least one separator, it is parsed into $\mathit{head}$, $\mathit{body}$, and $\mathit{tail}$ as $xyz$.  The correct output for position $i$ is computed in $\vecname{sep}(i)$, and consists of replacing $x$ by the reverse of the tail of the closest left neighbor that has a separator, replacing $y$ with $\mapreverse(y)$, and replacing $z$ by the reverse of the head of the closest right neighbor that has a separator.
If $\vecname{z}(i)$ contains no separator, then it appears in a maximal subsequence $w_0, w_1, \ldots, w_{k-1}$ with no separator, say as $w_{\ell}$, and the correct output for position $i$ is computed in $\vecname{nosep}$, and consists of the reverse of $w_{k-1-\ell}$.
\end{proof}

\begin{lemma}
If $\progname{P}$ is a \BRASPpos{} program with \compressed{} outputs, then there is a \BRASPpos{} program with \compressed{} outputs that computes $\mathord{\mapduplicate} \circ \progname{P}$.
\end{lemma}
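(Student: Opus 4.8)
The plan is to mirror the structure of the proof of \cref{thm:compose_brasp_mapreverse}. First I would modify $\progname{P}$ so that it writes its output into a fresh \compressed{} vector $\vecname{z}$ instead of $\outputy$, and then adapt the length-preserving program of \cref{ex:mapduplicate} to operate on \compressed{} inputs. Reusing the $\mapreverse$ construction, I would recompute the positions $\vecname{prev}(i)$ and $\vecname{next}(i)$ of the nearest left and right neighbors whose \compressed{} symbol contains a separator, via the test $\rechar{|} \in \vecname{z}(j)$, and I would reuse the bounded-string functions $\transname{head}$, $\transname{body}$, and $\transname{tail}$ to split each separator-containing symbol as $xyz$. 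Since $\mapduplicate$ on a bounded-length string is itself a bounded position-wise operation, it can be applied directly to the body $y$, which consists of complete inner segments.

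The core of the construction is the treatment of positions whose \compressed{} symbol contains no separator: such a position lies strictly inside one segment, whose content is $\transname{tail}(\vecname{z}(\vecname{prev}(i)))$ followed by the full symbols at $\vecname{prev}(i)+1, \ldots, \vecname{next}(i)-1$ and then $\transname{head}(\vecname{z}(\vecname{next}(i)))$. The key observation is that the doubling arithmetic of \cref{ex:mapduplicate} is correct verbatim at the granularity of \compressed{} symbols rather than single symbols: if I compute $\vecname{nowrap}$, $\vecname{wrap}$, $\vecname{src1}$, and $\vecname{src2}$ relative to $\vecname{prev}(i)$ and $\vecname{next}(i)$ as before, then the two indices selected at the $o$-th interior position are the output-chunk indices $2o \bmod L$ and $(2o+1) \bmod L$, where $L$ is the number of content chunks of the segment. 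Concatenating $\vecname{z}(\vecname{src1}(i)) \conc \vecname{z}(\vecname{src2}(i))$ over $o = 0, \ldots, L-1$ therefore lays out the sequence of content chunks twice, reproducing $yy$. I would verify this by the same index computation as in \cref{ex:mapduplicate}, now read as indexing whole chunks.

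The main obstacle, and where the bookkeeping is most delicate, is the interaction between this chunk-level doubling and the fragment chunks at the two ends of a segment, which are the analog of the $\vecname{ptail}$ and $\vecname{nhead}$ terms in \cref{thm:compose_brasp_mapreverse}. Because the first and last content units of a segment are $\transname{tail}(\vecname{z}(\vecname{prev}(i)))$ and $\transname{head}(\vecname{z}(\vecname{next}(i)))$ rather than full symbols, I must (a) count these fragment units when forming the segment length $L$ so that the modular reduction in $\vecname{src1}$ and $\vecname{src2}$ is correct, (b) treat a source index landing on $\vecname{prev}(i)$ or $\vecname{next}(i)$ specially, fetching the tail or head fragment in place of the whole symbol, and (c) arrange that the separator-containing positions $\vecname{prev}(i)$ and $\vecname{next}(i)$ themselves emit the doubled head fragment $x$ and tail fragment $z$, stitched to the segments they border. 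Assembling these cases — the doubled body, the doubled boundary fragments joined to the adjacent segments, and the passthrough of separators — into a single position-wise definition of $\outputy(i)$ is the bulk of the work. Finally, I would check that every output \compressed{} symbol has length at most a constant multiple of $k$, since each is a concatenation of boundedly many head/body/tail pieces and at most two fetched chunks, so the result is a valid \compressed{} representation, completing the induction of \cref{thm:aregular_to_brasppos}.
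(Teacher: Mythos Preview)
Your proposal is correct and follows essentially the same route as the paper. Both arguments rewrite $\progname{P}$ to output into a fresh vector $\vecname{z}$, reuse $\vecname{prev}$, $\vecname{next}$, $\vecname{head}$, $\vecname{body}$, $\vecname{tail}$ from the $\mapreverse$ lemma, apply $\mapduplicate$ position-wise to the bounded-length body at separator positions, and carry out the doubling of \cref{ex:mapduplicate} at the granularity of packed chunks for interior positions, fetching $\transname{tail}$ or $\transname{head}$ when a source index lands on a boundary.

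Two small points of precision. First, the arithmetic of \cref{ex:mapduplicate} is not usable ``verbatim'': in the unpacked case the content lies strictly between $\vecname{prev}$ and $\vecname{next}$, whereas here the two boundary fragments sit \emph{at} $\vecname{prev}$ and $\vecname{next}$, so the paper shifts the formulas (e.g., $\vecname{nowrap}(i) = i + (i - \vecname{prev}(i))$ rather than $i + (i - \vecname{prev}(i) - 1)$), introduces a midpoint predicate $\vecname{half}(i)$ in place of the $\vecname{nowrap}(i) < \vecname{next}(i)$ test, and wraps $\vecname{src2}$ back to $\vecname{prev}(i)$ rather than to $\vecname{wrap}(i)+1$. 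Your items (a)--(c) already anticipate that such adjustments are needed, so this is execution rather than a gap. Second, your description in (c) of the separator position ``emitting the doubled head fragment $x$ and tail fragment $z$'' is slightly misleading: what the paper actually outputs there is $\transname{tail}(\vecname{z}(i{-}1)) \conc x \conc \mapduplicate(y) \conc z \conc \transname{head}(\vecname{z}(i{+}1))$, i.e., the separator position contributes the last two chunks of the left segment's duplication and the first two of the right segment's, sandwiching the duplicated body---not a literal doubling of $x$ or $z$ in place.
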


\begin{proof}
As in the proof of \cref{thm:compose_brasp_mapreverse}, we want to compose $\progname{P}$ with the program in \cref{ex:mapduplicate}, so we adapt \cref{ex:mapduplicate} to use \compressed{} inputs.
Modify $\progname{P}$ so that its output vector is a fresh vector $\vecname{z}$ instead of $\outputy$. First append the following operations to $\progname{P}$ (where $\vecname{prev}$, $\vecname{next}$, $\vecname{head}$, $\vecname{body}$, and $\vecname{tail}$ are as in the proof of \cref{thm:compose_brasp_mapreverse}):

\begin{raspcode}
\begin{align*}
    \vecname{ptail}(i) &= (\vecname{tail}(i-1) \iftext i>0 \elsetext \charepsilon) \conc {} \vecname{head}(i) \\
    \vecname{nhead}(i) &= \vecname{tail}(i) \conc (\vecname{head}(i+1) \iftext i<n-1 \elsetext \charepsilon) \\
    \vecname{dbody}(i) &= \mapduplicate(\vecname{body}(i)) \\
    \vecname{sep}(i) &= \vecname{ptail}(i) \conc \vecname{dbody}(i) \conc \vecname{nhead}(i)
\end{align*}
\end{raspcode}

This computes in the vector $\vecname{sep}$ the correct outputs for those positions $i$ that have a separator in the input symbol $\vecname{z}(i)$.
The symbol is parsed into $\mathit{head}$, $\mathit{body}$ and $\mathit{tail}$ as $xyz$, and the correct output is the concatenation of the $\mathit{tail}$ of the preceding symbol,
the strings $x$, $\mapduplicate(y)$, $z$, and the $\mathit{head}$ of the the following symbol.
Note that $\mapduplicate$ is applied only to strings of bounded length.

The outputs for positions $i$ where $\vecname{z}(i)$ does not contain a separator are computed in the vector $\vecname{nosep}$ and combined with the values in $\vecname{sep}$ to produce the final output by the following operations.
\begin{raspcode}
\begin{align*}
    \vecname{nowrap}(i) &= \pos(i) + (\pos(i)-\vecname{prev}(i)) \\
    \vecname{wrap}(i) &= \pos(i) - (\vecname{next}(i)-\pos(i)) + 1) \\
    \vecname{half}(i) &= (\pos(i) - \vecname{prev}(i)) \le (\vecname{next}(i) - \pos(i))\\
    \vecname{src1}(i) &= \vecname{nowrap}(i) \iftext \vecname{half}(i) \elsetext \vecname{wrap}(i) \\
    \vecname{src2}(i) &= \vecname{src1}(i)+1 \iftext \vecname{src1}(i) < \vecname{next}(i)\\
                      &\phantom{={}} \elsetext \vecname{prev}(i) \\
    \vecname{sym1}(i) &= \vecname{tail}(\vecname{src1}(i)) \iftext \vecname{src1}(i) < \vecname{next}(i) \\
                      &\phantom{={}} \elsetext \vecname{head}(\vecname{src1}(i)) \\
    \vecname{sym2}(i) &= \vecname{tail}(\vecname{src2}(i)) \iftext \vecname{src2}(i) < \vecname{next}(i) \\
                      &\phantom{={}} \elsetext \vecname{head}(\vecname{src2}(i)) \\
    \vecname{nosep}(i) &= \vecname{sym1}(i) \conc \vecname{sym2}(i) \\
    \outputy(i) &= \vecname{sep}(i) \iftext \rechar{|} \in \vecname{z}(i) \elsetext \vecname{nosep}(i)
\end{align*}
\end{raspcode}
If the input symbol does not contain a separator, it is the concatenation of the symbols from $\vecname{sym1}$ and $\vecname{sym2}$, whose positions are calculated using the vectors $\vecname{nowrap}$ and $\vecname{wrap}$ in a manner similar to \cref{ex:mapduplicate}, but also including the $\mathit{tail}$ of the closest symbol on the left with a separator, and the $\mathit{head}$ of the closest symbol on the right with a separator.
\end{proof}

On the other hand, every operation in $\BRASPpos$ is computable by a family of $\AC^0$ circuits, that is, a family of Boolean circuits of constant depth and polynomial size \cite{hao-etal-2022-circuits}, which implies that any transduction computable in $\BRASPpos$ is computable in $\AC^0$.

\section{$\SRASP$ and average hard attention transformers}
\label{sec:SRASP}

\subsection{Definition}
We further extend $\BRASPpos$ to \emph{RASP with prefix sum} (or $\SRASP$) by adding a prefix sum operation.
\begin{definition}[Prefix sum]
\label{def:prefix-sum} A prefix sum operation has the form
    \[P(i) = \attsum{j}{j \le i}{V(j)}\]
    where $V(j)$ is an integer expression with $\fv(V(j)) \subseteq \{j\}$.
    It defines an integer vector $P(i)$ containing the sum of the values $V(j)$ for those positions $j$ such that $j \le i$. As with arithmetic operations, if the value of the prefix sum at a position is greater than $n-1$, it is replaced with $n-1$.
\end{definition}

\subsection{Padded inputs}

We defined non-length-preserving transductions for $\BRASP$ and $\BRASPpos$ by employing the convention of \compressed{} outputs.
However, for $\SRASP$, we introduce a simpler scheme: using only symbol, not string, vectors, while assuming that the input string is followed by padding symbols ${\pad}$, enough to accommodate the output string.

The input vector is $a_0 a_1 \cdots a_{\ell-1} \pad^{n-\ell}$, where $\ell<n$ and $a_i \in \Sigma$ for $i \in [\ell]$. The output vector, similarly, is $b_0 b_1 \cdots b_{k-1} \pad^{n-k}$, where $k<n$ and $b_i \in \Gamma$ for $i \in [k]$.

With this input/output convention, padding symbols may be necessary to create enough positions to hold the output \word{}.
But in order to be able to prove closure under composition for transductions computable in $\BRASP$ and its extensions, we allow additional padding symbols
to be required.
In particular, the program $\progname{P}$ \emph{computes the transduction} $f$ iff there exists a nondecreasing 
function
$q$, called the \emph{minimum vector length}, such that for every input \word{} $w \in \Sigma^{\ell}$, we have $q(\ell) \ge k = |f(w)|$, and if $\progname{P}$ is run on $ w\cdot\pad^{n-\ell}$, where $n > q(\ell)$, then the output is $f(w) \cdot \pad^{n-k}$.
In all of the examples in this section, except $\markedsquare$, $q$ is linear.

We could have used padded inputs with $\BRASP$ programs, but it can be shown that programs would only be able to map input strings of length $n$ to output strings of length at most $n+k$, for some constant $k$. \Compressed{} outputs give $\BRASP$ the ability to define transductions with longer outputs, like string homomorphisms.
However, the situation is exactly opposite with $\SRASP$. \Compressed{} outputs do not add any power to $\SRASP$, because ``unpacking'' a \compressed{} output into a vector of output symbols can be computed within $\SRASP$ (\cref{lemma:SRASP-computes-homomorphisms}).
Moreover, \compressed{} outputs only allow transductions with linear growth, and, as we will see, $\SRASP$ can define transductions with superlinear growth (\cref{ex:marked-squaring}).

\subsection{Properties}

\begin{lemma}
    \label{lemma:closure-under-composition-srasp}
    If $f_1\colon \Sigma_1^* \rightarrow \Sigma_2^*$ and $f_2\colon \Sigma_2^* \rightarrow \Sigma_3^*$ are computable in $\SRASP$, then their composition $f_2 \circ f_1 \colon \Sigma_1^* \rightarrow \Sigma_3^*$ is computable in $\SRASP$.
\end{lemma}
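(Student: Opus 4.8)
The plan is to prove closure under composition directly at the level of $\SRASP$ programs, exploiting the padded-input/padded-output convention. Suppose $\progname{P}_1$ computes $f_1$ with minimum vector length $q_1$, and $\progname{P}_2$ computes $f_2$ with minimum vector length $q_2$. The idea is to run $\progname{P}_1$ first, producing an output vector encoding $f_1(w)\cdot\pad^{n-k_1}$ where $k_1 = |f_1(w)|$, and then run $\progname{P}_2$ on that intermediate string. The key structural fact that makes this work cleanly is that both conventions use plain symbol vectors of a shared length $n$: the output of $\progname{P}_1$ is already in exactly the format $\progname{P}_2$ expects as input, namely a string over $\Sigma_2$ followed by padding symbols. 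So the composed program is, at the level of vector definitions, simply the concatenation of the definition lists of $\progname{P}_1$ and $\progname{P}_2$, with the output vector of $\progname{P}_1$ renamed to a fresh vector that $\progname{P}_2$ reads in place of its own $\inputx$.

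The main obstacle is ensuring there are \emph{enough positions}, i.e.\ that the single shared vector length $n$ is simultaneously large enough for $\progname{P}_1$'s run and for $\progname{P}_2$'s run on the intermediate string. Here is where the minimum-vector-length functions are essential. Running $\progname{P}_2$ on input $f_1(w)$ of length $k_1$ requires $n > q_2(k_1)$. Since $\progname{P}_1$ guarantees correctness only when $n > q_1(\ell)$ and produces an output of length $k_1 \le q_1(\ell)$, I would define the minimum vector length of the composition by $q(\ell) = q_2(q_1(\ell))$, using that $q_2$ is nondecreasing so that $q_2(k_1)\le q_2(q_1(\ell))$. Then any $n > q(\ell)$ is simultaneously large enough that $\progname{P}_1$ behaves correctly (since $n > q_2(q_1(\ell)) \ge q_1(\ell)$, using $q_2(m)\ge m$, which we may assume or arrange since the vector length must accommodate the output) and large enough that $\progname{P}_2$ behaves correctly on the length-$k_1$ intermediate string. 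One subtlety to check is that $q_1(\ell) < n$ is within the integer range $[n]$, so that all of $\progname{P}_1$'s arithmetic and prefix-sum clipping behaves as it would on a standalone run; this holds precisely because we chose $n$ strictly larger than $q(\ell)\ge q_1(\ell)$.

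Next I would verify that $\progname{P}_2$, when its $\inputx$ is replaced by the renamed output vector of $\progname{P}_1$, computes exactly as it would on a standalone padded input. This requires that $\progname{P}_2$ only ever reads $\inputx$ through the allowed syntactic forms (position-wise expressions, attention values/predicates, and defaults), all of which refer to the input vector symbol-by-symbol; renaming $\inputx$ to a previously-defined symbol vector is therefore semantics-preserving. I would also note that $\progname{P}_2$'s internal vector names must be made disjoint from $\progname{P}_1$'s by alpha-renaming, which is routine. The prefix-sum and arithmetic operations of $\progname{P}_2$ clip against the \emph{same} $n$, which is exactly the behavior of a standalone run at that vector length, so no discrepancy arises.

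Putting these together, on input $w$ of length $\ell$ with $n > q(\ell)$, the combined program first reconstructs $f_1(w)\cdot\pad^{n-k_1}$ in the renamed vector, then $\progname{P}_2$ reads this and outputs $f_2(f_1(w))\cdot\pad^{n-k_2}$ where $k_2 = |f_2(f_1(w))| = |(f_2\circ f_1)(w)|$; and $q(\ell)=q_2(q_1(\ell)) \ge q_2(k_1) \ge k_2$, so $q$ is a valid minimum vector length for the composition. Hence $f_2\circ f_1$ is computable in $\SRASP$. I expect the bookkeeping around the minimum-vector-length bound to be the genuinely load-bearing part of the argument, while the syntactic combination of the two programs is straightforward.
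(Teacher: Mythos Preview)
Your proposal is correct and takes essentially the same approach as the paper: concatenate $\progname{P}_1$ and $\progname{P}_2$ with the intermediate output vector renamed, and pick a sufficient minimum vector length for the composite. The only difference is that the paper sets $q_3(\ell)\ge\max(q_1(\ell),q_2(q_1(\ell)))$ directly, which sidesteps your side assumption $q_2(m)\ge m$; note that your stated justification for that inequality (``the vector length must accommodate the output'') only yields $q_2(m)\ge|f_2(v)|$, not $q_2(m)\ge m$, though as you say one may harmlessly replace $q_2$ by $\max(q_2,\mathrm{id})$.
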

\begin{proof}
    Let the $\SRASP$ program $\progname{P}_i$ compute the transduction $f_i$ with minimum vector length $q_i$ for $i = 1,2$. Let $\progname{P}_3$ be the $\SRASP$ program that consists of the operations of $\progname{P}_1$ followed by the operations of $\progname{P}_2$, where $\progname{P}_1$ is modified to output a fresh vector $\vecname{z}$ (instead of $\outputy$) and $\progname{P}_2$ is modified to input vector $\vecname{z}$ (instead of $\inputx$).
    We can choose a nondecreasing 
    function
    $q_3$ such that $q_3(\ell) \ge \max(q_1(\ell), q_2(q_1(\ell)))$, so that $q_3$ as a minimum vector length ensures that 
    $\progname{P}_3$ correctly computes $f_2 \circ f_1$.
\end{proof}

\begin{lemma}
    \label{lemma:SRASP-computes-homomorphisms}
    For any string homomorphism $h: \Sigma^* \rightarrow \Gamma^*$ there exists an $\SRASP$ program to compute $h$, with minimum vector length $q(\ell) = K\ell$, where $K$ is the maximum of $|h(\sigma)|$ over $\sigma \in \Sigma$.
\end{lemma}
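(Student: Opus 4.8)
The plan is to turn the length expansion into an addressing problem solved by prefix sum and equality attention. Write $w=a_0\cdots a_{\ell-1}$ and $k=|h(w)|$. Because $h$ is a homomorphism over a finite alphabet, the per-symbol output length $\vecname{len}(j)=|h(\inputx(j))|$ is a bounded, purely position-wise integer quantity (with $\vecname{len}(j)=0$ when $\inputx(j)=\pad$). First I would compute the block offsets with a single prefix sum (\cref{def:prefix-sum}): set $\vecname{end}(i)=\attsum{j}{j\le i}{\vecname{len}(j)}$ and then $\vecname{start}(i)=\vecname{end}(i)-\vecname{len}(i)$, so that $\vecname{start}(j)$ is exactly the output position at which the block $h(\inputx(j))$ begins, and $k=\vecname{end}(\ell-1)$.

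Next I would route each output position back to its source symbol. The difficulty is that $\SRASP$ attention may only compare integer vectors for \emph{equality}, so I cannot directly search for the block containing output position $i$ by an order test. To get around this I scatter information onto block-start positions using the bare equality predicate $\pos(i)=\vecname{start}(j)$, and I handle empty images and padding with a sentinel: define $\vecname{start}'(i)$ position-wise to equal $\vecname{start}(i)$ when $\vecname{len}(i)>0$ and $n-1$ otherwise, so only genuinely nonempty blocks can match a real output position. Then $\vecname{isstart}(i)=\attldefault{j}{\true}{\pos(i)=\vecname{start}'(j)}{\true}{\false}$ marks block starts, and $\vecname{sym}(i)=\attl{j}{\true}{\pos(i)=\vecname{start}'(j)}{\inputx(j)}$ records the source symbol there. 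The block containing $i$ is located by a \emph{masked} attention whose score depends only on the attended position, $\vecname{bstart}(i)=\attrdefault{j}{j\le i}{\vecname{isstart}(j)}{\pos(j)}{0}$; masking supplies the order test $j\le i$ while $\vecname{isstart}$ is Boolean in $j$ alone. A final equality gather $\vecname{srcsym}(i)=\attl{j}{\true}{\vecname{bstart}(i)=\pos(j)}{\vecname{sym}(j)}$ retrieves the symbol, and $\vecname{off}(i)=\pos(i)-\vecname{bstart}(i)$ is the within-block offset.

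Finally, since $0\le\vecname{off}(i)<K$ and $\Sigma$ is finite, the output is a bounded position-wise conditional reading off the $\vecname{off}(i)$-th character of $h(\vecname{srcsym}(i))$, emitting $\pad$ whenever $\pos(i)\ge k$ (where $k$ is obtained as the $\vecname{end}$ value at the rightmost non-$\pad$ position). Taking the minimum vector length $q(\ell)=K\ell\ge k$ makes the hypothesis $n>q(\ell)$ guarantee $k\le n-1$; this is exactly what ensures the prefix sum never clips a real offset and that the sentinel $n-1$ never collides with any output position $i<k$, so the equality matches behave as intended.

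The main obstacle, and the crux of the argument, is performing this length expansion under the equality-only restriction on integer comparisons in attention. The sentinel construction together with the masked Boolean search for $\vecname{bstart}$ is precisely what replaces the forbidden order comparison, and it is also the place where empty images $h(\sigma)=\varepsilon$ and trailing padding must be treated with care, since these produce zero-width blocks that share a start position with later blocks; the $\vecname{len}(i)>0$ guard folded into $\vecname{start}'$ is what keeps the leftmost equality match from selecting such a spurious block.
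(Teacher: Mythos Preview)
Your proof is correct, and the first half (computing \vecname{lens}, \vecname{ends}, \vecname{starts} via prefix sum) coincides with the paper's. The routing step, however, is genuinely different. The paper scatters the source symbol to each block start using \emph{rightmost} equality attention on $\pos(i)=\vecname{starts}(j)$ (the rightmost match automatically skips over zero-length blocks, playing the role your sentinel plays), and then produces $K$ shifted copies $\vecname{sym}_0,\ldots,\vecname{sym}_{K-1}$, where $\vecname{sym}_k(i)$ holds the source symbol if $i$ lies at offset $k$ in its block. The output is then a single position-wise disjunction over all pairs $(a,k)$ with $h(a)_k=\sigma$. You instead compute the block start $\vecname{bstart}(i)$ explicitly by a masked Boolean rightmost search on $\vecname{isstart}$, recover the offset arithmetically as $\pos(i)-\vecname{bstart}(i)$, and then do a single gather of the source symbol. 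Your version uses a constant number of attention operations independent of $K$, whereas the paper's uses $K{-}1$ shift layers; on the other hand, the paper avoids the sentinel bookkeeping and the extra indirection through $\vecname{srcsym}$ by exploiting rightmost attention directly. Both handle empty images correctly, just by different devices. One small point: your $\vecname{sym}$ attention needs a default (it is undefined at non-start positions); any default works since those values are never read through $\vecname{srcsym}$.
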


\begin{proof}
Number the symbols of $\Sigma$ as $\sigma_0, \ldots, \sigma_{m-1}$.
We use a position-wise operation to record in position $i$ the length of $h(\inputx(i))$.
\iffalse
\begin{raspcode}
\begin{align*}
\vecname{lens}(i) &= |h(\sigma_0)| \iftext \inputx(i) = \sigma_0 \\
&\phantom{={}} \quad \vdots \\
&\phantom{={}} \elsetext |h(\sigma_{m-1})| \iftext \inputx(i) = \sigma_{m-1} \\
&\phantom{={}} \elsetext 0
\end{align*}
\end{raspcode}
\fi
\begin{raspcode}
\begin{align*}
\vecname{lens}(i) = |h(\inputx(i))|
\end{align*}
\end{raspcode}
Then we determine the starting position of each $h(\inputx(i))$ in the output. 
\begin{raspcode}
\begin{align*}
\vecname{ends}(i) &= \attsum{j}{j \le i}{\vecname{lens}(j)} \\
\vecname{starts}(i) &= \vecname{ends}(i)-\vecname{lens}(i)
\end{align*}
\end{raspcode}
For $k \in [K]$, define $\vecname{sym}_k(i)$ such that if output position $i$ is to be the $k$-th symbol generated from input position $j$, then $\vecname{sym}_k(i) = \inputx(j)$:
\begin{raspcode}
\begin{align*}
\vecname{sym_0}(i) &= \attrdefault{j}{\true}{\pos(i)=\vecname{starts}(j)}{\inputx(j)}{\rechar{\pad}}\\
\vecname{sym_1}(i) &= \attrdefault{j}{j<i}{\true}{\vecname{sym_0}(j)}{\rechar{\pad}} \\
&\vdotswithin{=} \\
\vecname{sym_{K-1}}(i) &= \attrdefault{j}{j<i}{\true}{\vecname{sym_{K-2}}(j)}{\rechar{\pad}}
\end{align*}
\end{raspcode}
Finally, we can define the output vector:
\begin{raspcode}
\begin{align*}
\outputy(i) &= \sigma_0 \iftext \bigvee_{\substack{a \in \Sigma, k \in [K] \\ h(a)_k = \sigma_0}} \vecname{sym}_k(i) = a \\
&\phantom{={}} \quad \vdots \\
&\phantom{={}} \elsetext \sigma_{m-2} \iftext \bigvee_{\substack{a \in \Sigma, k \in [K] \\ h(a)_k = \sigma_{m-2}}} \vecname{sym}_k(i) = a \\
&\phantom{={}} \elsetext \sigma_{m-1}
\end{align*}
\end{raspcode}
An example is in \cref{ex:string_homomorphism}.
\end{proof}

\subsection{Examples and expressivity}

\begin{table}[t]
\begin{center}
\exampletablefontsize
\setlength{\tabcolsep}{5pt}
\begin{tabular}{@{}l|c|cccccc@{}} 
  \toprule
  \inputx           & input                 & A & B & B & C & $\pad$ & $\pad$ \\
  \pos              & $i$                   & $0$ & $1$ & $2$ & $3$ & $4$ & $5$ \\
  \vecname{lens}    & length of $h(\inputx(i))$    & $2$ & $0$ & $0$ & $3$ & $0$ & $0$ \\
  \vecname{ends}    & end of $h(\inputx(i))$    & $2$ & $2$ & $2$ & $5$ & $5$ & $5$ \\
  \vecname{starts}  & start of $h(\inputx(i))$     & $0$ & $2$ & $2$ & $2$ & $5$ & $5$ \\
  \vecname{sym0}    & mark start    & A & $\pad$ & C & $\pad$ & $\pad$ & $\pad$ \\
  \vecname{sym1}    & mark start+1    & $\pad$ & A & $\pad$ & C & $\pad$ & $\pad$ \\
  \vecname{sym2}    & mark start+2    & $\pad$ & $\pad$ & A & $\pad$ & C & $\pad$ \\
  \outputy    & output       & a & a & c & c & d & $\pad$ \\
  \bottomrule
 \end{tabular}
 \end{center}
 \caption{Example $\SRASP$ computation for a string homomorphism. Details in Ex.~\ref{ex:string_homomorphism}.}
 \label{tab:string_homomorphism}
\end{table}

\begin{example}[string homomorphisms]
\label{ex:string_homomorphism}
 Consider the homomorphism $\text{A} \mapsto \text{aa}$, $\text{B} \mapsto \varepsilon$, $\text{C} \mapsto \text{ccd}$.
     \begin{center}
         $\text{ABBC} \pad \pad \mapsto \text{aaccd} \pad$
     \end{center}
    \begin{raspcode}
    \begin{align*}
        \vecname{lens}(i) &= 2 \iftext \inputx(i)=\rechar{A} \\
                    &\hphantom{{}=} \elsetext 3 \iftext \inputx(i)=\rechar{C} \elsetext 0\\
        \vecname{ends}(i) &= \attsum{j}{j \le i}\vecname{lens}(j) \\
        \vecname{starts}(i) &= \vecname{ends}(i)-\vecname{lens}(i) \\
        \vecname{sym0}(i) &= \attrdefault{j}{\true}{\pos(i)=\vecname{starts}(j)}{\inputx(j)}{\rechar{\pad}}\\
        \vecname{sym1}(i) &= \attrdefault{j}{j<i}{\true}{\vecname{sym0}(j)}{\rechar{\pad}} \\
        \vecname{sym2}(i) &= \attrdefault{j}{j<i}{\true}{\vecname{sym1}(j)}{\rechar{\pad}} \\
        \outputy(i) &= 
                    \begin{aligned}[t]
                    &\rechar{a} \iftext \vecname{sym0}(i) = \rechar{A} \lor
                    \vecname{sym1}(i)=\rechar{A}\\ 
                    &  \elsetext \rechar{c} \iftext \vecname{sym0}(i)=\rechar{C} \lor
                    \vecname{sym1}(i)=\rechar{C}\\
                    & \elsetext \rechar{d} \iftext \vecname{sym2}(i)=\rechar{C}
                    \elsetext \rechar{\pad}
                    \end{aligned}
    \end{align*}
    \end{raspcode}
    An example run is in \cref{tab:string_homomorphism}.
\end{example}

\begin{example}[$\markedsquare$]
\label{ex:marked-squaring-rasp}
    Make $|w|$ many copies of $w$ separated by bars, with successively longer prefixes marked (here by uppercasing).
    \begin{center}
        $\text{abaa} \mapsto \text{\marker Abaa\marker  ABaa\marker ABAa\marker ABAA\marker}$
    \end{center}
This transduction is \FO{} polyregular but not regular. It has greater than linear growth, and is therefore not computable in $\BRASPpos$ with \compressed{} outputs. But it can be computed by the following $\SRASP$ program.
\begin{raspcode}
\begin{align*}
        \vecname{len}(i) &= \attl{j}{\true}{\inputx(j) = \rechar{\#}}{\pos(j)} \\
        \vecname{inpos}(i) &= \pos(i) < \vecname{len}(i) \\
        \vecname{glen}(i) &= \vecname{len}(i) + 1 \iftext \pos(i) > 0 \elsetext 0\\
        \vecname{mglen}(i) &= \attsum{j}{j \le i}{\vecname{glen}(j)} \\
        \vecname{starts}(i) &= \vecname{mglen}(i) \iftext \vecname{inpos}(i) \elsetext 0 \\
        \vecname{isstart}(i) &= \attldefault{j}{\true}{\pos(i) = \vecname{starts}(j)}{\true}{\false}\\
        \vecname{isstartnum}(i) &= 1 \iftext \vecname{isstart}(i) \elsetext 0 \\
        \vecname{gnumber}(i) &= \attsum{j}{j \le i}{\vecname{isstartnum}(j)} \\
        \vecname{gstart}(i) &= \attr{j}{j \le i}{\vecname{isstart}(j)} \pos(j) \\
        \vecname{src}(i) &= \pos(i) - \vecname{gstart}(i) - 1 \\
        \vecname{ismarked}(i) &= \vecname{src}(i) < \vecname{gnumber}(i) \\
        \vecname{y1}(i) &= \inputx(\vecname{src}(i))\\
        \vecname{y2}(i) &= \rechar{|} \iftext \vecname{isstart}(i) \\
        &\phantom{={}} \elsetext \mathit{mark}(\vecname{y1}(i)) \iftext \vecname{ismarked}(i) \\
        &\phantom{={}} \elsetext \vecname{y1}(i) \\
        \vecname{lastbar}(i) &= \attl{j}{\true}{\vecname{y2}(j) = \rechar{\pad}}{\pos(j)} \\
        \outputy(i) &= \rechar{|} \iftext \pos(i) = \vecname{lastbar}(i) \elsetext \vecname{y2}(i)
\end{align*}
\end{raspcode}
The finite function $\mathit{mark}$ changes the input symbol to uppercase.
An example run is in \cref{tab:marked-squaring}.
\end{example}
\begin{table*}[ht]
\begin{center}
\exampletablefontsize
\setlength{\tabcolsep}{5pt}
\begin{tabular}{@{}l|c|cccccccccccccc@{}} 
  \toprule
  \inputx           & input                             & a & a & b & $\pad$ & $\pad$ & $\pad$ & $\pad$ & $\pad$ & $\pad$ & $\pad$ & $\pad$ & $\pad$ & $\pad$ & $\pad$ \\
  \pos              & $i$                               & $0$ & $1$ & $2$ & $3$ & $4$ & $5$ & $6$ & $7$ & $8$ & $9$ & $10$ & $11$ & $12$ & $13$  \\
  \vecname{len}     & input length                      & $3$ & $3$ & $3$ & $3$ & $3$ & $3$ & $3$ & $3$ & $3$ & $3$ & $3$ & $3$ & $3$ & $3$  \\
  \vecname{inpos}   & $i$ in input?                        & $\top$ & $\top$ & $\top$ & $\bot$ & $\bot$ & $\bot$ & $\bot$ & $\bot$ & $\bot$ & $\bot$ & $\bot$ & $\bot$ & $\bot$ & $\bot$  \\
  \vecname{glen}     & $\ell_g$ = group length ($i > 0$)  & $0$ & $4$ & $4$ & $4$ & $4$ & $4$ & $4$ & $4$ & $4$ & $4$ & $4$ & $4$ & $4$ & $4$ \\
  \vecname{mglen}    & $\min(n-1,i \ell_{\vecname{g}})$            & $0$ & $4$ & $8$ & $12$ & $13$ & $13$ & $13$ & $13$ & $13$ & $13$ & $13$ & $13$ & $13$ & $13$  \\
  \vecname{starts}  & starts of groups                  & $0$ & $4$ & $8$ & $0$ & $0$ & $0$ & $0$ & $0$ & $0$ & $0$ & $0$ & $0$ & $0$ & $0$ \\
  \vecname{isstart} & is $i$ in starts?                 & $\top$ & $\bot$ & $\bot$ & $\bot$ & $\top$ & $\bot$ & $\bot$ & $\bot$ & $\top$ & $\bot$ & $\bot$ & $\bot$ & $\bot$ & $\bot$ \\
  \vecname{isstartnum} & \vecname{isstart} numeric      & $1$ & $0$ & $0$ & $0$ & $1$ & $0$ & $0$ & $0$ & $1$ & $0$ & $0$ & $0$ & $0$ & $0$ \\
  \vecname{gnumber} & group number    & $1$ & $1$ & $1$ & $1$ & $2$ & $2$ & $2$ & $2$ & $3$ & $3$ & $3$ & $3$ & $3$ & $3$ \\
  \vecname{gstart}  & start of $i$'s group              & $0$ & $0$ & $0$ & $0$ & $4$ & $4$ & $4$ & $4$ & $8$ & $8$ & $8$ & $8$ & $8$ & $8$ \\
  \vecname{src}      & $i - \vecname{gstart}(i) - 1$   & $0$ & $0$ & $1$ & $2$ & $0$ & $0$ & $1$ & $2$ & $0$ & $0$ & $1$ & $2$ & $3$ & $4$ \\
  \vecname{ismarked}  & is $i$ marked?                         &  $\top$ &  $\top$ & $\bot$ & $\bot$ &  $\top$ &  $\top$ & $\top$ & $\bot$ &  $\top$ &  $\top$ & $\top$ & $\top$  & $\bot$ & $\bot$ \\
  \vecname{y1}      &  letters moved             & a & a & a & b & a & a & a & b & a & a & a & b & $\pad$ & $\pad$ \\
  \vecname{y2}      &   mark and add initial $\rechar{|}$'s    & | & A & a & b & | & A & A & b & | & A & A & B & $\pad$ & $\pad$  \\
  \vecname{lastbar} &    $i$ for last $\rechar{|}$      & $12$ & $12$ & $12$ & $12$ & $12$ & $12$ & $12$ & $12$ & $12$ & $12$ & $12$ & $12$ & $12$ & $12$ \\
    \outputy      &  output     & | & A & a & b & | & A & A & b & | & A & A & B & | & $\pad$ \\
  \bottomrule
 \end{tabular}
 \end{center}
 \caption{Example $\SRASP$ computation for $\markedsquare$. Details in Ex.~\ref{ex:marked-squaring-rasp}.}
 \label{tab:marked-squaring}
\end{table*}

\begin{theorem}
    \label{theorem:a-rasp-includes-fo-polyregular}
    Every \FO{} polyregular transduction is computable in $\SRASP$.
\end{theorem}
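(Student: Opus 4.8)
The plan is to exploit the definition of \FO{} polyregular transductions as the composition closure of the \FO{} regular transductions together with \markedsquare{} (\cref{def:aperiodic-transduction-classes}), and to reduce the theorem to facts already in hand: that \SRASP{} is closed under composition (\cref{lemma:closure-under-composition-srasp}), that \SRASP{} computes string homomorphisms (\cref{lemma:SRASP-computes-homomorphisms}), and that \markedsquare{} is itself computable in \SRASP{} (\cref{ex:marked-squaring-rasp}). Since the class is generated from these two families under composition and \SRASP{} is closed under composition, it suffices to show that (i) \markedsquare{} is in \SRASP{}, which is already done, and (ii) every \FO{} regular transduction is in \SRASP{}.

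For (ii) I would begin from \cref{thm:aregular_to_brasppos}, which supplies, for each \FO{} regular transduction $f$, a \BRASPpos{} program $\progname{P}_f$ with \compressed{} outputs that computes $f$. Since \SRASP{} contains every \BRASPpos{} operation and the \compressed{} alphabet $\Gamma^{\le k}$ is itself a finite symbol alphabet, $\progname{P}_f$ is, up to input/output conventions, already an \SRASP{} program producing a \compressed{} output vector $\vecname{z}$. It then remains to (a) run $\progname{P}_f$ faithfully under the padded-input convention of \SRASP{}, and (b) ``unpack'' $\vecname{z}$ into a padded vector of single output symbols. Step (b) is clean: unpacking is exactly the homomorphism $\Gamma^{\le k}\to\Gamma^*$ sending each \compressed{} symbol to the string it denotes, so it is computable in \SRASP{} by \cref{lemma:SRASP-computes-homomorphisms}; composing the two stages and invoking \cref{lemma:closure-under-composition-srasp} yields a single \SRASP{} program for $f$, whose minimum vector length is linear since \FO{} regular transductions grow at most linearly.

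The main obstacle is step (a): reconciling \BRASPpos{}'s convention (a vector whose length equals the input length, with arithmetic clipped to $n-1$) with \SRASP{}'s padded convention, where extra $\pad$ positions make $n$ exceed the true input length $\ell$. Padding interferes with $\progname{P}_f$ in two ways. First, attention operations with a trivially true predicate would select padding positions (for instance, a rightmost $\true$ scan would return a $\pad$ position rather than the last real one). Second, and more subtly, \BRASPpos{} relies on arithmetic being clipped to the \emph{input} length $\ell-1$, whereas under padding \SRASP{} clips only at $n-1 > \ell-1$; this genuinely changes computed values, as witnessed by the clipping of $\vecname{nowrap}$ in \cref{ex:mapduplicate}. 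I would repair both uniformly: first detect the boundary by computing the least padding position and marking the real positions $[\ell]$; then transform $\progname{P}_f$ so that every attention predicate is conjoined with ``the scanned position is real'' (with defaults adjusted accordingly) and every arithmetic result $r$ is explicitly reclipped via the conditional $r \iftext r \le \ell-1 \elsetext \ell-1$. With these guards in place, the transformed program reproduces on the prefix $[\ell]$ exactly the run of $\progname{P}_f$ on the unpadded input and sets the \compressed{} output to $\varepsilon$ on the padding positions, so that unpacking produces the correct padded output; chaining this construction with the \SRASP{} program for \markedsquare{} through \cref{lemma:closure-under-composition-srasp} completes the argument.
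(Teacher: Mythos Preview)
Your proposal is correct and follows essentially the same route as the paper's proof: decompose via \cref{def:aperiodic-transduction-classes}, invoke \cref{thm:aregular_to_brasppos} for the \FO{} regular pieces, unpack the \compressed{} outputs via \cref{lemma:SRASP-computes-homomorphisms}, handle \markedsquare{} by \cref{ex:marked-squaring-rasp}, and close under \cref{lemma:closure-under-composition-srasp}. You are in fact more careful than the paper on one point it leaves implicit---the reconciliation between \BRASPpos{}'s unpadded convention (clipping at $\ell-1$) and \SRASP{}'s padded convention (clipping at $n-1$)---and your proposed fix of masking attention to real positions and explicitly reclipping arithmetic to $[0,\ell-1]$ is the right way to make that step rigorous.
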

\begin{proof}
By \cref{def:aperiodic-transduction-classes}, any \FO{} polyregular transduction can be decomposed into a composition of \FO{} regular transductions and $\markedsquare$. All \FO{} regular transductions are computable in $\BRASPpos$ (\cref{thm:aregular_to_brasppos}), and their \compressed{} outputs can be unpacked in $\SRASP$ (\cref{lemma:SRASP-computes-homomorphisms}), so all \FO{} regular transductions are computable in $\SRASP$. Further, $\markedsquare$ is computable in $\SRASP$ (\cref{ex:marked-squaring-rasp}) and \SRASP{} is closed under composition (\cref{lemma:closure-under-composition-srasp}).
Thus, \SRASP{} can compute all \FO{} polyregular transductions.
\end{proof}

\begin{example}[$\majorityrules$]
\label{ex:majorityrules}
\begin{samepage}
    If there are at least as many $a$'s as $b$'s in the input, change all inputs to $a$; otherwise change inputs to $b$ \citep{bakovic2000thesis}.
    \begin{center}
    $\text{abbabbba} \pad \pad \mapsto \text{bbbbbbbb} \pad \pad $
    \end{center}
\end{samepage}    
    The number of $a$'s and the number of $b$'s are computed and broadcast
    to every position. Each position determines whether its output is
    $a$, $b$ or $\pad$.
    \begin{raspcode}
    \begin{align*}
        \vecname{pa}(i) &= \attsum{j}{j \le i}{(1 \iftext \inputx(j)=\rechar{a} \elsetext 0)} \\
        \vecname{na}(i) &= \attr{j}{\true}{\true}{\vecname{pa}(j)}\\
        \vecname{pb}(i) &= \attsum{j}{j \le i}{(1 \iftext \inputx(j)=\rechar{b} \elsetext 0)} \\
        \vecname{nb}(i) &= \attr{j}{\true}{\true}{\vecname{pb}(j)}\\
        \outputy(i) &= \pad \iftext \inputx(i) = \pad \\
        & \hphantom{{}=} \elsetext \rechar{a} \iftext \vecname{na}(i) \ge \vecname{nb}(i) \elsetext \rechar{b}
    \end{align*}
    \end{raspcode}
    An example run is in \cref{tab:majority-rules}.
\end{example}
\begin{table}
\begin{center}
\exampletablefontsize
\setlength{\tabcolsep}{4.5pt}
\begin{tabular}{@{}l|c|cccccccccc@{}}
  \toprule
  \inputx       & input         & b & b & a & b & b & a & b & a & $\pad$ & $\pad$\\
  \vecname{pos}     & $i$           & $0$ & $1$ & $2$ & $3$ & $4$ & $5$ & $6$ & $7$ & $8$ & $9$\\
  \vecname{pa}      & count-left(a) & $0$ & $0$ & $1$ & $1$ & $1$ & $2$ & $2$ & $3$ & $3$ & $3$ \\
  \vecname{na}      & count(a)      & $3$ & $3$ & $3$ & $3$ & $3$ & $3$ & $3$ & $3$ & $3$ & $3$\\
  \vecname{pb}      & count-left(b) & $1$ & $2$ & $2$ & $3$ & $4$ & $4$ & $5$ & $5$ & $5$ & $5$ \\
  \vecname{nb}      & count(b)      & $5$ & $5$ & $5$ & $5$ & $5$ & $5$ & $5$ & $5$ & $5$ & $5$\\
  \outputy       &  output & b & b & b & b & b & b & b & b & $\pad$ & $\pad$\\
  \bottomrule
 \end{tabular}
 \end{center}
 \caption{Example $\SRASP$ computation for $\majorityrules$.  Details in Ex.~\ref{ex:majorityrules}.}
 \label{tab:majority-rules}
\end{table}

\begin{proposition}
\label{thm:majority-rules-not-in-polyregular}
The transduction $\majorityrules$ is neither polyregular nor computable in $\BRASPpos$.
\end{proposition}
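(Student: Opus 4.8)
The plan is to prove the two non-membership claims by separate arguments, since they call on different tools. For the polyregular part I would reuse exactly the strategy of \cref{thm:copyfirsthalf}: polyregular transductions preserve regular languages under inverse image \citep[Thm.~1.7]{bojanczyk2018}, so it suffices to exhibit one regular language whose preimage under $\majorityrules$ is not regular. I would take $L = \rechar{a}^*$. For an input $w \in \{\rechar a, \rechar b\}^*$, the output $\majorityrules(w)$ consists of $|w|$ copies of $\rechar a$ when $n_a(w) \ge n_b(w)$ and of $|w|$ copies of $\rechar b$ otherwise, where $n_a(w)$ and $n_b(w)$ denote the numbers of $\rechar a$'s and $\rechar b$'s in $w$. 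The latter output lies in $\rechar{a}^*$ only for $w = \varepsilon$, so $\majorityrules^{-1}(\rechar{a}^*) = \{w : n_a(w) \ge n_b(w)\}$. This language is not regular: the strings $\varepsilon, \rechar a, \rechar a\rechar a, \ldots$ are pairwise Myhill--Nerode inequivalent, since appending $n+1$ copies of $\rechar b$ to a string of $n$ copies of $\rechar a$ yields a word outside $L$, whereas appending them to a string of $n' > n$ copies of $\rechar a$ yields a word in $L$. Hence $\majorityrules$ is not polyregular.

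For the $\BRASPpos$ part I would invoke the observation (noted at the end of \cref{sec:brasppos}) that every transduction computable in $\BRASPpos$ is computable by a family of $\AC^0$ circuits. Fixing the first output position and reading off whether its symbol is $\rechar a$ or $\rechar b$ turns any $\BRASPpos$ transduction into a family of $\AC^0$ Boolean circuits. For $\majorityrules$ restricted to inputs over $\{\rechar a, \rechar b\}$, this family would decide whether $n_a(w) \ge n_b(w)$, i.e.\ it would compute the majority function. Since the majority function is not in $\AC^0$ (a classical result following from the $\AC^0$ lower bounds of Furst, Saxe, and Sipser, of Ajtai, and of H{\aa}stad), no such circuit family can exist, so $\majorityrules$ is not computable in $\BRASPpos$.

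Both arguments are short, so the care is entirely in the interfaces, which is where I expect the only real work. For the polyregular part I must compute the preimage for the transduction viewed as a function $\Sigma^* \to \Gamma^*$, rather than for its padded-vector implementation, so that the reduction lands on a genuinely non-regular language. For the $\BRASPpos$ part the delicate point is the reduction to the Boolean lower bound: I need to fix a concrete $O(1)$-bit encoding of $\{\rechar a, \rechar b, \pad\}$, restrict to padding-free inputs of each length $n$ (so the vector length may be taken to be $n$), and argue that reading off the first output symbol yields a bona fide $\AC^0$ decision procedure for majority on $n$ bits, to which the classical lower bound applies. Neither interface is conceptually hard, but both must be stated precisely so that the two black-box results ($\AC^0$-containment of $\BRASPpos$ and inverse-preservation of regular languages) actually bite.
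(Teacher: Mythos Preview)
Your proposal is correct and follows essentially the same approach as the paper: for non-polyregularity you take the inverse image of $\rechar{a}^*$ and observe it is the (non-regular) majority language, and for $\BRASPpos$ you reduce to the $\AC^0$ lower bound for majority via the circuit upper bound on $\BRASPpos$. The paper's proof is terser (it does not spell out the Myhill--Nerode argument or the output-bit extraction), but the key ideas are identical.
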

\begin{proof}
   Polyregular transductions preserve regular languages under inverse \citep[Thm.~1.7]{bojanczyk2018}.
   The preimage of the regular language $a^*$ under $\majorityrules$ is
    $M = \{w \mid \text{$w$ contains more $\text{a}$'s than $\text{b}$'s} \}$, which is not regular, so $\majorityrules$ is not polyregular.
   
   A circuit family computing $\majorityrules$ can be modified to decide $M$, which is not in $\AC^0$ \citep{furst+:1984}. Thus the $\majorityrules$ transduction is not computable in $\BRASPpos$.
   \end{proof}

\begin{example}[\modularcountermodm]
Let $m$ be a positive integer and define the transduction \modularcountermodm{} to map any input sequence $a_0a_1 \cdots a_{n-1}$ to the sequence $b_0b_1 \cdots b_{n-1}$ where $b_i = (\sum_{0}^i a_j) \bmod m$.
This transduction is rational but not aperiodic; it is a generalization of the parity problem, which has been discussed at length elsewhere \citep{hahn-2020-theoretical,chiang-cholak-2022-overcoming}.
\end{example}

\begin{proposition}
\label{prop:SRASP-modular-counter-mod-m}
For any $m$, \SRASP{} can compute the transduction \modularcountermodm{}.
\end{proposition}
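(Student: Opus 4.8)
The plan is to lean on the prefix-sum operation of $\SRASP$ to accumulate the input values, use extra padding to guarantee that the running total never reaches the saturation bound $n-1$, and then reduce that total modulo $m$ by reusing the \residuesmodm{} construction of \cref{prop:BRASPpos-computes-residues}. Fix a numeric value $\mathrm{val}(\sigma) \in \N$ for each input symbol $\sigma \in \Sigma$ (so that $a_j = \mathrm{val}(\inputx(j))$), and let $c = \max_\sigma \mathrm{val}(\sigma)$. I would declare the minimum vector length to be the linear function $q(\ell) = (c+1)\,\ell$, so that whenever the program is run on $w \cdot \pad^{\,n-\ell}$ with $n > q(\ell)$ we have $n - 1 \ge c\,\ell$. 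A first position-wise operation sets $\vecname{val}(i)$ to $\mathrm{val}(\inputx(i))$ when $\inputx(i) \neq \pad$ and to $0$ otherwise (a finite function into the constants $0, 1, 1+1, \dots$). Then a prefix sum
\[ \vecname{cum}(i) = \attsum{j}{j \le i}{\vecname{val}(j)} \]
computes $\sum_{j \le i}\mathrm{val}(\inputx(j))$; by the choice of $q$ this partial sum is at most $c\,\ell \le n-1$ at every position, so the clipping of \cref{def:prefix-sum} never triggers and $\vecname{cum}(i)$ holds the exact integer $\sum_{0}^{i} a_j$.

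It remains to reduce $\vecname{cum}(i)$ modulo $m$ and to emit the matching output symbol. Because ``$x \bmod m$'' is not a primitive on an arbitrary integer value, I would instead build the residue vector $\vecname{res}(i) = i \bmod m$ using exactly the construction of \cref{prop:BRASPpos-computes-residues} (valid in $\BRASPpos \subseteq \SRASP$), and evaluate it at the in-range index $\vecname{cum}(i)$. Since $\vecname{cum}(i) \in [0, n-1]$, there is a unique position $j$ with $\pos(j) = \vecname{cum}(i)$, so the equality-test attention $\vecname{mod}(i) = \attl{j}{\true}{\pos(j) = \vecname{cum}(i)}{\vecname{res}(j)}$ --- a permitted score of the form $V_1(i) = V_2(j)$ --- returns $\vecname{cum}(i) \bmod m = \bigl(\sum_0^i a_j\bigr) \bmod m$. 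A final position-wise operation outputs $\pad$ where $\inputx(i) = \pad$ and otherwise maps the integer $\vecname{mod}(i) \in \{0, \dots, m-1\}$ to the corresponding symbol of $\Gamma$ by a finite cascade of tests $\vecname{mod}(i) = r$.

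The hard part is the saturating arithmetic: the prefix sum clips at $n-1$, so the naive ``sum, then take the residue'' strategy fails as soon as the true total exceeds the vector length. The plan sidesteps this entirely by invoking the padded-input convention, which lets me choose the minimum vector length $q(\ell) = (c+1)\ell$ to be linear yet large enough that no clipping ever occurs; this keeps $q$ linear, consistent with the other examples in this section. The only other point needing care is that taking a residue is realized not as an arithmetic primitive but as a table lookup into the position-residue vector via an equality-test attention, which is precisely the kind of score predicate that $\BRASPpos$ and $\SRASP$ permit.
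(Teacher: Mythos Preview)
Your argument is correct, but it takes a different route from the paper's. The paper avoids the clipping issue not by padding but by decomposing the sum: for each input symbol $\sigma$ it takes a separate prefix sum of the $0/1$ indicator $[\inputx(j)=\sigma]$, which is automatically bounded by $\ell < n$ with no extra padding; it then looks up each of these counts in the \residuesmodm{} vector and combines the resulting residues with a fixed finite function (for $m=3$, $\mathit{fmod3}(x,y) = (x+2y)\bmod 3$). Your approach instead takes a single prefix sum of the actual values and buys headroom by setting $q(\ell) = (c+1)\ell$. Both are valid; your version is conceptually cleaner (one sum, one lookup) at the cost of a larger, though still linear, minimum vector length, while the paper's version keeps $q(\ell)=\ell$ at the cost of $|\Sigma|$ parallel prefix sums and a combinatorial merge step. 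Both rely on the same \residuesmodm{} table from \cref{prop:BRASPpos-computes-residues} and the same equality-test lookup idiom, so the essential ingredients coincide.
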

\begin{proof}
We just give the case of $m = 3$, which is easily generalized.
The vector \vecname{residues} contains the residues of positions modulo $3$ computed by the program in \cref{prop:BRASPpos-computes-residues}.
Define the finite function $\mathit{fmod3}(x,y) = (x+2y) \bmod 3$ for $x,y \in [3]$.
\begin{raspcode}
\begin{align*}
    \vecname{ones}(i) &= 1 \iftext \inputx(i) = 1  \elsetext 0\\
    \vecname{ps1}(i) &= \attsum{j}{j \le i}{\vecname{ones}(j)} \\
    \vecname{ps1m3}(i) &= \vecname{residues}(\vecname{ps1}(i)) \\
    \vecname{twos}(i) &= 1 \iftext \inputx(i) = 2 \elsetext 0\\
    \vecname{ps2}(i) &= \attsum{j}{j \le i}{\vecname{twos}(j)} \\
    \vecname{ps2m3}(i) &= \vecname{residues}(\vecname{ps2(i))} \\
    \outputy(i) &= \mathit{fmod3}(\vecname{ps1m3}(i),\vecname{ps2m3}(i)) \tag*{\qedhere}
\end{align*}
\end{raspcode}
\end{proof}

On the other hand, because prefix sum can be simulated by a family of $\TC^0$ circuits (threshold circuits of constant depth and polynomial size), any transduction computable in $\SRASP$ is in $\TC^0$.

\subsection{Average-hard attention transformers}
\label{sec:S-RASP-with-augmented-average-hard-attention}

We prove the following connection between $\SRASP$ programs and average hard attention transformers in \cref{sec:sim-SRASP-proof}.
\begin{theorem} 
\label{thm:srasp_to_ahat}
Any transduction computable by an $\SRASP$ program is computable by a masked average-hard attention transformer encoder with a position encoding of $i/n$, $(i/n)^2$, and $1/(i+2)$. 
\end{theorem}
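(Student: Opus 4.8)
The plan is to simulate an $\SRASP$ program operation-by-operation by an average-hard attention transformer encoder, maintaining the invariant that after the layers simulating the first $k$ operations, the residual stream at each position $i$ holds an encoding of the values $P_1(i), \dots, P_k(i)$ of all vectors defined so far. Boolean and symbol values are stored as one-hot sub-vectors; an integer value $v \in [n]$ is stored as the fraction $v/n$. The supplied position encoding makes $\pos(i) = i/n$, its square $(i/n)^2$, and the reciprocal $1/(i+2)$ available at every position, and these are exactly the nonlinear ingredients the construction needs.

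First I would handle the position-wise operations ($P(i) = e^\bool$, $P(i) = e^\chr$, $P(i) = e^\nat$, and the comparisons $c^\bool$). Since Boolean and symbol operations act on finitely many one-hot values, and integer arithmetic on the fraction encoding is $(v \pm w)/n$ followed by clipping to $[0,(n-1)/n]$, each such operation is a piecewise-linear map of the residual stream, computable by a feed-forward sublayer with ReLU activations; comparisons of fractions reduce to thresholding a linear form.

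Next I would simulate the attention operations. For a leftmost (resp.\ rightmost) operation with Boolean attention predicate $S(i,j)$ and mask $M(i,j)$, I would build scores placing every $M,S$-satisfying position strictly above all others, and add a tie-breaking term $-\epsilon\,(j/n)$ (resp.\ $+\epsilon\,(j/n)$) so that the maximizer is unique; because the maximizer is unique, average-hard attention returns exactly $V(j_i)$, matching the $\SRASP$ semantics, with the default $D(i)$ handled by a spare always-attendable position. The one nonstandard predicate is the integer-equality test $V_1(i) = V_2(j)$, which I would realize through a score of the form $-(V_1(i)/n - V_2(j)/n)^2$, whose cross term $V_1(i)V_2(j)/n^2$ is a query--key inner product and whose key-side square $(V_2(j)/n)^2$ is absorbed into the key (the query-side square is constant in $j$ and can be dropped). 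This is exactly why the encoding must provide $(i/n)^2$, and why the language restricts integer attention predicates to equality.

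The crux, and the step I expect to be the main obstacle, is the prefix sum $P(i) = \attsum{j}{j \le i}{V(j)}$, which is also why average-hard (rather than unique-hard) attention and the reciprocal encoding $1/(i+2)$ are needed. Uniform average-hard attention under future masking $j \le i$ yields the prefix \emph{average} $\frac{1}{i+1}\sum_{j \le i} V(j)$, so to recover the sum in the $v/n$ representation I must multiply this average by the count $i+1$ --- a product of two quantities computed inside the network, which a plain linear layer cannot form. The tailored encoding $1/(i+2)$, a smooth singularity-free stand-in for the reciprocal count, together with the fact that all admissible values lie on the finite grid $\{0, 1/n, \dots, (n-1)/n\}$, is what lets this multiplication (and the analogous squaring invoked above) be carried out exactly; clipping of the sum at $n-1$ is a final ReLU. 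Once every vector, including $\outputy$, has been produced, I read the output off the designated component, relying on the padded-input convention and the minimum-vector-length bound to guarantee enough positions for $f(w)$.
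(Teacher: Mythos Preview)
Your overall plan matches the paper's: simulate operation by operation, store integers as $v/n$, use a spare ``default'' position, realize integer-equality attention via the quadratic score $2qj-j^2$ (your $-(V_1/n-V_2/n)^2$ up to a $j$-independent term), and break ties with $\pm\epsilon\,j/n$. But the step you flag as the crux is also where your plan has a genuine gap. You write that ``the encoding $1/(i+2)$ \dots\ together with the grid \dots\ lets this multiplication be carried out exactly,'' yet you give no mechanism, and \emph{multiplication} is the wrong model: a fixed FFN cannot multiply two computed quantities, and an attention layer outputs a weighted average of values, not a score. The paper never multiplies the prefix average by the count. Instead, after uniform attention over $-1,\dots,i$ (so the denominator is $i{+}2$) it has $s_i=p_i/((i{+}2)n)$, and then uses a second attention layer as a \emph{lookup}: with query coordinates $s_i$ and $1/(i{+}2)$ and key coordinates $j/n$ and $(j/n)^2$, the bilinear score
\[
S(i,j)=2\,s_i\cdot\frac{j}{n}-\frac{1}{i{+}2}\cdot\frac{j^2}{n^2}=\frac{2p_ij-j^2}{(i{+}2)\,n^2}
\]
is, for fixed $i$, a positive multiple of $2p_ij-j^2$ and hence uniquely maximized at $j=p_i$; the value returned is $u_j[\mathit{pos}]=p_i/n$. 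The role of $1/(i{+}2)$ is not to cancel a denominator by multiplication but to make this \emph{scaled} quadratic score bilinear so that hard attention can look up $p_i$ directly.

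A smaller but related gap: ``comparisons of fractions reduce to thresholding a linear form'' does not work, because distinguishing $0$ from $1/n$ with a fixed-weight FFN would require a weight of order $n$. The paper again uses a lookup: compute $k/n=\mathrm{ReLU}(v_1/n-v_2/n)$, then attend to position $k$ and read a precomputed indicator $\mathbb{I}[k{=}0]$. Likewise, the term $(V_2(j)/n)^2$ you want to ``absorb into the key'' is not in the position encoding (only $(j/n)^2$ is); it must first be fetched by a preliminary lookup layer that, at each $j$, attends to position $V_2(j)$ and copies its $\mathit{posq}$ coordinate. Once you have this table-lookup primitive (\cref{lemma:copy-from-basic}), prefix sum, integer comparison, and squaring are all instances of it; that primitive, not any form of in-network multiplication, is the missing idea in your sketch.
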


One consequence is the following result relating unique-hard and average-hard attention:
\begin{corollary}
\label{thm:uhat_to_ahat}
Any transduction computable by a masked unique-hard attention transformer encoder can be computed by a masked average-hard attention transformer encoder with a position encoding of $i/n$, $(i/n)^2$, and $1/(i+2)$.
\end{corollary}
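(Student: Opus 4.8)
The plan is to chain together the equivalences and inclusions already established in the paper, so that the corollary becomes a short composition of stated results rather than a new construction. Concretely, I would route any masked unique-hard attention transformer encoder through \BRASP{}, then into \SRASP{}, and finally apply \cref{thm:srasp_to_ahat}.

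First I would invoke the equivalence of \citet{yang2024masked} (restated just above \cref{thm:arational_to_brasp}): any masked unique-hard attention transformer encoder, with no position information beyond masking, can be simulated by a \BRASP{} program $\progname{P}$ with \compressed{} outputs. By \cref{theorem:BRASP-is-FOTop}, the transduction $f$ computed by $\progname{P}$ is an \FO{} rational transduction. Next I would use the previously known inclusion of \FO{} rational transductions within \FO{} polyregular transductions (the dashed arrows in \cref{fig:overview}, together with \cref{def:aperiodic-transduction-classes}), so that $f$ is \FO{} polyregular. Then \cref{theorem:a-rasp-includes-fo-polyregular} gives that $f$ is computable in \SRASP{}, and \cref{thm:srasp_to_ahat} gives that $f$ is computable by a masked average-hard attention transformer encoder with the position encoding $i/n$, $(i/n)^2$, and $1/(i+2)$, which is exactly the claim.

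There is essentially no hard step here, since all the content lives in \cref{thm:srasp_to_ahat}; but the one point I would be careful about is that the three classes of programs use different input/output conventions (\BRASP{} and \BRASPpos{} use \compressed{} outputs, while \SRASP{} uses padded inputs and outputs). Passing through the \emph{transduction classes} rather than through the programs themselves sidesteps this bookkeeping, since an \FO{} rational or polyregular transduction is just an abstract function $\Sigma^* \to \Gamma^*$ with no attached representation convention. As a more syntactic alternative, I could instead observe that every \BRASP{} operation is a legal \SRASP{} operation and that a \compressed{} output can be unpacked into a padded symbol vector within \SRASP{} (\cref{lemma:SRASP-computes-homomorphisms}); this converts $\progname{P}$ directly into an \SRASP{} program computing $f$, after which \cref{thm:srasp_to_ahat} again finishes. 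Either route yields the corollary immediately once \cref{thm:srasp_to_ahat} is in hand.
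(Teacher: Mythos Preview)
Your proposal is correct and matches the paper's intent: the paper gives no explicit proof at all, stating the corollary simply as ``one consequence'' of \cref{thm:srasp_to_ahat}, so the implicit argument is exactly the chain you describe (masked unique-hard attention $\leftrightarrow$ \BRASP{} by \citet{yang2024masked}, then $\BRASP \subseteq \SRASP$, then \cref{thm:srasp_to_ahat}). Your second, more syntactic route---observing that every \BRASP{} operation is already an \SRASP{} operation and invoking \cref{lemma:SRASP-computes-homomorphisms} to unpack the \compressed{} output---is the most direct version and is likely what the authors had in mind; the detour through \FO{} rational and \FO{} polyregular transductions is sound but unnecessary.
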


\section{Conclusions}

This is, to our knowledge, the first formal study of transformers for sequence-to-sequence transductions,
using variants of RASP to connect classes of transformers to classes of transductions.
We showed that unique-hard attention transformers and $\BRASP$ compute precisely the class of \FO{} rational transductions; $\BRASPpos$ strictly contains all \FO{} regular transductions; and average-hard attention transformers and $\SRASP$ strictly contain all \FO{} polyregular transductions.
Our finding that $\BRASPpos$ and $\SRASP$ can compute transductions outside the corresponding \FO{} class in the transduction hierarchy raises the question of fully characterizing their expressivity, a promising future research direction.

\section*{Acknowledgements}

We thank Miko{\l}aj Boja{\'n}czyk, Micha{\"e}l Cadilhac, L{\^e} Th{\`a}nh D{\~u}ng (Tito) Nguy{\~{\^{e}}}n, and the anonymous reviewers for their very helpful advice.

\bibliographystyle{acl_natbib}
\bibliography{SRASP}

\clearpage
\appendix

\section*{Appendices}

In the following appendices, we prove \cref{thm:srasp_to_ahat}.
\Cref{sec:ahat} reviews the definition of average-hard attention transformers.
\Cref{sec:sim-SRASP-proof} contains our main proof, while
\cref{sec:alternate_pe} contains another construction using a different position embedding. \Cref{sec:other_approaches} compares some features of our simulation with other simulations.

\section{Average hard attention transformers}
\label{sec:ahat}

We recall the definition of a transformer encoder with average-hard attention, also known as saturated attention \citep{yao2023selfattention,hao-etal-2022-circuits,barcelo2023logical}.
Let $d > 0$ and $n \ge 0$. An \emph{activation sequence} is a sequence of $n$ vectors in $\R^d$, one for each string position.
The positions are numbered $-1, 0, 1, \ldots, n-1$. Position $-1$, called the \emph{default position}, does not hold an input symbol and will be explained below.
A transformer encoder is the composition of a constant number (independent of $n$) of layers, which of which maps an activation sequence $u_{-1},\ldots,u_{n-1}$ to an activation sequence $u_{-1}',\ldots,u_{n-1}'$.

There are two types of layers: (1) position-wise and (2) average hard attention.
A \emph{position-wise layer} computes a function $u_i' = u_i + f(u_i)$ for all positions $i$, where $f$ is %
a position-wise two-layer feed-forward network (FFN) with ReLU activations. An \emph{average hard attention layer} is specified by three linear transformations $Q,K,V \colon \R^d \to \R^d$.
The dot product $S(i,j) = \langle Qu_i,Ku_j \rangle$ is the \emph{attention score} from position $i$ to position $j$.
For each position~$i$, let $M_i$ be the set of positions $j$ that maximize $S(i,j)$.
Then $u'_i = u_i + (\sum_{j \in M_i} Vu_j)/|M_i|$.
An average hard attention layer may be \emph{masked} using \emph{strict} or \emph{non-strict future masking}, in which for each position $i$, only positions $j < i$ or $j \le i$ (respectively) are considered in the attention calculation.
With strict future masking, the default position has nowhere to attend to, so the result is $u_{-1}' = u_{-1}$.

\section{Simulating $\SRASP$}
\label{sec:sim-SRASP-proof}

\subsection{Overview of the simulation}

To define the computation of a transduction by a transformer, we need to specify how the input and output strings are represented in the initial and final activation sequences.
If the input string is $w = a_0 a_1 \cdots a_{\ell-1}$, we let $a_i = \pad$ for $i = \ell, \ldots, n-1$.

Let $\Sigma \cup \{\pad\} = \{\sigma_0,\ldots,\sigma_{k-1}\}$ be totally ordered.
The first $k$ coordinates of each activation vector hold the one-hot encoding of $a_i$
(or the zero vector at the default position).
The representation of the output string is analogous, using the alphabet $\Gamma \cup \{\pad\}$.

Five more coordinates are designated to hold the \emph{position encoding} (PE) and 
quantities computed from it.  Descriptive names for these coordinates of the activation vector at position $i$ are as follows. 
\[
\exampletablefontsize
\begin{blockarray}{r[c]}
0 & \mathbb{I}[a_i = \sigma_0] \\
\vdots & \vdots \\
k-1 & \mathbb{I}[a_i = \sigma_{k-1}] \\
\mathit{pos} & i/n \\
\mathit{posq} & (i/n)^2 \\
\mathit{posi} & 1/(i+2) \\
\mathit{default} & \mathbb{I}[i=-1] \\
\mathit{zero} & \mathbb{I}[i=0] \bigstrut[b]
\end{blockarray}
\]
where $\mathbb{I}[\mathord\cdot]$ is $1$ if the argument is true, $0$ otherwise.
In the simulation, $i/n$ is used for sum and difference, $i/n$ and $(i/n)^2$ are used for equality comparison, and $i/n$, $(i/n)^2$ and $1/(i+2)$ are used for the prefix sum operation.
We note that the last two coordinates above can be computed from $i/n$.

We turn to how $\SRASP$ programs may be simulated.
Vectors of Boolean, symbol, and integer values in an $\SRASP$ program are represented in one or more coordinates of an activation sequence in the transformer.
Each operation of an $\SRASP$ program computes a new vector of values, and is simulated by one or more transformer encoder layers which compute new values in one or more coordinates of the activation sequence.
Assume that $\progname{P}_f$ is an $\SRASP$ program computing a transduction $f:\Sigma^* \rightarrow \Gamma^*$ with minimum vector length $q(\ell)$,
and that $n > q(\ell)$.

\subsection{Representing $\SRASP$ vectors}

Vectors of Booleans, symbols, and integers in the program $\progname{P}_f$ are represented in the activation sequence of the transformer as follows.

Each Boolean vector $v_0,v_1,\ldots,v_{n-1}$ in $\progname{P}_f$ is represented by one coordinate $r$ of the transformer activation sequence $u_{-1},u_0,\ldots,u_{n-1}$, where for each $i \in [n]$, $u_i[r] = 0$ if $v_i = \false$ and $u_i[r] = 1$ if $v_i = \true$.
For the default position, $u_{-1}[r] = 0$.

Let $\Delta = \{\delta_0,\delta_1,\ldots,\delta_k\}$ denote the finite set of all symbols that appear in any symbol vector in~$\progname{P}_f$.
Each symbol vector $v_0,v_1,\ldots,v_{n-1}$ in $\progname{P}_f$ is represented by $|\Delta|$ coordinates $r_0,r_1,\ldots,r_{k-1}$,
which hold a one-hot representation of $v_i$ (or the zero vector at the default position).

Each integer vector $v_0,v_1,\ldots,v_{n-1}$ in the program is represented by a specified coordinate $r$ in the transformer activation sequence, where for each $i \in [n]$, $u_i[r] = v_i/n$.
In the PE, the value of $u_{-1}[\mathit{pos}]$ is $-1/n$, but for other integer vectors we have $u_{-1}[r] = 0$.
We note that all of the representing values are less than or equal to $1$.

\subsection{Table lookup}

A key property of $\SRASP$ is that every integer value computed in the program must be equal to some position index $i \in [n]$. 
We use this property to implement a table lookup operation.

\begin{lemma} \label{thm:quadratic_maximization}
    For any integers $x$, $q$, let
    \[f_q(x) = 2qx - x^2.\] 
    Then:
    \begin{compactenum}
        \item $f_q(x)$ is uniquely maximized at $x = q$;
        \item if $x \ne q$, then $f_q(q) - f_q(x) \geq 1$.
    \end{compactenum}
\end{lemma}

\begin{proof} 
This is a generalized version of a technique by \citet{barcelo2023logical}.
It can easily be shown by looking at the first and second derivatives of $f$, and by comparing $f_q(q)$ with $f_q(q-1)$ and $f_q(q+1)$.
\end{proof}

\begin{lemma}
    \label{lemma:copy-from-basic}
    Fix an activation sequence $u_{-1},\ldots,u_{n-1}$ and coordinates $r,s,t$ such that $u_i[r] = k_i/n$, where each $k_i \in [n]$.
    Then there is an average-hard attention layer that computes $u'_{-1}, \ldots, u'_{n-1}$, where $u'_i[t] = u_{k_i}[s]$ and the other coordinates stay the same.
\end{lemma}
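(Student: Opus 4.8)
The plan is to engineer a single average-hard attention layer whose attention score from position $i$ to position $j$ is (a positive multiple of) the quadratic $f_{k_i}(j) = 2k_i j - j^2$ of \cref{thm:quadratic_maximization}, so that each query position $i$ attends to the unique position $j = k_i$, and then to copy coordinate $s$ of that position into coordinate $t$ via the value map and the residual connection. The key observation is that the target value $k_i$ is encoded at position $i$ as $u_i[r] = k_i/n$, while the position encoding supplies $u_j[\mathit{pos}] = j/n$ and $u_j[\mathit{posq}] = (j/n)^2$ at the attended position $j$. Hence
\[
2\,u_i[r]\,u_j[\mathit{pos}] - u_j[\mathit{posq}] = \frac{1}{n^2}\bigl(2k_i j - j^2\bigr) = \frac{1}{n^2} f_{k_i}(j),
\]
and since $1/n^2 > 0$, this expression has exactly the same maximizer over $j$ as $f_{k_i}$.

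First I would realize the right-hand side as a bilinear score $S(i,j) = \langle Q u_i, K u_j\rangle$. Taking $Q u_i$ to have components $(u_i[r],\,1)$ and $K u_j$ to have components $(2\,u_j[\mathit{pos}],\,-u_j[\mathit{posq}])$ in two designated coordinates (and zero elsewhere) gives exactly $S(i,j) = 2\,u_i[r]\,u_j[\mathit{pos}] - u_j[\mathit{posq}]$. The only subtlety here is the constant component $1$ in the query: since $Q,K$ are strictly linear, this requires a coordinate that is identically $1$ at every position, which is available because a preceding position-wise (FFN) layer can write such a constant via its bias; this is a minor bookkeeping step. By \cref{thm:quadratic_maximization}, $f_{k_i}$ is uniquely maximized over the integers at $j = k_i \in [n]$, so among the positions $j \in \{0,\ldots,n-1\}$ the maximizing set is the singleton $M_i = \{k_i\}$. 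I would also check the default position $j = -1$: its score is $f_{k_i}(-1)/n^2 = (-2k_i-1)/n^2 < 0 \le k_i^2/n^2 = f_{k_i}(k_i)/n^2$, so the default position never enters $M_i$.

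With $|M_i| = 1$, average-hard attention reduces to copying from position $k_i$: $u'_i = u_i + V u_{k_i}$. I would take $V$ to place $u_j[s]$ into coordinate $t$ and nothing elsewhere, so only coordinate $t$ is modified and every other coordinate is preserved. Following the convention of the surrounding simulation that each new operation writes to a fresh coordinate with $u_i[t] = 0$, the residual sum gives $u'_i[t] = 0 + u_{k_i}[s] = u_{k_i}[s]$, as required. I expect the main obstacle to be the low-level verification that the quadratic score is exactly realizable from the three available position-encoding coordinates as a bilinear form — in particular confirming that the terms depending only on $i$ (such as $-k_i^2$) can be dropped without changing the argmax, and that the supplied constant-$1$ coordinate together with the $1/n^2$ scaling does not disturb the unique-maximizer guarantee of \cref{thm:quadratic_maximization}; once those are pinned down, the default-position check and the fresh-coordinate residual argument are routine.
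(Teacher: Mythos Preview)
Your proposal is correct and follows essentially the same approach as the paper: define the unmasked attention score $S(i,j) = 2\,u_i[r]\,u_j[\mathit{pos}] - u_j[\mathit{posq}] = (2k_i j - j^2)/n^2$, invoke \cref{thm:quadratic_maximization} to conclude the unique maximizer is $j = k_i$, and let the value map write $u_j[s]$ into coordinate $t$. You are more explicit than the paper about realizing the bilinear form via a constant-$1$ coordinate, checking the default position $j=-1$, and handling the residual connection with a fresh coordinate, all of which are correct elaborations.
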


\begin{proof}
    Consider an attention layer with no mask and the following attention score:
    \begin{align*}
    S(i,j) &= 2u_i[r] u_j[\mathit{pos}] - u_j[\mathit{posq}] \\
    &= \frac{2k_i j - j^2}{n^2}.
    \end{align*}
    which is a bilinear form in $u_i$ and $u_j$, and (by \cref{thm:quadratic_maximization}) is uniquely maximized when $j = k_i$.
    The value is $u_j[s]$,
    which is stored in coordinate $t$ of the output activation sequence. %
\end{proof}

We remark that if $k_i \ge n$, the unique maximizing value of $S(i,j)$ for $j \in [-1,n-1]$ is $j = n-1$, so the attention layer in the proof above returns the value $v_{n-1}$ for such positions $i$.

\subsection{Simulating $\SRASP$ operations}

For each operation below, let $u_{-1},\ldots,u_{n-1}$ be the input activation sequence, and let $u'_{-1},\ldots,u'_{n-1}$ be the output activation sequence. If $k$, $v_1$, $v_2$, $b$, and $t$ are $\SRASP$ vectors, we also write $k$, $v_1$, $v_2$, $b$, and $t$, respectively, for the coordinates representing them in the transformer. 

\subsubsection{Position-wise operations}

Position-wise Boolean operations on Boolean vectors can be simulated exactly by position-wise FFNs, as shown by \citet{yang2024masked}.
Position-wise operations on symbol values reduce to Boolean operations on Boolean values.

To simulate \emph{addition of two integer vectors}, $t(i) = v_1(i) + v_2(i)$,
we first use a FFN to compute $k/n = \max(0, u_i[v_1]+u_i[v_2])$.
The result may exceed $(n-1)/n$, so we use table lookup (\cref{lemma:copy-from-basic}) to map $k/n$ to $u_k[\mathit{pos]}$; 
this sets values larger than $(n-1)/n$ to $(n-1)/n$. The result is stored in $u'_i[t]$. Subtraction is similar, with ReLU ensuring the result is non-negative.

For \emph{position-wise comparison of integer vectors} $t(i) = v_1(i) \le v_2(i)$,  %
we use a FFN to compute $k/n = \max(0, u_i[v_1] - u_i[v_2])$.
We use table lookup to map $k/n$ to $u_k[\mathit{zero}]$, which is $1$ if $u_i[v_1]-u_i[v_2]\le 0$, and $0$ otherwise. 
The other comparison operators are similar.

For the \emph{position-wise operation $t(i) = v_1(i) \iftext b(i) \elsetext v_2(i)$}:  %
If $v_1$ and $v_2$ are both Boolean vectors or both symbol vectors, this can be reduced to position-wise Boolean operations.
If $v_1$ and $v_2$ are integer vectors, %
we use a FFN to compute
\begin{align*}
u_i'[t] &= \max(0,u_i[v_1]+u_i[b]-1)\\
&\quad {} + \max(0,u_i[v_2]-u_i[b]).
\end{align*}
Thus if $u_i[b] = 1$ then $u_i'[t]=u_i[v_1]$ and $u_i'[t] = 0$, and if $u_i[b]=0$ then $u_i'[t]=0$ and $u_i'[t]=u_i[v_2]$.

\subsubsection{Prefix sum}
\label{sec:prefix-sum}

Next, we turn to the \emph{prefix sum} operation, $t(i) = \attsum{j}{j \le i}{k(j)}$.
Assume that $u_i[k] = k(i)/n$, where each $k(i)$ is an integer in $[n]$ and $k(-1) = 0$. Let $p_i \ge 0$ be the sum of $k(-1),k(0),\ldots,k(i)$ and let $p'_i = \min(n-1,p_i)$, 
which is the sequence of values to be computed and stored in coordinate $t$.

The first attention layer uses non-strict future masked average hard attention with $S(i,j) = 0$, and the value is $u_j[k]$. %
The resulting activation sequence has the following values in coordinate $s$: 
\begin{equation} \label{eq:prefix_sum_result}
\frac{0}{n},\frac{p_0}{2n},\frac{p_1}{3n},\ldots,\frac{p_{n-1}}{(n+1)n}.
\end{equation}
Each value is smaller than the desired value by a factor of $(i+2)$; to remove this factor, we use a second attention layer.
Let $v_{-1},v_0,\ldots,v_{n-1}$ denote the activation sequence after the first layer.
We use an average hard attention layer with no mask and the following attention score:
\begin{align*}
S(i,j) &= 2v_i[s] v_j[\mathit{pos}] - v_i[\mathit{posi}] v_j[\mathit{posq}] \\
&= \frac{2p_i j - j^2}{(i+2)n^2}
\end{align*}
which is a bilinear form in $v_i$ and $v_j$, and (by \cref{thm:quadratic_maximization}) is uniquely maximized when $j = p_i$. As in the remark after Lemma~\ref{lemma:copy-from-basic}, if $p_i \ge n$, the maximizing $j \in [n]$ is $j = n-1$.
The value is $v_j[\mathit{pos}] = j/n = p'_{i-1}/n$ for $i>0$ (and $0$ if $i=0$), which is assigned to coordinate $t$, and the other coordinates are unchanged.

\subsubsection{Leftmost and rightmost attention}
\label{sec:leftmost-rightmost-attention}
The operations
\begin{gather*}
t(i) = \attldefault{j}{M(i,j)}{S(i,j)}{V(j)}{D(i)} \\
t(i) = \attrdefault{j}{M(i,j)}{S(i,j)}{V(j)}{D(i)}
\end{gather*}
require that if there is any position $j \in [n]$ that makes the attention predicate $S(i,j)$ true, then the unique minimum or maximum such $j$ is selected, but if there is no satisfying position $j \in [n]$, then the default value is used.
Attention may be past or future masked, either strictly or non-strictly.
We assume that transformers have only (strict or non-strict) future masking; to simulate past masking, we can calculate the index $(n-1)/n - i/n$, use \cref{lemma:copy-from-basic} to reverse the relevant vectors, and then use future masking.

The attention score $S(i,j)$ is either a Boolean combination of Boolean vectors, or an equality comparison between two integer vectors.
In either case, we compute an attention score \[S'(i,j) = S_{\text{base}}(i,j) \pm S_{\text{tie}}(i,j) + S_{\text{def}}(i) \,\mathit{default}(j)\]
where $S_{\text{base}}(i,j)$ is maximized for positions where $S(i,j)$ is true, $+S_{\text{tie}}$ breaks ties to right, $- S_{\text{tie}}$ to the left, and $S_{\text{def}}$ handles the default case.

\paragraph{Maximization.}
If $S(i,j)$ is a Boolean combination of Boolean vectors,
to ensure that attention from any position to the default position is $0$, we let $S_{\text{base}}(i,j) = \neg \mathit{default}(j) \land S(i,j)$.
This may be computed by dot product attention, as described by \citet{yang2024masked}.

For the special case where $S(i,j)$ is an equality comparison of integer vectors, say $v_1(i)=v_2(j)$:
We first use a lookup operation (\cref{lemma:copy-from-basic}) with the $\mathit{posq}$ entry of the PE to get the squares of the values in $v_2$ in coordinate $t$.
Let $u_{-1},u_0,\ldots,u_{n-1}$ be the resulting activation sequence.
We then use an average hard attention operation with the attention score function
\begin{align*}
S_{\text{base}}(i,j) &= 2u_i[v_1] u_j[v_2] - u_j[t] \\
&= \frac{2v_1(i) v_2(j) - v_2(j)^2}{n^2}
\end{align*}
which is a bilinear form in $u_i$ and $u_j$, and is maximized (by \cref{thm:quadratic_maximization}) when $v_2(j) = v_1(i)$.

\paragraph{Breaking ties.}

If $S_{\text{base}}(i,j)$ were used with average hard attention, then the activation values would be averaged for all the satisfying $j$.
To ensure that the maximum satisfying position $j$ has a unique maximum score, we break ties by adding or subtracting $S_{\text{tie}}(i,j)$.
We must ensure that the values added or subtracted are smaller than the minimum difference between the values for satisfying and non-satisfying positions.

For a Boolean combination of Boolean vectors, let
$S_{\text{tie}}(i,j) = \max(0, j/(2n))$.
Then under rightmost attention, the rightmost satisfying $j$ has the highest attention score, which is at least $1$, while every non-satisfying~$j$ has an attention score less than $1/2$. Similarly for leftmost attention.

For an equality comparison $v_1(i) = v_2(j)$,
the difference between the maximum score attained and any other score is at least $(1/n)^2$ by \cref{thm:quadratic_maximization}. So if we add or subtract values less than $(1/n)^2$, no non-equality score can exceed an equality score.
This can be achieved by letting $S_{\text{tie}}(i,j) = j/(2n^3)$.
This is computable using dot product attention because $j/n$ is in the PE for $j$ and $(1/n)^2$ is in the PE for $1$ and can be initially broadcast to all positions.

\paragraph{Default values.}
The term $S_{\text{def}}$ needs to give the default position an attention score strictly between the possible scores for satisfying and non-satisfying~$j$.

For a Boolean combination of Boolean vectors, the maximum non-satisfying score is less than $1/2$ and the minimum satisfying score is at least $1$, so if we let $S_{\text{def}}(i) = 3/4$, then the default position has an attention score of $3/4$, so it will be the unique maximum in case there are no satisfying positions.

For an equality comparison of integer vectors, the maximum non-satisfying score is less than $(v_1(i)/n)^2 - (1/2)(1/n^2)$, and the minimum satisfying score is at least $(v_1(i)/n)^2$, so $S_{\text{def}}(i) = (v_1(i)/n)^2-(1/4)(1/n^2)$ is strictly between these values.
The value of $(v_1(i)/n)^2$ may be obtained at position $i$ using \cref{lemma:copy-from-basic} with index $v_1(i)/n$ and the $\textit{posq}$ coordinate of the PE. 

Thus, the default position is selected when there is no $j \in [n]$ satisfying the attention predicate; it remains to supply the default value.
We use an attention layer with the attention score $S'$ given above and value $V(j) = \begin{bsmallmatrix} \mathit{default}(j) \\ V(j) \end{bsmallmatrix}$.
Let $j_i$ be the position that $i$ attends to.
Then we use a position-wise if/else operation that returns (the simulation of) $D(i)$ if $\mathit{default}(j_i) = 1$ and $V(j_i)$ otherwise. 
This concludes the proof of Theorem~\ref{thm:srasp_to_ahat}.

\section{An Alternate Position Encoding}
\label{sec:alternate_pe}

The simulation of \SRASP\ via average hard attention transformers in \cref{thm:srasp_to_ahat} relies on three kinds of position encoding: $i/n$, $(i/n)^2$, and $1/(i+2)$. In this section, we present evidence for the following.
\begin{conjecture}
    \label{conj:reduced-PE}
    Any transduction computable by an $\SRASP$ program is computable by a masked average-hard attention transformer encoder with a position encoding of $i/n$.
\end{conjecture}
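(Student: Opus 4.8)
The plan is to show that the two auxiliary position-encoding coordinates used in the proof of \cref{thm:srasp_to_ahat}, namely $(i/n)^2$ and $1/(i+2)$, can be synthesized from a position encoding consisting of $i/n$ alone by a constant number of initial transformer layers; once both sit in designated coordinates, the rest of the simulation goes through verbatim. Tracing that proof, every use of these two quantities is through (a) the quadratic-maximization lookup of \cref{lemma:copy-from-basic}, which underlies integer addition and subtraction, comparison, equality-based attention, and the rescaling step of prefix sum, and (b) the explicit $1/(i+2)$ factor that cancels the averaging denominator in \cref{sec:prefix-sum}. So it suffices to produce coordinates holding $(i/n)^2$ and $1/(i+2)$ from $i/n$.

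The coordinate $1/(i+2)$ is easy. Since $\mathit{default}(j) = \mathbb{I}[j=-1]$ is a threshold of $j/n$ (with a $1/n$ gap) and hence FFN-computable, a single non-strict future-masked average-hard attention layer with constant attention score and value $\mathit{default}(j)$ makes position $i$ attend uniformly to the $i+2$ positions $\{-1,0,\dots,i\}$, exactly one of which ($j=-1$) carries value $1$; the returned average is therefore $1/(i+2)$. The same device with value $\mathit{zero}(j) = \mathbb{I}[j=0]$ yields $1/(i+1)$, and variants yield $1/(i+c)$ for any fixed $c$. This removes $1/(i+2)$ from the position encoding.

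The main obstacle is the coordinate $(i/n)^2$, and I expect this to be the crux, which is precisely why the statement is phrased as a conjecture. Two natural tools fail on their own. A position-wise FFN computes a piecewise-linear function with a fixed, $n$-independent number of pieces, so it cannot realize $x \mapsto x^2$ exactly (nor even to accuracy $1/n^2$) at all points $x \in \{-1/n,0,\dots,(n-1)/n\}$. Average-hard attention returns \emph{averages}: summing a value linear in $j$ over $j \le i$ and dividing by the attention-set size is degree-lowering, so iterated counting of linear quantities only ever produces linear or reciprocal-linear quantities such as the $1/(i+c)$ above, never a genuine square. Worse, the selection mechanism itself appears to force a quadratic feature: with only a linear position coordinate available, every bilinear attention score $\langle Q u_i, K u_j\rangle$ is, for fixed $i$, affine in the position $j/n$, hence maximized at an endpoint rather than at a prescribed interior index $j=q$; creating an interior peak requires the quadratic term, matching the $2qj - j^2$ of \cref{thm:quadratic_maximization}.

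The line of attack I would pursue is to synthesize the \emph{raw} real $i^2/n^2$, as opposed to the clipped integer $i^2$, which overflows $[n]$ and is useless as a score coordinate, from the reciprocals of the previous paragraph. One concrete hope is a telescoping identity expressing $(i/n)^2$ as a fixed combination of averaged reciprocal features $1/(i+c)$, or a short cascade of counting layers whose averaging denominators conspire to leave a quadratic residue. A softer fallback exploits the slack in \cref{thm:quadratic_maximization}: since the score gap between the maximizer and any other index is at least $1/n^2$, any coordinate approximating $(i/n)^2$ uniformly to within $1/(2n^2)$ would still make every lookup select the correct index, so it would suffice to approximate squaring to accuracy $1/n^2$ with a constant-size network; a fixed-piece FFN provably cannot, so this route too must pass through attention. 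Pinning down either route exactly is the gap between the present evidence and a full proof, and is why I state the result as a conjecture.
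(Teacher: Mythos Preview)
The statement is a conjecture, and the paper does not prove it either; it only presents partial evidence. Your proposal and the paper's evidence agree on the easy half: both observe that $1/(i+2)$ is obtainable from $i/n$ by a single non-strict future-masked uniform-attention layer with value $\mathbb{I}[j=-1]$, matching the paper's \cref{prop:one_over_i}.

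Where the two diverge is the hard half. You frame the remaining obstacle as synthesizing the coordinate $(i/n)^2$ itself, argue (correctly) that neither fixed-piece FFNs nor iterated averaging can produce a genuine quadratic, and float telescoping reciprocals or $1/(2n^2)$-accurate approximation as speculative routes. The paper instead sidesteps $(i/n)^2$ altogether: it proves (\cref{thm:srasp_to_ahat_inverse_sq_emb_v1}) that the entire simulation of \cref{thm:srasp_to_ahat} can be redone with the alternative quadratic feature $1/(i+2)^2$ in place of $(i/n)^2$, via a new peaked score $f_q(x) = 2/(n(x{+}2)) - (q{+}2)/(n(x{+}2)^2)$ (\cref{thm:quadratic_maximization_alt}) that replaces \cref{thm:quadratic_maximization} in every lookup, prefix-sum rescaling, and equality-attention step. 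This reduces the conjecture to synthesizing $1/(i+2)^2$ from $i/n$, and as a near-miss the paper shows (\cref{prop:derived-position-encodings}) that $1/((i+2)^2-1) = \tfrac12\bigl(1/(i{+}1) - 1/(i{+}3)\bigr)$ is exactly computable; the gap between $1/((i+2)^2-1)$ and $1/(i+2)^2$ is what remains open.

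So the paper's route buys a concrete reduction with a very small remaining target (an exact $1/(i+2)^2$, with an off-by-one quantity already in hand), whereas your route names the obstruction cleanly but leaves the target as the harder-looking $(i/n)^2$. Neither constitutes a proof; your diagnosis of why affine-in-$j$ scores cannot peak at an interior index is correct and is exactly the reason both approaches need some quadratic position feature.
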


First, $1/(i+2)$ can be computed from $i/n$.
\begin{proposition} \label{prop:one_over_i}
    A transformer with positions $i \in \{-1, 0, \ldots, n\}$ and position encoding $i/n$ can compute $1/(i+2)$ at all positions $i$.
\end{proposition}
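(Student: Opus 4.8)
The plan is to realize $1/(i+2)$ as the output of a single non-strict future-masked average-hard attention layer. Under non-strict future masking, position $i$ attends uniformly to exactly the $i+2$ positions $\{-1,0,\dots,i\}$; this is the same averaging that produces the $(i+2)$ denominators in the prefix-sum construction of \cref{sec:prefix-sum}. Hence, if I can place a single unit of ``mass'' at the always-attended default position $-1$ and zero everywhere else, then averaging this impulse over the window yields $\frac{1}{i+2}\sum_{j=-1}^{i}\mathbb{I}[j=-1]=\frac{1}{i+2}$ at every position $i$. Since $1/(i+2)\le 1$, no clipping is triggered, and the value at the default position itself is $1=1/((-1)+2)$, so the construction is consistent across all positions in $\{-1,\dots,n\}$.

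So the whole problem reduces to constructing, in a fresh coordinate, the indicator $\mathbb{I}[i=-1]$ of the default position. I expect this to be the main obstacle: one cannot obtain it by a position-wise FFN from $i/n$ alone, because the encodings of positions $-1$ and $0$ are $-1/n$ and $0$, whose gap shrinks to $0$ as $n\to\infty$, and no fixed (width- and weight-bounded) ReLU network can separate two inputs that become arbitrarily close while keeping one output at $1$ and the other at $0$. I would instead exploit the defining structural property of the default position in the average-hard attention model (\cref{sec:ahat}): under strict future masking it has nowhere to attend, so its activation is left unchanged.

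Concretely, I would first use a position-wise FFN to write a constant $1$ (via the ReLU bias) into a coordinate $\mathtt{one}$ at every position, including $-1$. Next I would apply a strict future-masked average-hard attention layer with uniform scores and value $\mathtt{one}$, writing the result into a fresh coordinate $c$ initialized to $0$. Every position $i\ge 0$ attends to at least position $-1$ and averages copies of $\mathtt{one}=1$, so $c_i=1$, whereas the default position performs no update, so $c_{-1}$ stays $0$. Thus $c_j=\mathbb{I}[j\ge 0]$, and a final FFN computes the impulse $\mathbb{I}[j=-1]=\mathtt{one}-c$ exactly, with no $n$-dependent scaling. Feeding this impulse as the value of the non-strict future-masked averaging layer above completes the construction. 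The only points left to verify are routine: that the fresh coordinates are genuinely zero at the default position (so that the residual updates behave as claimed), and that uniform attention can be realized by the all-zero score $S(i,j)=0$.
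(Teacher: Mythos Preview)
Your proposal is correct and follows the same core idea as the paper: place a unit impulse at the default position $-1$ and average it via non-strict future-masked uniform attention, so that position $i$ receives $1/(i+2)$. The paper's proof is essentially a one-line citation to \citet{merrill2024expressive} and says only that the $i/n$ encoding lets one ``identify the first position''; you fill in this detail explicitly by exploiting the special behavior of strict future masking at the default position (\cref{sec:ahat}) rather than trying to threshold $i/n$ directly, which, as you correctly note, a fixed FFN cannot do uniformly in $n$.
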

\begin{proof}

As observed by \citet{merrill2024expressive}, a transformer can use the $i/n$ encoding to uniquely identify the first position ($-1$) and compute $1/(i+2)$ by using non-strict future masked attention with value $1$ at that position and $0$ elsewhere ($0, \ldots, n-1$).
\end{proof}In \cref{thm:srasp_to_ahat_inverse_sq_emb_v1} we show that the position encoding $i/n$ and $1/(i+2)^2$ suffices for the simulation of \SRASP{} by a masked average hard attention transformer.
Though it's unclear whether a transformer with position encoding $i/n$ can compute $1/(i+2)^2$,
we note the following. 
\begin{proposition}
    \label{prop:derived-position-encodings}
    A transformer with positions $i \in \{-1, 0, \ldots, n\}$ and position encoding $i/n$ can compute $1/((i+2)^2 - 1)$ at positions $i < n$.
\end{proposition}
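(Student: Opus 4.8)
The plan is to reduce the desired quantity to reciprocals of shifted position indices. Using the factorization $(i+2)^2-1 = (i+1)(i+3)$ and partial fractions,
\[
\frac{1}{(i+2)^2-1} = \frac{1}{(i+1)(i+3)} = \frac{1}{2}\left(\frac{1}{i+1}-\frac{1}{i+3}\right),
\]
so it suffices to compute $1/(i+1)$ and $1/(i+3)$ separately and then combine them by a single position-wise affine map (a trivial feed-forward network), which is valid for $0 \le i < n$ (the quantity is undefined at $i=-1$, where $(i+2)^2-1=0$, so I read ``positions $i<n$'' as $i\in\{0,\ldots,n-1\}$).

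For the two reciprocals I reuse the averaging mechanism behind \cref{prop:one_over_i}. First, \emph{strict} future-masked average hard attention with value $1$ at the default position $-1$ and $0$ elsewhere attends, at position $i\ge 0$, to the $i+1$ positions $\{-1,0,\ldots,i-1\}$, whose values sum to $1$; the average is therefore $1/(i+1)$. Second, I recompute $a_i = 1/(i+2)$ at every position exactly as in \cref{prop:one_over_i} (non-strict future masking), and then read the value of $a$ from the \emph{immediate successor} position: since $a_{i+1} = 1/((i+1)+2) = 1/(i+3)$, a layer that makes position $i$ attend to $j=i+1$ and return $a_{i+1}$ yields $1/(i+3)$. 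To select the successor I use past masking ($j>i$) together with an attention score that is monotonically decreasing in $j$ (e.g.\ $S(i,j) = -u_j[\mathit{pos}] = -j/n$, a bilinear form in the activations); this is uniquely maximized at the smallest admissible $j$, namely $j=i+1$. In total this uses a constant number of layers, consistent with a transformer encoder.

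The main obstacle, and the reason the statement restricts to $i<n$ while the position set extends to $n$, is the boundary behavior of the successor read. At $i=n-1$ the successor is position $n$, and the construction returns $a_n = 1/(n+2) = 1/((n-1)+3)$, which is exactly the value required; this is precisely why position $n$ must exist and hold $a_n$. I also need to verify that the successor-selection attention has a genuinely \emph{unique} argmax (so that average hard attention does not blur several positions together); here the monotone score $-j/n$ and the per-position gap of $1/n$ between consecutive indices guarantee uniqueness, analogously to the separation argument of \cref{thm:quadratic_maximization} used for the position-targeted lookups of \cref{lemma:copy-from-basic}. The remaining steps — checking the strict-mask count at small $i$ (at $i=0$ the attended set is $\{-1\}$, giving $1/1$, and $a_1 = 1/3$, so the combination returns $\tfrac12(1-\tfrac13)=\tfrac13 = 1/((0+2)^2-1)$) and confirming all intermediate values stay within the representable range $[0,1]$ — are routine.
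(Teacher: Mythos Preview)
Your proof is correct and uses the same partial-fraction identity $\tfrac{1}{(i+2)^2-1}=\tfrac12\bigl(\tfrac{1}{i+1}-\tfrac{1}{i+3}\bigr)$ as the paper. The only cosmetic difference is that you obtain $1/(i+1)$ directly via strict future-masked averaging, whereas the paper phrases both terms as the $1/(j+2)$ values at the two neighbours $j=i-1$ and $j=i+1$; in both cases the $1/(i+3)$ term requires a right-neighbour read, which you spell out explicitly and the paper leaves implicit.
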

\begin{proof}
By \cref{prop:one_over_i}, the transformer can compute $1/(i+2)$ at position $i$.
It can then compute $1/((i+2)^2-1)$ simply as the difference between the $1/(i+2)$ values at the two neighbors of position $i$:
\[
    \frac{1}{(i+2)^2 - 1} = \frac{1}{2} \left( \frac{1}{i+1} - \frac{1}{i+3} \right).
    \qedhere
\]
\end{proof}

\begin{theorem} 
\label{thm:srasp_to_ahat_inverse_sq_emb_v1}
Any transduction computable by an $\SRASP$ program is computable by a masked average-hard attention transformer encoder with a position encoding of $i/n$ and $1/(i+2)^2$.
\end{theorem}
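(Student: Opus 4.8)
The plan is to re-run the simulation of \cref{thm:srasp_to_ahat} almost verbatim, changing only how the position encoding is consumed. The construction in \cref{sec:sim-SRASP-proof} uses the three encodings $i/n$, $(i/n)^2$, and $1/(i+2)$ in a very localized way: the factor $1/(i+2)$ appears only in the prefix-sum recovery layer, and $(i/n)^2$ appears only as the ``square'' coordinate feeding the quadratic-maximization primitive (\cref{thm:quadratic_maximization}) that underlies table lookup (\cref{lemma:copy-from-basic}), the second prefix-sum layer (\cref{sec:prefix-sum}), and the integer-equality attention of \cref{sec:leftmost-rightmost-attention}. Since $1/(i+2)$ is derivable from $i/n$ by \cref{prop:one_over_i}, the whole task reduces to reproducing every use of $(i/n)^2$ using instead the supplied encoding $1/(i+2)^2$.

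First I would re-prove an inverse-square analogue of \cref{thm:quadratic_maximization}. The sequence $x_j = 1/(j+2)$ is strictly decreasing, hence injective, over the positions $j \in \{-1,0,\ldots,n-1\}$, and its square $x_j^2 = 1/(j+2)^2$ is exactly the new encoding. For any target $y$, the bilinear form $2 y x_j - x_j^2 = y^2 - (x_j - y)^2$ is maximized over this discrete set at the $j$ whose $x_j$ is closest to $y$; taking $y = 1/(q+2)$ selects $j = q$ uniquely. The table-lookup score then becomes $S(i,j) = 2\,u_i[\mathit{recip}]\,u_j[\mathit{recip}] - u_j[\mathit{isq}]$, where $u_j[\mathit{isq}] = 1/(j+2)^2$ is the supplied encoding, $u_j[\mathit{recip}] = 1/(j+2)$ comes from \cref{prop:one_over_i}, and $u_i[\mathit{recip}]$ must hold the target in reciprocal form $1/(q+2)$. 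The matched position $j=q$ is the same as in the original proof, so the value $u_j[s]$ (or $j/n$) is read off as before; only the score gap shrinks, from $\Omega(1/n^2)$ to $\Omega(1/n^4)$ because consecutive $1/(j+2)$ values are spaced by $\Omega(1/n^2)$. This is still inverse-polynomial, so after rescaling there is room for the tie-breaking and default offsets of \cref{sec:leftmost-rightmost-attention}. With this primitive in hand, the prefix-sum recovery and the equality, tie-break, and default steps port over directly, folding the $1/(i+2)$ factor of the averaging layer in via \cref{prop:one_over_i} rather than via a dedicated coordinate.

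The main obstacle is precisely the requirement that the \emph{target} of each lookup be available in reciprocal form $1/(q+2)$ rather than in the standard form $q/n$. For lookups whose target is a position index or a bounded arithmetic function of one, the reciprocal is directly computable by \cref{prop:one_over_i}. The hard case is the data-dependent targets — the clipped sum produced by prefix sum, and the difference tested by integer comparison — where $q$ may be any value in $[n]$, and converting $q/n$ to $1/(q+2)$ is itself a reciprocal that a fixed feed-forward network (whose weights cannot depend on $n$) cannot compute pointwise. My plan is to carry $1/(t+2)$ as a companion coordinate of every integer vector, maintained as a representation invariant, and to discharge its update whenever a new integer is formed, e.g.\ by exploiting the telescoping identity $1/(t+2) = 1 - \sum_{j \le t} 1/((j+1)(j+2))$ with the inverse-square encoding realizing the summand. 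Propagating this companion correctly through addition, subtraction, conditionals, and especially prefix sum (where the number of summed positions is $i+2$, not $t+2$) is the delicate technical core, and is exactly the step that the weaker $i/n$-only case of \cref{conj:reduced-PE} leaves open.
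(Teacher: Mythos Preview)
Your high-level plan matches the paper's exactly: recover $1/(i+2)$ from $i/n$ via \cref{prop:one_over_i}, and then replace each of the three uses of $(i/n)^2$ (table lookup, prefix-sum recovery, integer-equality attention) by a score built from $1/(j+2)$ and $1/(j+2)^2$. The tie-breaking and default bookkeeping you sketch is also right in spirit; the gaps do shrink to inverse-polynomial in $n$ and are handled by rescaling the offsets accordingly.

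The gap is your specific choice of quadratic form. You take $x_j = 1/(j+2)$ and use $2y\,x_j - x_j^2$, which forces the \emph{query} $y$ into reciprocal form $1/(q+2)$. That is what manufactures the ``main obstacle'' you then spend the rest of the proposal fighting. Your companion-coordinate plan never escapes circularity: the telescoping sum $1/(t+2)=1-\sum_{j\le t}1/((j{+}1)(j{+}2))$ is a sum over integer values up to $t$, so to evaluate it you must look up the precomputed partial sum at \emph{position} $t$ --- and in your form that lookup already requires $1/(t+2)$. The paper avoids the whole issue with a different bilinear score,
\[
S(i,j)\;=\;\frac{2}{n}\cdot\frac{1}{j+2}\;-\;\Bigl(\frac{q}{n}+\frac{2}{n}\Bigr)\cdot\frac{1}{(j+2)^2}
\;=\;\frac{2}{n(j+2)}-\frac{q+2}{n(j+2)^2}.
\]
Writing $z=1/(j+2)$ this is $(2z-(q{+}2)z^2)/n$, uniquely maximized at $z=1/(q+2)$, i.e.\ $j=q$. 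The key is $(1/(j+2),\,1/(j+2)^2)$ and the query is $(2/n,\,-(q/n+2/n))$, so the target enters only in its \emph{standard} representation $q/n$; no reciprocal companion is ever needed. With this one change the rest of the simulation goes through unchanged, with a matching/non-matching gap of order $1/n^4$ that fixes the scale of the tie-breaking term ($j/n^5$) and the default offset.
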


\begin{proof}[Proof Sketch]
The proof of this theorem closely follows the argument presented earlier for \Cref{thm:srasp_to_ahat}, except for the position encoding used. We will show how each use of $(i/n)^2$ in that original argument can be replaced with an equivalent use of $1/(i+2)^2$, which we assume to be stored in a coordinate called $\mathit{posiq}$ (for ``\underline{i}nverse \underline{q}uadratic'').
We also assume that $1/(i+2)$ is available by \cref{prop:one_over_i}.

The original proof uses the quadratic maximization in \cref{thm:quadratic_maximization}, which we replace with:
\begin{lemma} \label{thm:quadratic_maximization_alt}
    For any integers $x$, $q$, let
    \begin{equation}
    \label{eqn:inverse-sq-fractional}
    f_q(x) = \frac{2}{n(x+2)} - \frac{q+2}{n(x+2)^2}.
    \end{equation}
    Then
        $f_q(x)$ is uniquely maximized over values of $x \ge -1$ when $x = q$.
\end{lemma}
\begin{proof}
Consider the derivative, $-2/n(j+2)^2 + 2(q+2)/n(j+2)^3$, whose only real-valued root is $j = q$. Furthermore, the derivative is positive for $j < q$ and negative for $j > q$.
\end{proof}
This score is a bilinear form that can be computed via average hard attention using query $\langle 2/n, - q/n - 2/n \rangle$ at position $i$ and key $\langle 1/(j+2), 1/(j+2)^2 \rangle $ at position $j$. %
In all our applications of this new score, we will ensure that $q/n$ is available at position $i$. The $2/n$ term can also be computed at position $i$
by attending uniformly (without masking) with value $2$ at the first position and $0$ elsewhere. There are three uses of $\mathit{posq}$ in the original argument that we have to modify.

The first use is in the proof of \Cref{lemma:copy-from-basic}, for the basic lookup operation. Instead of using an attention score of $2u_i[r]  u_j[\mathit{pos}] - u_j[\mathit{posq}]$, we use \Cref{eqn:inverse-sq-fractional} with $q = k_i$ (recall that $u_i[r] = k_i / n$):
\begin{equation*}
    S(i, j) = \frac{2}{n} \, u_j[\mathit{posi}] - \left(u_i[r] + \frac{2}{n}\right) u_j[\mathit{posiq}] .
\end{equation*}
By \cref{thm:quadratic_maximization_alt}, $S(i, j)$ is maximized over $j$ uniquely when $j = k_i$, as needed in the proof of \cref{lemma:copy-from-basic}. %

The next use of $\mathit{posq}$ in the original argument is for the prefix sum (\cref{sec:prefix-sum}). As before, we compute $p_i / ((i+2)n)$ and store it as $v_i[s]$, with $v_i[0]$ being $0/n$. Instead of the original attention score of $2v_i[s] v_j[\mathit{pos}] - v_j[\mathit{posq}] v_i[\mathit{posi}]$, we use:
\begin{align*}
    S(i, j) & = \frac{2}{n} \, v_j[\mathit{posi}] v_i[\mathit{posi}] - {} \\
      & \qquad \left( v_i[s] + \frac{2}{n(i+2)} \right) v_j[\mathit{posiq}]
\end{align*}
where the $2/(n(i+2))$ term is computed at position $i$ by using future masked attention with a score of $2/n$ (computed earlier) at the first position and $0$ elsewhere.
This gives an attention score of:
\begin{equation}
    \label{eqn:attn-score-prefix-sum-invsq}
    S(i, j) = \frac{2}{n(j+2)(i+2)} - \frac{p_i + 2}{n(j+2)^2(i+2)}
\end{equation}
which, by \cref{thm:quadratic_maximization_alt}, is uniquely maximized when $j = p_i$. This allows us to retrieve value $j/n = p_i / n$ from position $j$, as needed in the proof in \cref{sec:prefix-sum}. %

The third and final use of $\mathit{posq}$ is in the simulation of \emph{leftmost and rightmost attention} and its \emph{default values} (\cref{sec:leftmost-rightmost-attention}). Specifically, suppose the attention predicate in \SRASP\ is an equality comparison of two integer vectors, say $v_1(i)$ and $v_2(j)$, represented as $u_i[r] = k_i / n$ and $u_j[s] = k_j / n$, respectively. In this case, we first use two lookup operations (\cref{lemma:copy-from-basic}, updated for the inverse square position embedding) with the $\mathit{posi}$ and $\mathit{posiq}$ entries of the position embedding to copy inverses and inverse squares of the values in $v_2$ to coordinates $t$ and $z$ of the activation. As in the original proof, let $u_{-1}, u_0, \ldots, u_{n-1}$ denote the resulting activation sequence. We thus have $u_i[t] = 1/(k_i + 2)$ and $u_i[z] = 1/(k_i + 2)^2$. We then use the attention score function
\begin{equation}
    S(i, j) = \frac{2}{n} \, u_j[t] - \left( u_i[r] + \frac{2}{n} \right) \cdot u_j[z]
\end{equation}
a bilinear combination of $u_i$ and $u_j$ equivalent to:
\begin{equation}
    S(i, j) = \frac{2}{n (k_j + 2)} - \frac{k_i + 2}{n (k_j + 2)^2} .
\end{equation}
By \cref{thm:quadratic_maximization_alt}, $S(i, j)$ is uniquely maximized over values of $j \geq -1$ when $k_j = k_i$.

As in the original argument, there may be multiple matches and we thus need to break ties in favor of the leftmost or rightmost match. To this end, we observe that $S(i, j) = 1/(n (k_i + 2))$ when $k_j = k_i$, and compare this to the maximum value of $S(i, j)$ for $k_j \neq k_i$, which is $(k_i + 4) / (n (k_i + 3)^2)$, attained at $k_j = k_i + 1$. Thus, the gap between the attention score when $k_j = k_i$ versus the maximum possible when $k_j \neq k_i$ is $1 / (n (k_i + 2) (k_i + 3)^2)$. Since $k_i < n$, this is lower bounded by $1 / (n (n+1)(n+2)^2) > g(n)$ where $g(n) = 1/(20 n^4)$. As in the original argument, if we add or subtract from $S(i, j)$ values less than $g(n)$, no non-equality score can exceed the corresponding equality score. We achieve this by adding or subtracting the tie-breaking term $g(n) j/(2n) = j/(40 n^5)$; the reason for using this specific tie-breaking term will become apparent when we discuss default values below. This term is computable by first computing $1/n^4$ at position $i$ and then using dot product attention with $j/n$ in the position encoding of $j$. In order to compute $1/n^4$, we can attend uniformly with only the first position having value $1/n$ (the rest having value $0$) to obtain $1/n^2$, and repeat this process twice more to obtain $1/n^4$. This finishes the updates needed for the simulation of leftmost and rightmost attention.

We address \emph{default values} in a similar way as in the original proof.
When it involves an equality comparison of integer vectors and rightmost attention, we observe that with the tie-breaking term $g(n) j/(2n)$ discussed above, the gap between the matching attention score $1/(n (k_i + 2))$ and the maximum non-matching attention score for rightmost attention is at least $g(n) / 2$.
Hence, a default position value of $1/(n (k_i + 2)) - g(n) / 4$ is strictly between these two values.
Further, this default position value is computable at position $i$ by the same arguments as above.
We treat default values with leftmost attention analogously.
\end{proof}

\section{Comparison with other simulations}
\label{sec:other_approaches}

In the prefix sum operation (\ref{eq:prefix_sum_result}), the result at position $i$ is $s(i)/(i+1)$, where $s(i)$ is the prefix sum of $v(i)$.
The fact that the denominator of this expression varies with position is an obstacle to comparing or adding the values $s(i)$ and $s(j)$ at two different positions $i$ and $j$.
This problem is addressed by \citet{yao2023selfattention} and \citet{merrill2024expressive} using a non-standard layer normalization operation to produce a vector representation of the quantities, which allows them to be compared for equality using dot product attention.
\Citet{perez-etal-2021-turing} include $1/(i+1)$ in their position embedding to enable the comparison; however, they compute attention scores as $-|\langle Qu_i, Ku_j\rangle|$ in place of the standard dot-product.
The approach of the current paper is based on that of \citet{barcelo2023logical}, who show how average hard attention can be used to compute the prefix sum of a $0/1$ vector.

\end{document}